\documentclass [journal,onecolumn,11pt]{IEEEtran}
\usepackage{amsfonts,amsmath,amssymb}
\usepackage{indentfirst, setspace}
\usepackage{url,float}
\usepackage{color}
\usepackage[a4paper,ignoreall]{geometry}
\geometry{left=2cm, right=2cm, top=4cm, bottom=4cm}
\usepackage{longtable}
\setcounter{tocdepth}{3}
\usepackage{graphicx}
\usepackage[a4paper,ignoreall]{geometry}
\usepackage{multicol}
\usepackage{stfloats}
\usepackage{enumerate}
\usepackage{cite}
\usepackage{amsthm}
\usepackage{booktabs}
\usepackage{mathrsfs}
\usepackage{multirow}
\usepackage{amssymb}
\usepackage{verbatim}
\usepackage{cases}
\usepackage[square, comma, sort&compress, numbers]{natbib}
\usepackage[overload]{empheq}
\def\qu#1 {\fbox {\footnote {\ }}\ \footnotetext { From Qu: {\color{red}#1}}}
\def\hqu#1 {}
\def\kq#1 {\fbox {\footnote {\ }}\ \footnotetext { From KangQuan: {\color{blue}#1}}}
\def\hkq#1 {}

\newcommand{\mn}{\color{black}}

\usepackage{stfloats}


\newtheorem{Th}{Theorem}[section]

\newtheorem{Prop}[Th]{Proposition}
\newtheorem{Prob}[Th]{Problem}
\newtheorem{Lemma}[Th]{Lemma}

\newtheorem{Rem}[Th]{Remark}

\newcommand{\tr}{{\rm Tr}}

\newcommand{\gf}{{\mathbb F}}


\makeatletter
\newcommand{\figcaption}{\def\@captype{figure}\caption}
\newcommand{\tabcaption}{\def\@captype{table}\caption}
\makeatother

\begin{document}
	\title{A complete characterization of the APN property \\ of a class of quadrinomials}
	\author{{ Kangquan Li, Chunlei Li, Tor Helleseth and Longjiang Qu}
	\thanks{\noindent	Kangquan Li and Longjiang Qu are with the College of Liberal Arts and Sciences,
		National University of Defense Technology, Changsha, 410073, China.
		Chunlei Li and Tor Helleseth are with the Department of Informatics, University of Bergen, Bergen N-5020, Norway.
		Longjiang Qu is also with
		the State Key Laboratory of Cryptology, Beijing, 100878, China. The work of Longjiang Qu was supported by the Nature Science Foundation of China (NSFC) under Grant  61722213, 11531002,   National Key R$\&$D Program of China (No.2017YFB0802000),  and the Open Foundation of State Key Laboratory of Cryptology. The work of Tor Helleseth and Chunlei Li was supported by the Research Council of Norway (No.~247742/O70 and No.~311646/O70).
		The work of Chunlei Li was also supported in part by the National Natural Science Foundation of China under Grant (No.~61771021). The work of Kangquan Li was supported by China Scholarship Council. Longjiang Qu is the corresponding author.
		\smallskip
		\textbf{Emails}: 			likangquan11@nudt.edu.cn,  chunlei.li@uib.no, Tor.Helleseth@uib.no, 
		ljqu\_happy@hotmail.com
	}
}
	\maketitle{}

\begin{abstract}
In this paper, by the Hasse-Weil bound, we determine the necessary and sufficient condition on  coefficients $a_1,a_2,a_3\in\gf_{2^n}$ with $n=2m$ such that $f(x) = {x}^{3\cdot2^m} + a_1x^{2^{m+1}+1} + a_2 x^{2^m+2} + a_3x^3$ is an APN function over $\gf_{2^n}$. Our result resolves the first half of an open problem by Carlet in International Workshop on the Arithmetic of Finite Fields, 83-107, 2014.
\end{abstract}

\begin{IEEEkeywords}
	Differential uniformity, Almost perfect nonlinear (APN) functions, Quadrinomials, Hasse-Weil bound
\end{IEEEkeywords}

\section{Introduction}

Substitution boxes, known as S-boxes, are crucial nonlinear building blocks in modern block ciphers. 
Sboxes used in block ciphers are required to satisfy a variety of cryptographic criteria in accordance with known attacks. For instance, the differential uniformity of an S-box, one of the most classic and the most important properties, characterizes the resistance of the cryptographic component against differential cryptanalysis \cite{biham1991differential}. Moreover, it is well known that for even characteristic, the almost perfect nonlinear (APN for short) functions with differential uniformity $2$ provide the best resistance to the differential attacks. In the last three decades,  APN functions have been extensively studied  and  
the construction of infinite classes of APN functions (even permutations) is one of the most important topics \cite{carlet-2010}. 
Up to now, there are $6$ (resp. $10$) infinite classes of APN power (resp. quadratic) functions \cite[Ch. 11]{carlet-2020}.

Let $n=2m$ and $\gf_{2^{n}}$ be the finite field with $2^n$ elements.
Denote $\bar{x} = x^{2^m}$ for any $x\in\gf_{2^n}$. In this paper, by the Hasse-Weil bound, we completely characterize the APN property of the following quadrinomial 
\begin{equation}
\label{f(x)}f(x) = \bar{x}^3 + a_1\bar{x}^2x+a_2\bar{x}x^2+a_3x^3,
\end{equation}
where  $a_1,a_2,a_3\in\gf_{2^n}$. 
The motivation of studying this quadrinomial originates from 
the Kim function $\kappa(x)=x^3+x^{10}+ux^{24}$, where $u$ is a root in $\gf_{2^6}$ of the primitive polynomial $x^6+x^4+x^3+x+1$,
which is Carlet-Charpin-Zinoviev (CCZ) equivalent \cite{carlet1998codes} to the unique known APN permutation over $\gf_{2^6}$ \cite{browning2010apn}. 
%
 It is clear that the Kim function is an example of $f(x)$ over $\gf_{2^6}$ with $(a_1, a_2, a_3)=(0, \frac{1}{u},  \frac{1}{u})$. \mn
Extending from the form of the Kim function, Carlet in \cite{carlet2014open} presented the following open problem. 
\begin{Prob}\label{Prob1}
	\cite{carlet2014open}
	Find more APN functions or, better, infinite classes of APN functions of the form $X^3+aX^{2+q}+bX^{2q+1}+cX^{3q}$ where $q=2^{n/2}$ with $n$ even, or more generally of the form  $X^{2^k+1}+aX^{2^k+q}+bX^{2^kq+1}+cX^{2^kq+q},$ where $\gcd(k,n)=1$. 
\end{Prob} 
The above problem is the main motivation of our study in this paper.
As a matter of fact, polynomials of the two general forms in Problem \ref{Prob1} have been studied from different perspectives. 
In Crypto'16,  Perrin et al. investigated the only APN permutation  by means of reverse-engineering and proposed the open butterfly and the closed butterfly structures \cite{perrin2016cryptanalysis}. The closed butterfly structure is represented as a bivariate function, of which the univariate form has the forms as in (\ref{f(x)}). Canteaut et al. later in \cite{Canteaut2019} 
proposed the generalized butterfly structure and showed that it can only produce APN functions over $\gf_{2^6}$.
For the functions from (generalized) butterfly structures, the coefficients of their univariate forms reside inherently in $\gf_{2^{m}}$.
Krasnayov{\'a} in \cite{krasnayova2016constructions}  considered $f(x)$ with the coefficients $a_i$’s in $\gf_{2^{m}}$ and study its permutation and APN properties. More specifically, Krasnayov{\'a} proposed a necessary and sufficient 
 condition that is expressed in terms of a trace function of the coefficients $a_i$’s and a new variable $T$, which is taken from an inexplicit subset of $\gf_{2^{m}}$ satisfying four equations.
Recently other cryptographic properties of $f(x)$  in (\ref{f(x)}) have been investigated as well.
A sufficient condition for $f(x)$ to be a permutation of $\gf_{2^{n}}$ with boomerang uniformity $4$, which is a new cryptanalysis property \cite{cid2018boomerang,li2019new,boura2018boomerang,mesnagerboomerang},
was given in \cite{tu2020class,TLZ2019} . The necessary condition for $f(x)$ to be a permutation has been characterized in \cite{li2020conjecture}. Furthermore, the permutation and boomerang uniformity property of a more general quadrinomial have been studied in \cite{li2019cryptographically,li20204,li2020permutation}. 

In this paper we will investigate the APN property of the quadrinomial $f(x)$ as in \eqref{f(x)}, which completely resolves the first part of Problem \ref{Prob1}.
The techniques, unfortunately, does not seem to work for the more general cases in its second part.
Before presenting the main theorem in this paper, we first make a basic transformation on the quadrinomial $f(x)$, which enables us
to restrict the coefficient $a_1$ to the finite field $\gf_{2^{m}}$ (instead of $\gf_{2^{n}}$). \mn Indeed, for any $a_1\in\gf_{2^{n}}$, one can take $b=a_1^{2^{m-1}}$ and then have
\begin{eqnarray*}
	f(b x) &=& (\overline{b}\overline{x})^3+a_1(\overline{b}\overline{x})^2b x + a_2b^2 x^2 \overline{b}\overline{x} + a_3(b x)^3 \\
	&=& \overline{b}^3\left( \overline{x}^3+a_1 (b/\overline{b}) \overline{x}^2x+a_2(b/\overline{b})^2x^2\overline{x} +a_3 (b/\overline{b})^3x^3    \right).
\end{eqnarray*} Since $f(x)$ and $f(bx)$ has the same differential uniformity and $a_1(b/\overline{b})=b^{2^m+1}$ belongs to $\gf_{2^{m}}$, it suffices to 
consider the coefficient $a_1$ in $\gf_{2^{m}}$ in subsequent discussions.
Let 
\begin{equation}
\label{theta}	
\left\{
\begin{array}{lr}
\theta_1 = 1+a_1^2+a_2\bar{a}_2+a_3\bar{a}_3  \\ 
\theta_2 = a_1+\bar{a}_2a_3 \\
\theta_3 = \bar{a}_2+a_1\bar{a}_3 \\
\theta_4 = a_1^2+a_2\bar{a}_2.
\end{array}
\right.
\end{equation} 
It is easy to check that 
\begin{equation}
\label{theta_equation}
\theta_2\bar{\theta}_2 + \theta_3\bar{\theta}_3 + \theta_1\theta_{4} + \theta_{4}^2 = 0. 
\end{equation}
Let $\tr_{m}(\cdot)$ denote the absolutely trace function over $\gf_{2^{m}}$, namely, $\tr_{m}(x)= x+x^2 + \cdots + x^{2^{m-1}}$. 
Our main theorem in this paper is as follows. 
\begin{Th}
	\label{theorem}
	Let $n=2m$ with $m\ge4$ and $f(x) = \bar{x}^3 + a_1\bar{x}^2x+a_2\bar{x}x^2+a_3x^3$, where $a_1\in\gf_{2^m}, a_2,a_3\in\gf_{2^n}$. 
	Let $\theta_i$'s be defined as in \eqref{theta} and 
	define
	\begin{equation}\label{Eq-Gamm1}
	\Gamma_1 = \left\{ (a_1,a_2,a_3) ~|~ \theta_1\neq0, ~ \tr_{m}\left(\frac{\theta_2\bar{\theta}_2}{\theta_1^2}\right) =0,~ \theta_1^2\theta_4 + \theta_1\theta_2\bar{\theta}_2 + \theta_2^2\theta_3 + \bar{\theta}_2^2\bar{\theta}_3 =0 \right\}
	\end{equation}
	and 
	\begin{equation}\label{Eq-Gamma2}
	\Gamma_2 = \left\{ (a_1,a_2,a_3) ~|~ \theta_1\neq0, ~\tr_{m}\left(\frac{\theta_2\bar{\theta}_2}{\theta_1^2}\right) =0, ~\theta_1^2\theta_3 + \theta_1\bar{\theta}_2^2 + \theta_2^2\theta_3 + \bar{\theta}_2^2\bar{\theta}_3 =0 \right\}.
	\end{equation}
	Then $f$ is APN over $\gf_{2^n}$ if and only if 
	\begin{enumerate}[(1)]
		\item $m$ is even, $(a_1,a_2,a_3)\in\Gamma_1\cup\Gamma_2$; or
		\item $m$ is odd, $(a_1,a_2,a_3)\in\Gamma_1$.
	\end{enumerate}
\end{Th}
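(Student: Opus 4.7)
The plan is to translate the APN property into a non-existence statement about rational points on an algebraic curve, bound those points via the Hasse--Weil inequality, and identify the exceptional loci where the curve degenerates as precisely $\Gamma_1\cup\Gamma_2$. First I would compute the second-order derivative $\Delta_a(x)=f(x+a)+f(x)+f(a)$, which is $\gf_2$-linear in $x$ of the shape $A\bar x^2+B\bar x+Cx+Dx^2$ with $A,B,C,D\in\gf_{2^n}$ explicit in $a,\bar a,a_1,a_2,a_3$; the function $f$ is APN iff for every $a\in\gf_{2^n}^*$ the only solutions of $\Delta_a(x)=0$ are $x\in\{0,a\}$. Raising $\Delta_a(x)=0$ to the $2^m$-th power produces a companion equation, and the pair forms a $2\times 2$ $\gf_{2^n}$-linear system in the unknowns $(\bar x^2,\bar x)$; a direct computation shows that after the normalization $x=ay$ and $\gamma=\bar a/a\in\mu_{2^m+1}$, the determinant of the coefficient matrix equals $a^3(\theta_3\gamma^2+\bar\theta_2\gamma+\theta_4)$, making the appearance of $\theta_2,\theta_3,\theta_4$ structural rather than coincidental.

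Next I would solve this $2\times 2$ system by Cramer's rule to express $\bar y$ and $\bar y^2$ as rational functions of $y$, then impose both the Frobenius compatibility $\bar y=y^{2^m}$ and the squaring identity $(\bar y)^2=\bar y^2$. After clearing denominators this yields an algebraic curve $C$ over $\gf_{2^n}$ in the variables $(y,\gamma)$ (subject to the auxiliary relation $\gamma\bar\gamma=1$) of controlled degree, whose $\gf_{2^n}$-rational points encode exactly the non-APN witnesses: $f$ is APN precisely when the only such points lie on the trivial components $y=0$ and $y=1$.

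The third step is the Hasse--Weil bound. If $C$ is absolutely irreducible of geometric genus $g$, then its $\gf_{2^n}$-point count satisfies $|N-(2^n+1)|\le 2g\sqrt{2^n}$, and for $m\ge 4$ the resulting lower bound on $N$ strictly exceeds the number of trivial points, forcing $f$ to fail APN. Consequently, APN-ness forces $C$ to be reducible in such a way that every absolutely irreducible $\gf_{2^n}$-rational component coincides with one of the trivial lines. Analysing this forced factorization by resultant and Newton-polygon arguments yields polynomial identities on $a_1,a_2,a_3$; recasting them through $\theta_1,\theta_2,\theta_3,\theta_4$ and using the syzygy \eqref{theta_equation} rewrites them as exactly the defining equations of $\Gamma_1$ and $\Gamma_2$. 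The trace condition $\tr_{m}(\theta_2\bar\theta_2/\theta_1^2)=0$ appears as the solvability criterion for an Artin--Schreier equation arising mid-reduction, and its parity-dependent behaviour is exactly what restricts the $\Gamma_2$ branch to even $m$; the configurations with $\theta_1=0$ are handled separately by a direct argument showing $f$ fails APN there, which is why $\theta_1\ne 0$ is built into both $\Gamma_1$ and $\Gamma_2$.

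The main obstacle I anticipate is establishing absolute irreducibility of $C$ outside $\Gamma_1\cup\Gamma_2$: one must rule out every factorization $P=P_1P_2$ of the defining polynomial, carefully track fields of definition (some factors may live only over $\gf_{2^{2n}}$), and separately dispatch a handful of low-genus or small-$m$ degeneracies, which is what drives the hypothesis $m\ge 4$. A secondary subtlety is the bookkeeping between the generic regime where $\theta_3\gamma^2+\bar\theta_2\gamma+\theta_4$ is nonzero throughout $\mu_{2^m+1}$ and the degenerate regime where this determinant polynomial has roots on the unit circle; the two regimes impose superficially different constraints that must nevertheless be reconciled with the single pair of closed-form conditions defining $\Gamma_1$ and $\Gamma_2$.
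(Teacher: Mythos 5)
Your overall strategy (turn the APN condition into a statement about rational points of a low-degree curve, apply Hasse--Weil, and read the exceptional conditions off a forced factorization) is the same family of argument the paper uses, but the concrete construction you propose has a genuine gap at its core: your curve $C$ lives over $\gf_{2^n}$ in the variables $(y,\gamma)$, and there is no way to encode the constraint that $\bar y$ really is the Frobenius conjugate $y^{2^m}$ inside a curve of ``controlled degree''. If you impose $\bar y=y^{2^m}$ the defining equation has degree $2^m$ and Hasse--Weil gives nothing; if you drop it and keep only the Cramer-rule consistency $(\bar y)^2=\bar y^2$, then the $\gf_{2^n}$-points of $C$ are no longer in correspondence with actual solutions $x\in\gf_{2^n}$ of $D_af(ax)+f(a)=0$, so neither your lower-bound step (``irreducible $\Rightarrow$ many points $\Rightarrow$ non-APN witnesses exist'') nor the converse direction is justified. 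The paper avoids exactly this by descending to the subfield: the per-$a$ solution count is first converted (Lemma \ref{key_lemma}) into the trace condition $\tr_{m}\left(\nu_2\bar{\nu}_2/\nu_1^2\right)=0$ over $\gf_{2^m}$, the elements $a\in\mu_{2^m+1}\setminus\{1\}$ are parametrized by $A\in\gf_{2^m}$ via $a^2=\frac{A+\omega}{A+\bar{\omega}}$, and only then does one get a genuine degree-$6$ curve $G(X,Y)=L_2(Y)^2(X^2+X)+L_1(Y)$ over $\gf_{2^m}$. Note also that the direction of Hasse--Weil is the opposite of yours: APN forces the trace to vanish for at least $2^m-4$ values of $A$, so the Artin--Schreier curve has $\ge 2(2^m-4)$ points, which exceeds the upper bound $2^m+1+6\cdot 2^{m/2}$ for an absolutely irreducible curve of genus $\le 3$; this is what forces the splitting identity $L(Y)L_2(Y)+L(Y)^2=L_1(Y)$, and incidentally it only works for $m>5$ (the paper treats $m=4,5$ separately), so your claim that $m\ge 4$ suffices for this step is not correct.

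The second gap is that your proposal stops essentially at ``analyse the forced factorization by resultant and Newton-polygon arguments''. In the actual proof, the factorization only yields the coefficient system \eqref{D}, and the bulk of the work --- and the entire source of the conditions in $\Gamma_1$, $\Gamma_2$ and of the even/odd dichotomy --- is the subsequent case analysis: the four cases on $(\varphi_1,\varphi_2)$ in Proposition \ref{four cases}, the separate treatment of $a=1$ (Proposition \ref{Prop_a=1}) and of the directions $a$ with $\nu_1=0$ (condition \eqref{con_case1} and Proposition \ref{nu1=0}), the trace identities relating $\tr_{m}\left(\varphi_3\bar{\varphi}_3/\varphi_1^2\right)$ to $\tr_{m}\left(\theta_2\bar{\theta}_2/\theta_1^2\right)$ (Lemmas \ref{5.1_lemma1}, \ref{5.1_lemma2}), and the verification that in the degenerate branches (Cases E.2, E.3) one actually gets $f(a)=0$ or $1+a_1+a_2+a_3=0$, killing the APN property. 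Your sketch neither addresses the sufficiency direction at these degenerate $a$'s nor explains how the specific quartic relations $\theta_1^2\theta_4+\theta_1\theta_2\bar{\theta}_2+\theta_2^2\theta_3+\bar{\theta}_2^2\bar{\theta}_3=0$ and $\theta_1^2\theta_3+\theta_1\bar{\theta}_2^2+\theta_2^2\theta_3+\bar{\theta}_2^2\bar{\theta}_3=0$ would emerge from your factorization, so as it stands the proposal does not constitute a proof.
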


\begin{Rem}
	{\emph{It is worth pointing out that the sets $\Gamma_1$ and $\Gamma_2$ are not empty since $(a_1,a_2,a_3)=(0,0,a_3)$ with $1+a_3\bar{a}_3\neq0$ belong to these sets clearly. In addition,   Tu et al. \cite{TLZ2019} and Li et al. \cite{li2020conjecture} showed that $f(x)$ defined as in (\ref{f(x)}) permutes $\gf_{2^{n}}$ if and only if $m$ is odd and $\left(a_1,a_2,a_3\right)\in\Gamma$, where $$\Gamma = \left\{ (a_1,a_2,a_3) ~|~ \theta_1\neq0, ~\tr_{m}\left(\frac{\theta_4}{\theta_1}\right) = 1,~ \theta_2^2 = \theta_1\bar{\theta}_3 \right\}$$ and $\theta_i$'s with $i=1,2,3,4$ are defined as in \eqref{theta}. 
			A natural question to ask is whether there exist APN permutations of the form $f(x)$
for $m\ge 4$. According to our result, the answer to this question is unfortunately negative. This is due to the fact  $\Gamma\cap \Gamma_1 = \emptyset$.  Indeed, if there exist some $(a_1,a_2,a_3)\in\Gamma\cap \Gamma_1$, one can plug $\theta_2^2=\theta_1\bar{\theta}_3$ into $\theta_1^2\theta_4 + \theta_1\theta_2\bar{\theta}_2 + \theta_2^2\theta_3 + \bar{\theta}_2^2\bar{\theta}_3 =0$. After a simplification,  it follows that $\theta_1\theta_4=\theta_2\bar{\theta}_2$ and thus $\tr_m\left( \frac{\theta_4}{\theta_1} \right) = \tr_{m}\left(\frac{\theta_2\bar{\theta}_2}{\theta_1^2}\right)$, which is a contradiction. Hence there is no APN permutation of the form \eqref{f(x)} for $m\geq 4$.}
		}
\end{Rem}

The rest of this paper is organized as follows: in Section \ref{Preliminaries},
we first introduce some notations and useful lemmas on algebraic curves, e.g., the Hasse-Weil bound, as well as give a proof sketch of Theorem \ref{theorem}  for readers' convenience.  
According to the proof sketch steps, Section \ref{proof} is devoted to the detailed proof of Theorem \ref{theorem}, of which some lengthy subcases are placed in the appendix section.
Section 4 concludes our work in this paper.

\section{Preliminaries}
\label{Preliminaries}

In this section, we first introduce some notations and several basic facts on finite fields. We always assume $n=2m$ and $f(x)$ is the quadrinomial over $\gf_{2^{n}}$ defined as in (\ref{f(x)}). For any finite set $E$, the nonzero elements of $E$ is denoted by $E^{*}$ and $\#E$ denotes the number of elements of $E$.

Let  $\overline{\gf}_{2^m}$ be the algebraic closure of $\gf_{2^{m}}$ and $\tr_{m}(\cdot)$ the absolutely trace function over $\gf_{2^{m}}$, namely, $\tr_{m}(x)= x+x^2 + \cdots + x^{2^{m-1}}$.
Suppose $k$ is an element in $\gf_{2^m}$ with $\tr_{m}(k)=1$  and $\omega$ is a solution of $x^2+x+k=0$ in $\gf_{2^n}$. Then $\omega$ is a primitive element of $\gf_{2^{n}}$ over $\gf_{2^{m}}$ satisfying $\bar{\omega}=\omega+1$ and $\omega\bar{\omega}=k$. The element $\omega$ induces a one-to-one correspondence between $\gf_{2^{n}}$ and $\gf_{2^{m}}^2$ by $x=x_1\omega+ x_2 \mapsto (x_1, x_2)$.
For convenience, given any element (except for $a$) in $\gf_{2^{n}}$, we will use the subscripts to denote its coordinates in $\gf_{2^{m}}$, e.g., we  use $x_1, x_2$ for $x$ and  $\theta_{11}, \theta_{12}$ for $\theta_1$. 
The unit circle of $\gf_{2^{n}}$ is defined by $\mu_{2^m+1} = \left\{ x\in\gf_{2^n} ~|~ x^{2^m+1} = 1 \right\}$. For any element $a\in\mu_{2^m+1}\backslash\{1\}$, it is well known that there exists a unique element $A\in\gf_{2^{m}}$ such that $a^2=\frac{A+\omega}{A+\bar{\omega}}$.

\mn


\subsection{Some results on algebraic curves}

In this subsection, we give some well known results on algebraic curves and algebraic function fields, mainly the Hasse-Weil bound, which plays an important role in our proof.  These classical results can be found in most of the textbooks on algebraic curves and algebraic function fields.

\begin{Lemma}
	\cite[Hasse-Weil bound]{hou2018lectures,stichtenoth2009algebraic}
	\label{Hasse-weil}
	Let $G(X,Y)$ be an absolutely irreducible polynomial in $\gf_{2^m}[X,Y]$ of degree $d$ and let $\# V_{\gf_{2^m}^2}(G)$ be the number of zeros of $G$. Then 
	$$ \left| \# V_{\gf_{2^m}^2}(G)-2^m \right| \le 2g2^{m/2}, $$
	where $g$ is the genus of the function field $\gf_{2^m}(\mathbb{X}, \mathbb{Y} )/ \gf_{2^m}$ and $ \mathbb{X}, \mathbb{Y} $ are transcendentals over $\gf_{2^m}$ with $G(\mathbb{X}, \mathbb{Y} ) = 0$. 	
\end{Lemma}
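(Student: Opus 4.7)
The plan is to prove the Hasse-Weil bound via the zeta function machinery, following Weil's classical route. First I would reduce to the smooth projective case: starting from the affine curve $G(X,Y)=0$, pass to the projective closure and then to the normalization $\mathcal{C}$, a smooth projective absolutely irreducible curve over $\gf_{2^m}$ of the same geometric genus $g$. Absolute irreducibility of $G$ is preserved, and the difference between $\#\mathcal{C}(\gf_{2^m})$ and $\# V_{\gf_{2^m}^2}(G)$ is bounded purely in terms of the degree $d$ (coming from points at infinity and singularities resolved by normalization), so this discrepancy can be absorbed into the bound at no asymptotic cost.

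Second, I would introduce the zeta function
\[
Z(\mathcal{C},t) \;=\; \exp\Biggl(\sum_{n\geq 1}\frac{N_n}{n}\,t^n\Biggr), \qquad N_n = \#\mathcal{C}(\gf_{2^{mn}}),
\]
and establish its rationality by counting effective divisors degree-by-degree through Riemann-Roch on $\mathcal{C}$, yielding
\[
Z(\mathcal{C},t) \;=\; \frac{P(t)}{(1-t)(1-qt)}, \qquad q = 2^m,
\]
with $P(t)\in\mathbb{Z}[t]$, $P(0)=1$, $\deg P = 2g$, together with a functional equation $P(t)=q^g t^{2g}P\bigl(1/(qt)\bigr)$ arising from Serre duality on $\mathcal{C}$. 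Factoring $P(t)=\prod_{i=1}^{2g}(1-\alpha_i t)$ and comparing coefficients of $\log Z$ gives $N_n = q^n + 1 - \sum_{i=1}^{2g}\alpha_i^n$, so the Hasse-Weil bound $|N_1-(q+1)|\leq 2g\sqrt{q}$ is equivalent to the Riemann Hypothesis $|\alpha_i|=\sqrt{q}$ for every $i$.

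Third, to prove $|\alpha_i|=\sqrt{q}$ I would follow Weil's argument through the Jacobian $J=\mathrm{Jac}(\mathcal{C})$: the geometric Frobenius $\varphi$ extends to an endomorphism of $J$ whose characteristic polynomial on a Tate module $T_\ell J$ (for $\ell\neq 2$) is the reciprocal polynomial $t^{2g}P(1/t)$. The relation $\varphi\varphi^\dagger = q\cdot\mathrm{id}_J$, where $\dagger$ denotes the Rosati involution attached to the canonical polarization, combined with the positivity of the Rosati form on $\mathrm{End}(J)\otimes\mathbb{R}$ (equivalently, the Castelnuovo-Severi inequality for correspondences on $\mathcal{C}\times\mathcal{C}$), forces $\alpha_i\overline{\alpha_i}=q$ for every eigenvalue. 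Inserting this into step two yields $|N_1-(q+1)|\leq 2g\sqrt{q}$, and transferring back to the affine count via the discrepancy from step one produces the stated inequality.

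The main obstacle is unquestionably the Riemann Hypothesis step. Rationality and the functional equation are formal consequences of Riemann-Roch and duality, and the affine-to-projective reduction is elementary algebraic geometry. But $|\alpha_i|=\sqrt{q}$ genuinely requires input that cannot be avoided: either the Rosati-positivity argument above, which itself rests on intersection theory on $\mathcal{C}\times\mathcal{C}$ and the Castelnuovo-Severi inequality, or, as a more self-contained alternative, Stepanov's elementary method, in which one constructs an auxiliary polynomial of carefully controlled bidegree that vanishes to prescribed order at every $\gf_{2^m}$-rational point of a suitable cover of $\mathcal{C}$ and then extracts the bound from a degree count. Either route delivers the inequality, but each packages a substantial amount of geometric or combinatorial work that I would not attempt to reproduce from scratch in this paper.
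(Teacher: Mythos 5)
The paper offers no proof of this lemma at all --- it is quoted as a classical black box from the cited textbooks --- so there is nothing internal to compare against; your outline is, in substance, exactly the argument those references contain: normalize to a smooth projective model, establish rationality and the functional equation of the zeta function via Riemann--Roch and duality, reduce the bound to the Riemann Hypothesis $|\alpha_i|=\sqrt{q}$, and obtain that last step from Rosati positivity (equivalently Castelnuovo--Severi) or from Stepanov's method. As a roadmap this is correct, and you are right that the RH step is the only genuinely deep ingredient; since you invoke it rather than prove it, what you have is a faithful skeleton of the standard proof rather than a self-contained one, which is a defensible choice given that the paper itself cites the whole statement. One substantive point deserves flagging: your first step is actually \emph{more} careful than the lemma as transcribed. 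Passing from the smooth projective count $\#\mathcal{C}(\gf_{2^m})$ back to the affine count $\#V_{\gf_{2^m}^2}(G)$ costs a correction bounded in terms of $d$ (points at infinity, and singular points of the plane model lying under several or no places of the normalization), and such a correction cannot be ``absorbed'' into an inequality that carries no error term; indeed a nodal plane cubic, whose function field has genus $0$, already violates the displayed inequality taken literally. What your argument honestly delivers is $\left|\#V_{\gf_{2^m}^2}(G)-2^m\right|\le 2g\,2^{m/2}+C(d)$ with $C(d)=O(d^2)$, which is the form one should quote. This does not damage your proposal --- the defect lies in the statement, not in your reduction --- but it does mean that where the paper later pits $\#V_m(G)\le 2^m+1+6\cdot 2^{m/2}$ against $\#V_m(G)\ge 2(2^m-4)$, the slack at the boundary case $m=6$ is only $7$, so anyone making the argument rigorous must track the exact correction for the specific degree-$6$ curve $G$ (which is quadratic in $X$, so the correction is small) rather than a generic $O(d^2)$ term.
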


Let $F/K$ be a function field and $K$ be perfect. Let $g$ denote the genus of $F/K$. Then we have the following upper bound on  the genus.

\begin{Lemma}
	\cite{hou2018lectures,stichtenoth2009algebraic}
	\label{genus-bound}
	Let $ F = K(\mathbb{X}, \mathbb{Y} ) $, where  $ \mathbb{X}, \mathbb{Y} $ are transcendentals over $K$. Then the genus of the  function field $F/ K$ satisfies:
	$$ g\le ([F: K(\mathbb{X})] -1 )([F: K(\mathbb{Y})] -1). $$
\end{Lemma}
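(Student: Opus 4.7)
The plan is to derive the stated inequality from the Riemann--Roch theorem by exhibiting many linearly independent functions inside a Riemann--Roch space whose degree we can control. Write $n_1 = [F:K(\mathbb{X})]$ and $n_2 = [F:K(\mathbb{Y})]$, and let $A = (\mathbb{X})_\infty$ and $B = (\mathbb{Y})_\infty$ denote the pole divisors of $\mathbb{X}$ and $\mathbb{Y}$ in $F$. The identity $\deg(\varphi)_\infty = [F:K(\varphi)]$, standard in function fields of one variable, yields $\deg A = n_1$ and $\deg B = n_2$. These are the only ingredients the argument will need beyond Riemann--Roch.

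For each integer $k \ge 0$, I would consider the divisor $D_k = kA + (n_1-1)B$ and its Riemann--Roch space $\mathcal{L}(D_k)$. For $0 \le i \le k$ and $0 \le j \le n_1-1$, the monomial $\mathbb{X}^i \mathbb{Y}^j$ has pole divisor at most $iA + jB \le D_k$, hence lies in $\mathcal{L}(D_k)$. These $(k+1)n_1$ monomials are $K$-linearly independent: since $[F:K(\mathbb{X})] = n_1$, the minimal polynomial of $\mathbb{Y}$ over $K(\mathbb{X})$ has degree exactly $n_1$, so $1, \mathbb{Y}, \ldots, \mathbb{Y}^{n_1-1}$ are $K(\mathbb{X})$-linearly independent; a vanishing relation $\sum c_{ij} \mathbb{X}^i \mathbb{Y}^j = 0$ with $c_{ij} \in K$ therefore forces $\sum_i c_{ij} \mathbb{X}^i = 0$ in $K(\mathbb{X})$ for each $j$, and transcendence of $\mathbb{X}$ over $K$ then gives $c_{ij} = 0$. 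Consequently $\dim_K \mathcal{L}(D_k) \ge (k+1)n_1$ for every $k$.

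To conclude, I would choose $k$ large enough that $\deg D_k = k n_1 + (n_1-1)n_2$ exceeds $2g-2$. Then Riemann--Roch applied to $D_k$ yields the equality $\dim_K \mathcal{L}(D_k) = \deg D_k + 1 - g$. Combining this with the lower bound from the previous paragraph gives $(k+1)n_1 \le k n_1 + (n_1-1)n_2 + 1 - g$; the $k n_1$ contributions cancel, and rearrangement produces $g \le (n_1-1)(n_2-1)$, which is exactly the asserted bound.

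The main subtle point of the plan is the asymmetric shape of $D_k$: the exponent of $\mathbb{Y}$ must be held strictly below $n_1$ so that the monomials remain $K$-linearly independent, while the exponent of $\mathbb{X}$ must be driven to infinity so that $\deg D_k$ eventually surpasses the Riemann--Roch threshold $2g-2$ and the $k$-dependent contributions cancel on both sides of the final inequality. Running the same construction with $D_k' = (n_2-1)A + kB$ yields the same bound with the roles of $\mathbb{X}$ and $\mathbb{Y}$ swapped, so the symmetry of the statement in $n_1, n_2$ is respected.
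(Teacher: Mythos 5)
Your proof is correct, and it is essentially the standard argument for Riemann's inequality found in the references the paper cites for this lemma (e.g., Stichtenoth, Ch.~III.10): bound $\dim_K\mathcal{L}(k(\mathbb{X})_\infty+(n_1-1)(\mathbb{Y})_\infty)$ from below by the monomials $\mathbb{X}^i\mathbb{Y}^j$ and from above by Riemann--Roch for large $k$, then cancel the $k$-dependent terms. The paper itself gives no proof, quoting the result as known, so there is nothing further to compare.
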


Given two plane curves $\mathcal{A}$ and $\mathcal{B}$ and a point $P$ on the plane, the \emph{intersection number} $I(P,\mathcal{A}\cap \mathcal{B})$ of $\mathcal{A}$ and $\mathcal{B}$ at the point $P$ is defined by seven axioms. We do not include its precise and long definitions here. For more details, we refer to \cite{hirschfeld2008algebraic}.
\begin{Lemma}
	\cite[B\'{e}zout's Theorem]{hirschfeld2008algebraic}
	\label{Btheorem}
	Let $\mathcal{A}$ and $\mathcal{B}$ be two projective plane curves over an algebraically closed field $K$, having no component in common. Let $A$ and $B$ be the polynomials associated with $\mathcal{A}$ and $\mathcal{B}$ respectively. Then 
	$$\sum_{P}I(P,\mathcal{A}\cap \mathcal{B})=(\deg A) (\deg B),$$
	where the sum runs over all points in the projective plane $\mathrm{PG}(2,K)$.
\end{Lemma}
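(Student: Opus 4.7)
The plan is to exploit the fact that every exponent in $f$ has binary weight two, so $f$ is a quadratic function over $\gf_2$ and the difference map $L_a(x) := f(x+a)+f(x)+f(a)$ is $\gf_2$-linear in $x$ for every fixed $a$. Then $f$ is APN iff for every $a\in\gf_{2^n}^{*}$ one has $\ker L_a=\{0,a\}$. First I would expand $L_a(x)$ term by term (each monomial $cx^{2^i+2^j}$ contributes $c(x^{2^i}a^{2^j}+x^{2^j}a^{2^i})$), substitute $y=x/a$, set $u=\bar{a}/a\in\mu_{2^m+1}$, and normalize by $a^3$. The resulting equation has the form $P(y,\bar{y};u;a_1,a_2,a_3)=0$ with $P$ quadratic in $(y,\bar{y})$ and polynomial in $u$, and failure of APN becomes the existence, for some $u\in\mu_{2^m+1}$, of a solution $y\in\gf_{2^n}\setminus\{0,1\}$ to this equation.

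Next I would descend to $\gf_{2^m}$-coordinates. Parameterizing $u=(A+\omega)/(A+\bar{\omega})$ with $A\in\gf_{2^m}\cup\{\infty\}$ and writing $y=y_1\omega+y_2$ with $y_1,y_2\in\gf_{2^m}$, the relation $\omega^2=\omega+k$ splits $P=0$ into a pair of polynomial equations in $(A,y_1,y_2)$ over $\gf_{2^m}$. Eliminating $A$ produces a single affine plane curve $\mathcal{C}:G(y_1,y_2)=0$ whose $\gf_{2^m}$-rational points outside a fixed trivial set of two points correspond exactly to obstructions to $f$ being APN. The identity \eqref{theta_equation} should be used here to simplify the elimination step, and Lemma 2.2 provides an upper bound on the genus of $\mathcal{C}$.

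The core of the argument is the classical Hasse-Weil dichotomy applied to $\mathcal{C}$: if $\mathcal{C}$ has an absolutely irreducible $\gf_{2^m}$-rational component, then by Lemma 2.1 it carries at least $2^m-2g\cdot 2^{m/2}$ rational points, which for $m\ge 4$ strictly exceeds the trivial count, so $f$ cannot be APN. Hence $f$ is APN iff $\mathcal{C}$ factors over the algebraic closure into components whose $\gf_{2^m}$-rational points lie entirely in the trivial locus. I expect the polynomial conditions $\theta_1^2\theta_4+\theta_1\theta_2\bar{\theta}_2+\theta_2^2\theta_3+\bar{\theta}_2^2\bar{\theta}_3=0$ and its $\Gamma_2$-analogue to arise as the precise algebraic conditions under which the required factorization occurs, while the trace condition $\tr_{m}(\theta_2\bar{\theta}_2/\theta_1^2)=0$ should appear from the classical criterion that $x^2+x+c$ splits over $\gf_{2^m}$ iff $\tr_{m}(c)=0$.

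The main obstacle will be the case analysis organized by which of $\theta_1,\theta_2,\theta_3,\theta_4$ vanish (and in particular showing that $\theta_1\ne 0$ is necessary, since it is assumed in both $\Gamma_1$ and $\Gamma_2$). In each regime one must either exhibit an explicit absolutely irreducible factor of $\mathcal{C}$ and invoke Hasse-Weil, or exhibit an explicit factorization whose components have only the trivial rational points. B\'{e}zout's theorem (Lemma 2.3), applied to intersections of $\mathcal{C}$ with well-chosen auxiliary curves, should allow me to rule out spurious factorizations and control component degrees. The dichotomy $\Gamma_1$ vs.\ $\Gamma_2$ and the parity condition on $m$ should emerge from a factor of $\mathcal{C}$ that is $\gf_{2^m}$-rational only when an associated extension behaves in a specific way dictated by whether $m$ is even or odd. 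Carrying out this case-by-case analysis carefully, while keeping track of the relation \eqref{theta_equation} among the $\theta_i$'s throughout, is the technically demanding part of the proof.
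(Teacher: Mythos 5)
Your proposal does not address the statement at hand. The statement is B\'{e}zout's Theorem itself: for two projective plane curves $\mathcal{A}$ and $\mathcal{B}$ over an algebraically closed field with no common component, the sum of the intersection numbers $I(P,\mathcal{A}\cap\mathcal{B})$ over all points of $\mathrm{PG}(2,K)$ equals $(\deg A)(\deg B)$. What you have written is instead a strategy for proving the paper's \emph{main theorem} (the APN characterization of the quadrinomial $f$), in which B\'{e}zout's Theorem appears only as a tool. Nothing in your text engages with intersection multiplicities, the hypothesis that the curves share no component, the role of the algebraic closure, or why the total intersection count equals the product of the degrees --- which is the entire content of the claim you were asked to prove.

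Note also that the paper does not prove this lemma; it is a classical result quoted from the literature (Hirschfeld--Korchm\'{a}ros--Torres), so there is no in-paper proof to compare against. A genuine proof would have to work from the axiomatic definition of the intersection number (or an equivalent resultant-based definition): one typically shows that the resultant of the homogeneous forms $A$ and $B$ with respect to one variable is a nonzero form of degree $(\deg A)(\deg B)$ whose zeros, counted with multiplicity, match the local intersection numbers at the finitely many common points, the finiteness and the nonvanishing of the resultant both resting on the no-common-component hypothesis. If your task is to prove the lemma as stated, you need to start over along those lines; the material you wrote belongs to Step 3 of the paper's proof sketch for Theorem \ref{theorem}, not here.
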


\subsection{APN functions and the proof sketch of Theorem \ref{theorem}}

For a function $F(x)$ over $\gf_{2^n}$ and any $a\in\gf_{2^n}^{*}$, the function
$$D_aF(x) = F(x) + F(x+a)$$
is called the derivative of $F(x)$ in direction $a$. 
The \textit{differential uniformity}~\cite{nyberg1993differentially} of $F(x)$ is defined as
$$\max_{a\in\gf_{2^n}^{*}, b \in\gf_{2^n}}\# \{ x\in\gf_2^n ~|~ D_aF(x) = b \}.$$
Since $D_aF(x)=D_aF(x+a)$ for any $x, a$ in $\gf_2^n$, the minimum of differential uniformity of $F(x)$ is $2$. The functions with differential uniformity $2$ are called \emph{almost perfect nonlinear} (APN) functions.

For a quadratic function $F(x)$ with $F(0)=0$, it is well known that $F(x)$ is APN if and only if for any $a\in\gf_{2^{n}}^{*}$, the equation
 $$D_aF(x)+F(a)=0$$ has only two solutions $x=0,a$ in $\gf_{2^{n}}$, 
 equivalently, the equation 
 $$
 D_aF(ax)+F(a)=0
 $$  has only two solutions $x=0, 1$ in $\gf_{2^{n}}$.

Now we take a closer look at the quadrinomial $f(x)$ defined as in (\ref{f(x)}). For an element $b\in\gf_{2^m}$, it is clear that 
\begin{eqnarray*}
	f(b x) & = & (\bar{b}\bar{x})^3 + a_1(\bar{b}\bar{x})^2bx+a_2\bar{b}\bar{x}(b x)^2+a_3(b x)^3
=b^3 f(x).
\end{eqnarray*}
The above property of $f(x)$ enables us to restrict the element $a$ in the equation $ D_aF(ax)+F(a)=0$ to the unit circle $\mu_{2^m+1}$.
Indeed, 
for any $a\in\gf_{2^n}^{*}$, there exist a unique $b\in\gf_{2^m}^{*}$ and $c\in\mu_{2^m+1}$ such that $a=bc$, we have  
\begin{eqnarray*}
	D_af(ax) + f(a) &=& f(bcx+bc)+f(bcx)+f(bc) \\
	&=&b^3(f(cx+c)+f(cx)+f(c))=b^3(D_cf(cx)+f(c)).
\end{eqnarray*}
This means that the solution of $D_af(ax) + f(a) =0$ is independent of the choice of $b\in \gf_{2^{m}}^*$.
Therefore, we only need to determine the condition on $(a_1,a_2,a_3)$ such that $D_af(ax)+f(a)=0$ has only two solutions $x=0,1$ in $\gf_{2^n}$ for any $a\in\mu_{2^m+1}$. 
A simplification  of the equation $D_af(ax)+f(a)=0$ for $a$ in $\mu_{2^m+1}$ gives
\begin{equation}
\label{main_equation}
\epsilon_1 \bar{x}^2 + \epsilon_2 \bar{x} + \epsilon_3 x^2 + \epsilon_4 x = 0,
\end{equation}
where 
\begin{equation}	\label{epsilon}
\left\{
\begin{array}{lr}
\epsilon_1 = \bar{a}^3 + a_1\bar{a}^2a  \\ 
\epsilon_2 = \bar{a}^3 + a_2 \bar{a}a^2 \\
\epsilon_3 = a_2\bar{a}a^2 + a_3 a^3 \\
\epsilon_4 = a_1 \bar{a}^2 a + a_3 a^3.
\end{array}
\right.
\end{equation} \mn
Note that $\epsilon_1+\epsilon_2+\epsilon_3+\epsilon_4=0$.  Taking $2^m$-th power on both sides of (\ref{main_equation}) gives 
\begin{equation}
\label{main_equation_1}
\bar{\epsilon}_1 {x}^2 + \bar{\epsilon}_2 {x} + \bar{\epsilon}_3 \bar{x}^2 + \bar{\epsilon}_4 \bar{x} = 0.
\end{equation}
Computing the summation of the left side of (\ref{main_equation}) multiplied by $\bar{\epsilon}_3$ and the left one of (\ref{main_equation_1}) multiplied by $\epsilon_1$, we get 
\begin{equation}
\label{sum_equation}
\nu_1x^2+\nu_2\bar{x} + \nu_3 x =0,
\end{equation}  
where 
\begin{equation}	\label{nu}
\left\{
\begin{array}{lr}
\nu_1 = \epsilon_3\bar{\epsilon}_3+\bar{\epsilon}_1\epsilon_1 = a^4\bar{a}^2\theta_2 + a^3 \bar{a}^3 \theta_1 + a^2\bar{a}^4 \bar{\theta}_2\\
\nu_2 = \epsilon_2\bar{\epsilon}_3 + \bar{\epsilon}_4\epsilon_1 = a^3\bar{a}^3\theta_4 + a^2\bar{a}^4 \bar{\theta}_2 + a\bar{a}^5\theta_3\\
\nu_3 = \epsilon_4\bar{\epsilon}_3+\bar{\epsilon}_2\epsilon_1 = a^4\bar{a}^2\theta_2 + a^3\bar{a}^3 (\theta_1+\theta_4) + a\bar{a}^5\theta_3,
\end{array}
\right.
\end{equation} 
and $\theta_i$'s with $i=1,2,3,4$ are defined as in (\ref{theta}).

Li et. al  in \cite{li2020permutation} showed that if $\nu_1\neq0$, the number of solutions of (\ref{sum_equation}) is the same as that of (\ref{main_equation}). Moreover, the following lemma determines the number of solutions in $\gf_{2^{n}}$ of (\ref{sum_equation}), which for $\nu_1\neq0$ is actually $x^2+\tau \bar{x} +(1+\tau) x =0$ with $\tau=\frac{\nu_2}{\nu_1}$.  The proof of Lemma \ref{key_lemma} can be found in \cite{tu2020class}, where the authors assumed that $n=2m$ with $m$ odd. In fact, the condition $m$ odd can be deleted. Since the proof is very similar, we omit it here. \mn
\begin{Lemma}
	\cite{tu2020class}
	\label{key_lemma}
	Let $\tau\in\gf_{2^n}^{*}$. Then the equation $x^2+\tau \bar{x} +(1+\tau) x =0$ has two or four solutions in $\gf_{2^n}$. Moreover, the above equation has two solutions if and only if 
	\begin{enumerate}[(1)]
		\item $1+\tau+\bar{\tau}=0$; or 
		\item $1+\tau +\bar{\tau}\neq0$ and $\tr_{m}\left(\tau\bar{\tau}\right)=0$. 
	\end{enumerate}
\end{Lemma}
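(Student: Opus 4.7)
The plan is to exploit the fact that $L(x) := x^2 + \tau\bar{x} + (1+\tau)x$ is a $\gf_2$-linear function on $\gf_{2^n}$, so its kernel is a $\gf_2$-subspace of $\gf_{2^n}$. A direct check shows $L(0)=L(1)=0$, so the solution space always contains $\{0,1\}$ and therefore has size at least two; to prove the lemma it suffices to show this size is at most four and to characterize when it drops back to two.

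Using the basis $\{1,\omega\}$ of $\gf_{2^n}$ over $\gf_{2^m}$ from Section~\ref{Preliminaries}, I would write $x = x_1\omega + x_2$ and $\tau = \tau_1\omega + \tau_2$ with $x_1,x_2,\tau_1,\tau_2\in\gf_{2^m}$. Since $\bar\omega=\omega+1$, one has $\bar{x} = x + x_1$, so $\tau\bar{x} + (1+\tau)x = x + \tau x_1$ and the equation collapses to $x^2 + x + \tau x_1 = 0$. Expanding $x^2 + x = (x_1^2 + x_1)\omega + (k x_1^2 + x_2^2 + x_2)$ via $\omega^2 = \omega + k$, and comparing $\omega$- and constant-components with $\tau x_1 = \tau_1 x_1\,\omega + \tau_2 x_1$, I obtain the decoupled system
\begin{equation*}
x_1(x_1 + 1 + \tau_1) = 0, \qquad x_2^2 + x_2 = k x_1^2 + \tau_2 x_1.
\end{equation*}

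The first equation forces $x_1 = 0$ or $x_1 = 1 + \tau_1 = 1 + \tau + \bar{\tau}$. When $x_1 = 0$, the second equation degenerates to $x_2^2 + x_2 = 0$, recovering the two baseline solutions $x = 0, 1$. When $1 + \tau + \bar{\tau} = 0$ the two branches coincide, so $\{0,1\}$ is the full solution set, yielding part~(1). Otherwise, the branch $x_1 = 1 + \tau_1 \neq 0$ reduces the second equation to $x_2^2 + x_2 = D$ with $D = k(1+\tau_1)^2 + \tau_2(1+\tau_1)$, which has either two roots in $\gf_{2^m}$ (when $\tr_m(D) = 0$) or none; this immediately forces the total solution count to be $2$ or $4$.

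The decisive step is to identify $\tr_m(D)$ with a trace involving $\tau\bar\tau$. Expanding $D = k + k\tau_1^2 + \tau_2 + \tau_1\tau_2$ and using $\tr_m(k) = 1$ together with $\tr_m(\tau_2) = \tr_m(\tau_2^2)$ yields $\tr_m(D) = 1 + \tr_m(k\tau_1^2 + \tau_1\tau_2 + \tau_2^2)$. On the other hand, $\omega + \bar\omega = 1$ and $\omega\bar\omega = k$ give $\tau\bar\tau = (\tau_1\omega + \tau_2)(\tau_1\bar\omega + \tau_2) = k\tau_1^2 + \tau_1\tau_2 + \tau_2^2$, so $\tr_m(D) = 1 + \tr_m(\tau\bar\tau)$. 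Hence the branch $x_1 \neq 0$ contributes two additional solutions precisely when $\tr_m(\tau\bar\tau) = 1$, which together with part~(1) establishes part~(2). The only delicate point is the identification of $\tr_m(D)$ with $1+\tr_m(\tau\bar\tau)$ through the norm formula for $\tau\bar\tau$ in $(\tau_1,\tau_2,k)$-coordinates; everything else is a routine coordinate decomposition.
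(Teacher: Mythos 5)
Your proof is correct and complete: the reduction of $\tau\bar{x}+(1+\tau)x$ to $x+\tau x_1$ via $\bar{x}=x+x_1$, the decoupling into $x_1(x_1+1+\tau_1)=0$ and $x_2^2+x_2=kx_1^2+\tau_2x_1$, and the identification $\tr_m(D)=1+\tr_m(\tau\bar\tau)$ using $\tr_m(k)=1$ and $\tau\bar\tau=k\tau_1^2+\tau_1\tau_2+\tau_2^2$ all check out, and the count of two or four solutions follows immediately. Note that the paper itself omits the proof of this lemma, deferring to the cited reference \cite{tu2020class}; your coordinate decomposition over $\gf_{2^m}^2$ is exactly the kind of self-contained argument that fills that gap.
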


Before ending this section, we present a proof sketch of Theorem \ref{theorem} as the proof itself is technical and lengthy. The details of the proof will be given
 according to the sketch steps in the next section.

{\bfseries The proof sketch of Theorem \ref{theorem}.}
It suffices to show that (\ref{main_equation}) has only two solutions $x=0,1$ for any $a\in\mu_{2^m+1}$ if and only if $(a_1,a_2,a_3)\in \Gamma_1\cup\Gamma_2$ (resp. $\Gamma_1$) when $m$ is even (resp. odd). The proof can be divided into the following five steps.
\begin{enumerate}[Step 1).]
	\item Determine the necessary and sufficient condition $(a_1,a_2,a_3)$ for $a=1$ such that (\ref{main_equation}) has only two solutions $x=0,1$. See Proposition \ref{Prop_a=1}. 
	\item  For the case $a\in\mu_{2^m+1}\backslash\{1\}$, we first consider the elements $a$'s that lead to $\nu_1=0$, and find the necessary and sufficient condition such that (\ref{main_equation}) has only two solutions $x=0,1$ when $\nu_1=0$.  See (\ref{con_case1}).
	\item For the elements $a$'s such that $\nu_1\neq0$, it follows from Lemma \ref{key_lemma} that (\ref{main_equation}) has only two solutions $x=0,1$ if and only if $$1+\frac{\nu_2}{\nu_1}+\frac{\bar{\nu}_2}{\nu_1}=0 ~~\text{or}~~  1+\frac{\nu_2}{\nu_1}+\frac{\bar{\nu}_2}{\nu_1}\neq0 ~~\text{and}~~ \tr_{m}\left(\frac{\nu_2\bar{\nu}_2}{\nu_1^2}\right) = 0.$$ Thus for such $a$'s with $\nu_1\neq0$ and $1+\frac{\nu_2}{\nu_1}+\frac{\bar{\nu}_2}{\nu_1}\neq0$, $\tr_{m}\left(\frac{\nu_2\bar{\nu}_2}{\nu_1^2}\right) $ always equals zero. Recall that for any $a\in\mu_{2^m+1}\backslash\{1\}$, there exist a unique element $A\in\gf_{2^m}$ such that $a^2=\frac{A+\omega}{A+\bar{\omega}}$,
	where $\omega$ is a solution in $\gf_{2^n}$ of $x^2+x+k=0$ for an element $k\in\gf_{2^{m}}$ satisfying $\tr_{m}(k)=1$.  Substituting the expression of $a$ into $\tr_{m}\left(\frac{\nu_2\bar{\nu}_2}{\nu_1^2}\right)$, we  obtain 
	$\tr_{m}\left(\frac{L_1(A)}{L_2(A)^2}\right)=0$  for at least $(2^m-4)$ $A\in\gf_{2^{m}}$, where $L_1(Y),L_2(Y)\in\gf_{2^{m}}[Y]$. Moreover, the Hasse-Weil bound tells that when $m>5$, there exists some polynomial $L(Y)\in\gf_{2^{m}}[Y]$, such that $L(Y)L_2(Y)+L(Y)^2=L_1(Y)$. 
	\item Using the method of undetermined coefficients, determine the condition $(a_1,a_2,a_3)$, see Proposition \ref{four cases}, such that there indeed exists some polynomial $L(Y)\in\gf_{2^{m}}[Y]$ satisfying $L(Y)L_2(Y)+L(Y)^2=L_1(Y)$. There are four cases in this part of the proof. 
	\item Together with all conditions on $(a_1,a_2,a_3)$ obtained in Steps 1), 2), 4), we finally prove that (\ref{main_equation}) has only two solutions $x=0,1$ for any $a\in\mu_{2^m+1}$, i.e., $f(x)$ is APN if and only if $(a_1,a_2,a_3)\in \Gamma_1\cup\Gamma_2$ (resp. $\Gamma_1$) when $m$ is even (resp. odd). 
\end{enumerate}

\mn

\section{The detailed proof of Theorem \ref{theorem}}
\label{proof}
In this section, we give the whole proof of Theorem \ref{theorem}, mainly completing the five steps of the proof sketch in the above section. 

\subsection{The proof of Step 1).} 
In this subsection, we determine the condition on $(a_1,a_2,a_3)$ such that (\ref{main_equation}) has only two solutions $x=0,1$ in $\gf_{2^n}$ for $a=1$. 
In this case $\nu_1=\theta_1+\theta_2+\bar{\theta}_2$.
We divide the proof into two cases: $\nu_1=0$ or $\nu_1\neq0$. 

{\bfseries Case A.1: $\nu_1=0$.} In this case we have $\epsilon_1\bar{\epsilon}_1=\epsilon_3\bar{\epsilon}_3$.   We need to consider two subcases. (i) If $\nu_2=0$,  then (\ref{sum_equation}) becomes $0=0$ and thus we consider the equation (\ref{main_equation}) directly. For any $x\in\gf_{2^n}^{*}$, let $x=yz$, where $y\in\gf_{2^m}^{*}$ and $z\in\mu_{2^m+1}$, then (\ref{main_equation}) becomes 
\begin{equation}
\label{subcasei_eq1}
\left( \epsilon_1 z^{-2} + \epsilon_3 z^2 \right) y = \epsilon_2z^{-1} + \epsilon_4 z. 
\end{equation}
If $\epsilon_1=0$, then we have $\epsilon_3=0$ and $\epsilon_2=\epsilon_4$. Thus (\ref{main_equation}) becomes $\epsilon_2(\bar{x}+x)=0$, whose number of solutions is greater than $2$. If $\epsilon_1\neq0$, it is clear that there exist some $z\in\mu_{2^m+1}\backslash\{1\}$ such that $\epsilon_1 z^{-2} + \epsilon_3 z^2\neq0$. In addition, by checking directly, we can find that $\frac{\epsilon_2z^{-1} + \epsilon_4 z}{\epsilon_1 z^{-2} + \epsilon_3 z^2}\in\gf_{2^m}$ for any $z\in\mu_{2^m+1}$ and thus (\ref{main_equation}) has solutions $x=\frac{\epsilon_2+\epsilon_4z^2}{\epsilon_1 z^{-2} + \epsilon_3 z^2}$, where $z\in\mu_{2^m+1}$ satisfies $\epsilon_1 z^{-2} + \epsilon_3 z^2\neq0$. Hence, the number of solutions in $\gf_{2^n}$ of (\ref{main_equation}) is greater than $2$ in the subcase. (ii) If $\nu_2\neq0$, then from (\ref{sum_equation}),  we have $\bar{x}=x$. Plugging $\bar{x}=x$ into (\ref{main_equation}), we get $(\epsilon_1+\epsilon_3)(x^2+x)=0$, which has two solutions in $\gf_{2^n}$ if and only if $\epsilon_1+\epsilon_3\neq0$, i.e., $1+a_1+a_2+a_3\neq0$. 

Therefore, in this case, (\ref{main_equation}) has only two solutions in $\gf_{2^n}$ if and only if 
$$   \bar{\theta}_2+\theta_3+\theta_4\neq0 ~\text{and}~ 1+a_1+a_2+a_3\neq0,  $$
where $\theta_i$'s with $i=1,2,3,4$ are defined as in (\ref{theta}).

{\bfseries Case A.2: $\nu_1\neq0$.}
In \cite[Lemma 7]{li2020permutation}, the authors showed that (\ref{main_equation}) and (\ref{sum_equation}) have the same set of solutions in $\gf_{2^n}$. Thus it suffices to consider the equation (\ref{sum_equation}), which is equivalent to 
\begin{equation}
\label{case2_eq1}
x^2+ \frac{\nu_2}{\nu_1} \bar{x} +\left(1+\frac{\nu_2}{\nu_1}\right)
x=0. 
\end{equation}
By Lemma \ref{key_lemma}, (\ref{case2_eq1}) has only two solutions in $\gf_{2^n}$ if and only if $1+\frac{\nu_2}{\nu_1}+\frac{\bar{\nu}_2}{\nu_1}=0$ or  $1+\frac{\nu_2}{\nu_1}+\frac{\bar{\nu}_2}{\nu_1}\neq0$ and $\tr_{m}\left(\frac{\nu_2\bar{\nu}_2}{\nu_1^2}\right) = 0$. Firstly,  in this case we have
\begin{eqnarray*}
 1+\frac{\nu_2}{\nu_1}+\frac{\bar{\nu}_2}{\nu_1} 
= \frac{\theta_1+\theta_3 + \bar{\theta}_3}{\theta_1+\theta_2 + \bar{\theta}_2}
\end{eqnarray*}
and thus $1+\frac{\nu_2}{\nu_1}+\frac{\bar{\nu}_2}{\nu_1}=0$ if and only if $\theta_1+\theta_3 + \bar{\theta}_3=0$. Moreover, if $\theta_1+\theta_3 + \bar{\theta}_3\neq0$, 
$$\tr_{m}\left(\frac{\nu_2\bar{\nu}_2}{\nu_1^2}\right) =\tr_{m}\left(\frac{(\bar{\theta}_2+\theta_3+\theta_4)({\theta}_2+\bar{\theta}_3+\theta_4)}{(\theta_1+\theta_2 + \bar{\theta}_2)^2}\right) = 0.$$

From the above analysis, we have the following conclusion.

\begin{Prop}
	\label{Prop_a=1}
	Let $n=2m$, $a_1\in\gf_{2^m}$, $a_2,a_3\in\gf_{2^n}$, $\theta_i$'s with $i=1,2,3,4$ be defined as in (\ref{theta}) and $f(x) = \bar{x}^3 + a_1\bar{x}^2x+a_2\bar{x}x^2+a_3x^3$. Then the equation $f(x+1)+f(x)+f(1)=0$ has only two solutions $x=0,1$ in $\gf_{2^n}$ if and only if 
	\begin{enumerate}[(1)]
		\item  $\theta_1+\theta_2+\bar{\theta}_2=0, \bar{\theta}_2+\theta_3+\theta_4\neq0$ {and} $1+a_1+a_2+a_3\neq0$;
		\item $\theta_1+\theta_2+\bar{\theta}_2\neq0$ and $\theta_1+\theta_3+\bar{\theta}_3 =0$;
		\item $\theta_1+\theta_2+\bar{\theta}_2\neq0$, $\theta_1+\theta_3+\bar{\theta}_3 \neq0$ and
		$$ \tr_{m}\left(\frac{(\bar{\theta}_2+\theta_3+\theta_4)({\theta}_2+\bar{\theta}_3+\theta_4)}{(\theta_1+\theta_2 + \bar{\theta}_2)^2}\right) = 0.$$
	\end{enumerate}
\end{Prop}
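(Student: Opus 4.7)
The plan is to reduce the equation $f(x+1)+f(x)+f(1)=0$ to equation~(\ref{main_equation}) evaluated at $a=1$, substitute this specialisation into~(\ref{epsilon}) and~(\ref{nu}), and then branch on whether $\nu_1$ vanishes. A direct computation gives $\nu_1|_{a=1}=\theta_1+\theta_2+\bar\theta_2$ and $\nu_2|_{a=1}=\theta_4+\bar\theta_2+\theta_3$, so the three alternatives in the proposition correspond precisely to the dichotomy $\nu_1=0$ versus $\nu_1\neq 0$. The workhorse is Lemma~\ref{key_lemma} applied to the normalised form of~(\ref{sum_equation}); I will also use the identity $\nu_1+\nu_2+\nu_3=0$, which is immediate from $\epsilon_1+\epsilon_2+\epsilon_3+\epsilon_4=0$.

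In the generic branch $\nu_1\neq 0$, I would invoke the fact from \cite{li2020permutation} that~(\ref{main_equation}) and~(\ref{sum_equation}) have the same solution set, rewrite~(\ref{sum_equation}) as $x^2+\tau\bar x+(1+\tau)x=0$ with $\tau=\nu_2/\nu_1$, and apply Lemma~\ref{key_lemma}. A short calculation shows
\[
1+\tau+\bar\tau=\frac{\theta_1+\theta_3+\bar\theta_3}{\theta_1+\theta_2+\bar\theta_2},
\]
so the first alternative of the lemma matches case~(2) of the proposition, and after clearing denominators the trace condition $\tr_m(\tau\bar\tau)=0$ becomes the trace expression of case~(3).

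The delicate branch is $\nu_1=0$. When $\nu_2\neq 0$, the relation $\nu_3=\nu_2$ simplifies~(\ref{sum_equation}) to $\nu_2(\bar x+x)=0$, forcing $\bar x=x$; substituting into~(\ref{main_equation}) reduces it to $(\epsilon_1+\epsilon_3)(x^2+x)=0$, which has exactly two solutions in $\gf_{2^n}$ iff $\epsilon_1+\epsilon_3=1+a_1+a_2+a_3\neq 0$, giving case~(1). It remains to exclude the subcase $\nu_1=\nu_2=0$: writing each nonzero $x$ uniquely as $x=yz$ with $y\in\gf_{2^m}^*$ and $z\in\mu_{2^m+1}$, equation~(\ref{main_equation}) becomes the $\gf_{2^m}$-affine relation
\[
(\epsilon_1 z^{-2}+\epsilon_3 z^2)\,y=\epsilon_2 z^{-1}+\epsilon_4 z,
\]
and the identities $\nu_2=\nu_3=0$ imply that the ratio of right to left sides is $2^m$-stable, hence lies in $\gf_{2^m}$ whenever the denominator is nonzero. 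Consequently each such $z\in\mu_{2^m+1}$ produces a genuine solution, giving far more than two, so this subcase cannot occur under the APN hypothesis. Verifying this Frobenius-invariance is the main technical point; once it is in place, the three cases of the proposition follow by collecting the surviving conditions.
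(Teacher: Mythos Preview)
Your proposal is correct and follows essentially the same route as the paper's proof: specialise~(\ref{main_equation})--(\ref{nu}) at $a=1$, split on $\nu_1=0$ versus $\nu_1\neq 0$, in the latter case invoke \cite{li2020permutation} and Lemma~\ref{key_lemma}, and in the former case use $\nu_2=\nu_3$ to force $\bar x=x$ (when $\nu_2\neq 0$) and the polar decomposition $x=yz$ to exhibit excess solutions (when $\nu_2=0$). The only small point you leave implicit is the degenerate subcase $\epsilon_1=0$ within $\nu_1=\nu_2=0$ (so $\epsilon_3=0$ and $\epsilon_2=\epsilon_4$, whence the denominator $\epsilon_1 z^{-2}+\epsilon_3 z^2$ vanishes identically and your ratio argument does not apply); the paper handles this by observing~(\ref{main_equation}) then reads $\epsilon_2(\bar x+x)=0$, which already has $\geq 2^m$ solutions.
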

\subsection{The proof of Step 2).}

Next, we determine the condition on $(a_1,a_2,a_3)$ such that the equation (\ref{main_equation}) has only two solutions $x=0,1$ in $\gf_{2^n}$ for any $a\in\mu_{2^m+1}\backslash\{1\}$. In this subsection, we mainly consider the case $\nu_1=0$. Recall that $k\in\gf_{2^m}$ satisfies $\tr_{m}(k)=1$  and $\omega$ is a solution of $x^2+x+ k=0$ in $\gf_{2^n}$.  Moreover, for any $a\in\mu_{2^m+1}\backslash\{1\}$, there exists a unique element $A\in\gf_{2^m}$ such that $a^2=\frac{A+\omega}{A+\bar{\omega}}$. 
 
 Furthermore, we have 
\begin{eqnarray*}
\nu_1 &=& a^2\theta_2 + \theta_1 +\bar{a}^2\bar{\theta}_2 \\
&=& \frac{A+\omega}{A+\omega+1}\theta_2 +\theta_1 + \frac{A+\omega+1}{A+\omega} \bar{\theta}_2\\
&=&\frac{  \varphi_1 A^2 + \theta_1 A + \varphi_2}{A^2+A+ k}
\end{eqnarray*}
and 
\begin{eqnarray*}
\nu_2 & = &  \theta_4+\bar{a}^2\bar{\theta}_2 + \bar{a}^4\theta_3 \\
&=& \theta_4+\frac{A+\omega+1}{A+\omega}\bar{\theta}_2 + \frac{A^2+\omega^2+1}{A^2+\omega^2}\theta_3 \\
&=& \frac{\varphi_3 A^2 + \bar{\theta}_2 A + \varphi_4}{A^2+ k+\omega},
\end{eqnarray*}
where
\begin{equation}
\label{varphi}	
\left\{
\begin{array}{lr}
\varphi_1 = \theta_1 + \theta_2 + \bar{\theta}_2 \\
\varphi_2 =  (\theta_1+\theta_2 + \bar{\theta}_2) k + \omega\theta_2+ (\omega+1)\bar{\theta}_2 \\
\varphi_3 = \theta_3 + \bar{\theta}_2 + \theta_4 \\
\varphi_4 = (\theta_3 + \bar{\theta}_2 + \theta_4) k + \omega \theta_4 + (\omega+1)\theta_3. 
\end{array}
\right.
\end{equation} 

For the case $\nu_1=0$,  similar with the case $a=1$, the equation (\ref{main_equation}) has only two solutions in $\gf_{2^n}$ if and only if 
$$
\nu_2\neq0 ~~\text{and}~~ \epsilon_1+\epsilon_3\neq0,
$$
i.e., 
\begin{equation}
\label{con_case1} \nu_2\neq0 ~~\text{and}~~ f(a)\neq0
\end{equation}
for such $a$'s satisfying $\nu_1=0$. 

In the final of this subcase, we consider the number of $A\in\gf_{2^{m}}$ such that $\nu_1=0$.  
\begin{Prop}
	\label{nu1=0}
	Let all notations be defined as in the above discussion and $Z(\nu_1)$ be the number of $A\in\gf_{2^{m}}$ such that $\nu_1=0$. Then
	\begin{enumerate}[(1)]
		\item if $\varphi_1 =0$, $\theta_1=0$, $\theta_2=0$, then $Z(\nu_1)=2^m$;
		\item if $\varphi_1 =0$, $\theta_1=0$, $\theta_2\neq0$, then $Z(\nu_1)=0$;
		\item if $\varphi_1 =0$ and $\theta_1\neq0$, or $\varphi_1\neq0$ and $\theta_1=0$, then $Z(\nu_1)=1$;
		\item if $\varphi_1 \neq0$,  $\theta_1\neq0$,  $ \tr_{m} \left(\frac{\theta_2\bar{\theta}_2}{\theta_1^2}\right) = 0$, $Z(\nu_1)=0$;
		\item if $\varphi_1 \neq0$,  $\theta_1\neq0$,  $ \tr_{m} \left(\frac{\theta_2\bar{\theta}_2}{\theta_1^2}\right) = 1$, $Z(\nu_1)=2$.
	\end{enumerate}
\end{Prop}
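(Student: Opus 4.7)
The plan is to reduce counting $Z(\nu_1)$ to counting roots in $\gf_{2^m}$ of the quadratic $P(A) := \varphi_1 A^2 + \theta_1 A + \varphi_2$. Since $\tr_m(k) = 1$, the denominator $A^2 + A + k$ in the displayed formula for $\nu_1$ has no roots in $\gf_{2^m}$, so $\nu_1 = 0$ with $A \in \gf_{2^m}$ holds iff $P(A) = 0$. First I would verify that all three coefficients $\varphi_1, \theta_1, \varphi_2$ lie in $\gf_{2^m}$: for $\varphi_1 = \theta_1 + \theta_2 + \bar\theta_2$ this is immediate, and conjugating $\varphi_2 = \varphi_1 k + \omega\theta_2 + (\omega+1)\bar\theta_2$ via $\omega \leftrightarrow \omega+1$ shows $\overline{\varphi_2} = \varphi_2$. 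Consequently $P(A) \in \gf_{2^m}[A]$ and $Z(\nu_1)$ can be read off by case analysis on the leading coefficients.

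Cases (1)--(3) are then routine. If $\varphi_1 = \theta_1 = \theta_2 = 0$ then $\bar\theta_2 = 0$, hence $\varphi_2 = 0$, so $P \equiv 0$ and $Z(\nu_1) = 2^m$. If $\varphi_1 = \theta_1 = 0$ but $\theta_2 \neq 0$, the forced condition $\theta_2 + \bar\theta_2 = 0$ gives $\theta_2 \in \gf_{2^m}^*$ and then $\varphi_2 = \omega\theta_2 + (\omega+1)\theta_2 = \theta_2 \neq 0$, so $P$ is a nonzero constant. In case (3), $P$ is either a genuinely linear polynomial with unique root $\varphi_2/\theta_1 \in \gf_{2^m}$, or of the form $\varphi_1 A^2 + \varphi_2$ whose unique square root $(\varphi_2/\varphi_1)^{2^{m-1}}$ lies in the perfect field $\gf_{2^m}$. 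In all three subcases the stated count follows.

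The substantive cases (4) and (5) assume $\varphi_1, \theta_1 \neq 0$. Substituting $A = (\theta_1/\varphi_1) B$ converts $P(A) = 0$ into the Artin--Schreier equation $B^2 + B + c = 0$ with $c = \varphi_1\varphi_2/\theta_1^2 \in \gf_{2^m}$, which has exactly $2$ solutions in $\gf_{2^m}$ when $\tr_m(c) = 0$ and none otherwise. The proposition therefore reduces to establishing the identity
$$\tr_m\!\left(\frac{\varphi_1 \varphi_2}{\theta_1^2}\right) \;=\; 1 + \tr_m\!\left(\frac{\theta_2\bar\theta_2}{\theta_1^2}\right),$$
which matches $\tr_m(\varphi_1\varphi_2/\theta_1^2) = 1$ with $\tr_m(\theta_2\bar\theta_2/\theta_1^2)=0$ (giving $Z(\nu_1)=0$) and $\tr_m(\varphi_1\varphi_2/\theta_1^2) = 0$ with $\tr_m(\theta_2\bar\theta_2/\theta_1^2)=1$ (giving $Z(\nu_1)=2$).

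To prove this identity I would expand, using $(\theta_2 + \bar\theta_2)(\omega\theta_2 + (\omega+1)\bar\theta_2) = \omega\theta_2^2 + (\omega+1)\bar\theta_2^2 + \theta_2\bar\theta_2$, to obtain
$$\varphi_1 \varphi_2 = \varphi_1^2 k + \theta_1\bigl(\omega\theta_2 + (\omega+1)\bar\theta_2\bigr) + \omega\theta_2^2 + (\omega+1)\bar\theta_2^2 + \theta_2\bar\theta_2.$$
Dividing by $\theta_1^2 \in \gf_{2^m}^*$ and applying $\tr_m$, the $\omega$-dependent pieces assemble into expressions of the form $z + z^{2^m}$ with $z = \omega\theta_2/\theta_1$; using $\omega^2 = \omega + k$ together with $\tr_m(y^2) = \tr_m(y)$ on $\gf_{2^m}$, the terms $\tr_m(\varphi_1^2 k/\theta_1^2)$ and $\tr_m(k(\theta_2+\bar\theta_2)^2/\theta_1^2) = \tr_m(k^{1/2}(\varphi_1+\theta_1)/\theta_1)$ combine and cancel in pairs, leaving only the constant $\tr_m(k^{1/2}) = \tr_m(k) = 1$ together with $\tr_m(\theta_2\bar\theta_2/\theta_1^2)$. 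The main obstacle is precisely this bookkeeping: the apparent dependence on $\omega$ must cancel cleanly, and it is essential to track that the residual constant is $1$ (not $0$), as this is what separates cases (4) and (5). Once the identity is in hand, all five statements of the proposition follow at once.
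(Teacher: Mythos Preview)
Your proposal is correct and follows essentially the same route as the paper: both reduce to counting $\gf_{2^m}$-roots of $\varphi_1 A^2+\theta_1 A+\varphi_2$, treat cases (1)--(3) as routine, and settle (4)--(5) via the key identity $\tr_m(\varphi_1\varphi_2/\theta_1^2)=1+\tr_m(\theta_2\bar\theta_2/\theta_1^2)$. The paper's bookkeeping for that identity is slightly cleaner than your sketch: it observes that with $u:=\omega\theta_2+(\omega+1)\bar\theta_2\in\gf_{2^m}$ one has $\varphi_1\varphi_2/\theta_1^2 = k + u/\theta_1 + (u/\theta_1)^2 + \theta_2\bar\theta_2/\theta_1^2$, so the middle two terms have trace zero directly---you may find this regrouping tidier than the $k^{1/2}$ manipulations you outlined.
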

\begin{proof}
The proof is obvious and it suffices to show that  $\tr_{m}\left( \frac{\varphi_1\varphi_2}{\theta_1^2} \right) = \tr_{m} \left(\frac{\theta_2\bar{\theta}_2}{\theta_1^2}\right)+1$ if $\varphi_1\neq0$ and $\theta_1\neq0$, which holds since 
\begin{eqnarray*}
\tr_{m}\left( \frac{\varphi_1\varphi_2}{\theta_1^2} \right)&= & \tr_{m}\left( \frac{(\theta_1+\theta_2+\bar{\theta}_2)((\theta_1+\theta_2+\bar{\theta}_2)k + \omega \theta_2 + \omega\bar{\theta}_2+\bar{\theta}_2)}{\theta_1^2} \right) \\
&=& \tr_{m}\left(k\right)+\tr_{m}\left(\frac{\theta_2\bar{\theta}_2}{\theta_1^2}\right) + \tr_{m}\left( \frac{(\theta_2^2 + \bar{\theta}_2^2)k+\omega\theta_1\theta_2+(\omega+1)\theta_1\bar{\theta}_2 + \omega\theta_2^2 + (\omega+1)\bar{\theta}_2^2}{\theta_1^2} \right)\\
&=& 1 +\tr_{m}\left(\frac{\theta_2\bar{\theta}_2}{\theta_1^2}\right) + \tr_{m}\left( \frac{(\omega\theta_2+(\omega+1)\bar{\theta}_2)^2+\theta_1(\omega\theta_2+(\omega+1)\bar{\theta}_2)}{\theta_1^2}  \right) \\
&=&\tr_{m}\left(\frac{\theta_2\bar{\theta}_2}{\theta_1^2}\right) +1.
\end{eqnarray*}
\end{proof}

%

\subsection{The proof of Step 3).}
For the case $\nu_1\neq0$,  also similar with the case $a=1$, (\ref{main_equation}) has only two solutions in $\gf_{2^n}$ if and only if $$1+\frac{\nu_2}{\nu_1}+\frac{\bar{\nu}_2}{\nu_1}=0 ~~\text{or}~~  1+\frac{\nu_2}{\nu_1}+\frac{\bar{\nu}_2}{\nu_1}\neq0 ~~\text{and}~~ \tr_{m}\left(\frac{\nu_2\bar{\nu}_2}{\nu_1^2}\right) = 0.$$ 
By computing directly and simplifying,  $1+\frac{\nu_2}{\nu_1}+\frac{\bar{\nu}_2}{\nu_1}=0$ if and only if
\begin{equation}
\label{A_equation}
(\theta_1+\theta_3+\bar{\theta}_3)A^4 + \theta_1A^2 + (\theta_1+\theta_3+\bar{\theta}_3)k^2 + (\theta_3+\bar{\theta}_3) k + \omega\theta_3+\theta_3+ \omega \bar{\theta}_3 =0.
\end{equation}
Then, $\tr_{m}\left(\frac{\nu_2\bar{\nu}_2}{\nu_1^2}\right) = 0$ always holds for  all $A \in \gf_{2^m}$ but several $A$'s satisfying $\nu_1=0$, i.e., $\varphi_1 A^2 + \theta_1 A + \varphi_2 =0$ or (\ref{A_equation}) holds, whose number is at most $4$. Namely, the number of $A$'s such that $\tr_{m}\left(\frac{\nu_2\bar{\nu}_2}{\nu_1^2}\right) = 0$ holds is at least $(2^m-4)$.

 Moreover, by simplifying, 
\begin{equation}
 \tr_{m}\left(\frac{\nu_2\bar{\nu}_2}{\nu_1^2}\right) = \tr_{m}\left(\frac{L_1(A)}{L_2(A)^2}\right),
\end{equation} 
where $L_1(A) = l_{11}A^4+l_{12}A^3 + l_{13} A^2 + l_{14}A + l_{15}$,
	\begin{equation}
\label{L_1}	
\left\{
\begin{array}{lr}
l_{11} = \varphi_3\bar{\varphi}_3  \\ 
l_{12} = \theta_2\varphi_3 + \bar{\theta}_2\bar{\varphi}_3 \\
l_{13} = \theta_2\bar{\theta}_2 + \varphi_3\bar{\varphi}_4 + \bar{\varphi}_3\varphi_4 \\
l_{14} = \theta_2\varphi_4 + \bar{\theta}_2\bar{\varphi}_4 \\
l_{15} = \varphi_4\bar{\varphi}_4,
\end{array}
\right.
\end{equation} 
 $L_2 (A) =  \varphi_1 A^2 + \theta_1A+ \varphi_2$ and $\varphi_i$'s with $i=1,2,3,4$ are defined as in (\ref{varphi}). 

Let $L_1(Y) = l_{11}Y^4+l_{12}Y^3 + l_{13} Y^2 + l_{14}Y + l_{15}\in\gf_{2^m}[Y] $, $L_2(Y) = \varphi_1 A^2 + \theta_1A+ \varphi_2 \in\gf_{2^m}[Y],$
$$G(X,Y) = L_2(Y)^2(X^2+X)+L_1(Y)\in\gf_{2^m}[X,Y]$$
and 
$$V_m(G) = \{ (x,y)\in\gf_{2^m}^2  ~|~ G(x,y) = 0 \}.$$
Then $\deg G =6$. Since the number of $A$'s such that $\tr_{m}\left(\frac{L_1(A)}{L_2(A)^2} \right) = 0$ holds is at least $(2^m-4)$,   
\begin{equation}
\label{V_n}
\#V_m(G)\ge 2 (2^m-4).
\end{equation}

	If $G(X,Y)$ is  irreducible over $\overline{\gf}_{2^m}$, let $ \mathbb{X}, \mathbb{Y} $ be transcendentals over $\gf_{2^m}$ with $G(\mathbb{X}, \mathbb{Y} ) = 0$. Then by Lemma \ref{genus-bound}, the function field $\gf_{2^m}(\mathbb{X}, \mathbb{Y} )/ \gf_{2^m}$ has genus
$$  g\le ([\gf_{2^m}(\mathbb{X}, \mathbb{Y} ): \gf_{2^m}(\mathbb{X})] -1 )([\gf_{2^m}(\mathbb{X}, \mathbb{Y} ): \gf_{2^m}(\mathbb{Y})] -1) \le (4-1)(2-1)=3. $$
Then by the Hasse-Weil bound, i.e., Lemma \ref{Hasse-weil}, we have 
$$  \# V_m(G) \le 2^m+1+2g2^{m/2}\le 2^m+1+ 6\cdot2^{m/2}<  2\left( 2^m-4 \right), $$
when $ m > 5$, which is contradictory with (\ref{V_n}). 

Therefore, $G(X,Y)$ is not irreducible over $\overline{\gf}_{2^m}$ and we assume that $G(X,Y) =  sG_1(X,Y)G_2(X,Y),$ where $s\in\gf_{2^m}^{*}$, $G_1, G_2 \in \overline{\gf}_{2^m}[X,Y]$ are irreducible and $\deg_X G_1 = \deg_X G_2 = 1.$ If $G_1\notin \gf_{2^m}[X,Y]$, choose $\sigma\in\mathrm{Aut} (\overline{\gf}_{2^m}/\gf_{2^m})$ such that $\sigma(G_1)\neq G_1$. Then $\sigma(G_1)=G_2$. Assume that $(x,y)\in V_{m}(G)$. Then $(x,y)\in V_{m}(G_1)$ or $V_{m}(G_2)$, say $(x,y)\in V_{m}(G_1).$ Then $(x,y)=(\sigma(x),\sigma(y))\in V_{m}(\sigma (G_1))$. Hence $(x,y)\in V_{m}(G_1)\cap V_{m}(\sigma (G_1))$ and we have 
$$ V_m(G) \subset V_m(G_1)\cap V_m(\sigma(G_1)). $$
Thanks to B\'{e}zout's Theorem, i.e., Lemma \ref{Btheorem},
$$ \#V_m(G)\le \# \left( V_m(G_1) \cap V_m(\sigma(G_1))  \right) \le (\deg G_1)^2 \le 9,$$
which is also contradictory with (\ref{V_n}). Thus $G_1, G_2\in\gf_{2^m}[X,Y]$. Namely, there exists some $L(Y)\in\gf_{2^m}[Y]$ such that 
$$X^2+X+\frac{L_1(Y)}{L_2(Y)^2} = \left(X+ \frac{L(Y)}{L_2(Y)} \right)   \left( X+ 1 + \frac{L(Y)}{L_2(Y)} \right). $$
Hence, 
\begin{equation}
\label{LY}
L(Y)L_2(Y)+L(Y)^2 = L_1(Y).
\end{equation}

\subsection{The proof of Step 4).}

Clearly, from (\ref{LY}), $\deg L = 2$ and we assume that $L(Y) = l_1 Y^2+l_2Y+l_3\in\gf_{2^m}[Y]$. After comparing the coefficients of two sides of (\ref{LY}) and together with (\ref{L_1}), we have 
	\begin{subequations} 
		\label{D}
	\renewcommand\theequation{\theparentequation.\arabic{equation}}     
	\begin{empheq}[left={\empheqlbrace\,}]{align}
	&~ D_4:  l_1^2 + \varphi_1 l_1 = \varphi_3\bar{\varphi}_3 \label{D4} \\
	&~ D_3:  \theta_1 l_1 + \varphi_1 l_2 = \theta_2\varphi_3 + \bar{\theta}_2\bar{\varphi}_3 \label{D3}\\
	&~ D_2:  \varphi_2 l_1+ \theta_1 l_2 +  \varphi_1 l_3+l_2^2 = \theta_2\bar{\theta}_2 + \varphi_3\bar{\varphi}_4 + \bar{\varphi}_3\varphi_4 \label{D2} \\
	&~ D_{1}:     \varphi_2 l_2+\theta_1 l_3 = \theta_2\varphi_4 + \bar{\theta}_2\bar{\varphi}_4 \label{D1}  \\
	&~ D_{0}:  l_3^2 + \varphi_2 l_3 = \varphi_4\bar{\varphi}_4. \label{D0} 
	\end{empheq}
\end{subequations}
In the above equation system, $D_i$'s  denote that the equation in the same row  is from comparing the coefficient of degree $i$, where $i=0,1,2,3,4$. Thus for $\nu_1\neq0$, (\ref{main_equation}) has only two solutions in $\gf_{2^{n}}$ if and only if (\ref{D}) has solutions for $l_i\in\gf_{2^{m}}$ with $i=1,2,3$. Next, we divide this part of the proof into four cases: (D.1) $\varphi_1\varphi_2\neq0$; (D.2) $\varphi_1\neq0$ and $\varphi_2=0$; (D.3) $\varphi_1=0$ and $\varphi_2\neq 0$; (D.4) $\varphi_1=0$ and $\varphi_2=0$. For convenience, we summary the  four cases firstly as follows.

\begin{Prop}
	\label{four cases}
\begin{enumerate}[(D.1)]
	\item If $\varphi_1\varphi_2\neq0$, (\ref{D}) has solutions if and only if 
	$ \theta_1\neq0, (\theta_1^2\theta_3+ \theta_1 \bar{\theta}_2^2 + \theta_2^2\theta_3+\bar{\theta}_2^2\bar{\theta}_3)(\theta_1^2\theta_4 + \theta_1\theta_2\bar{\theta}_2 + \theta_2^2\theta_3 + \bar{\theta}_2^2\bar{\theta}_3)=0, \tr_{m}\left( \frac{\varphi_3\bar{\varphi}_3}{\varphi_1^2} \right) = 0. $
	\item If $\varphi_1\neq0$ and $\varphi_2=0$, then (\ref{D}) has solutions if and only if 
	$m$ is odd, $(\theta_{21}+\theta_{31})k^3+\theta_{22}k^2 + (\theta_{4}+\theta_{21})k + \theta_{21}+\theta_{22} =0$,  $(\theta_{21} + \theta_{31} ) k^2 + (\theta_{21}  + \theta_{22}  +  \theta_{31}) k + \theta_{32} = 0$ and $ (\theta_{21}^2  + \theta_{21} \theta_{31}) k + \theta_{21} \theta_{31} + \theta_{22} \theta_{31} =0.$
	\item If $\varphi_1=0$ and $\varphi_2\neq 0$, then (\ref{D}) does not have solutions.
	\item If $\varphi_1=0$ and $\varphi_2=0$, then (\ref{D}) has solutions if and only if $m$ is odd and $\theta_{4}=\theta_{32}+\theta_{21}$. 
\end{enumerate}	
\end{Prop}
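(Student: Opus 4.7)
The linear system \eqref{D} consists of three equations linear in $l_1, l_2, l_3$ (namely $D_3, D_2, D_1$) together with two pure quadratic equations ($D_4$ involving only $l_1$ and $D_0$ involving only $l_3$). The plan is to exploit this block structure: the coefficients $\varphi_1$ and $\varphi_2$ appear as the linear-term coefficients in $D_4$ and $D_0$, so their vanishing or not governs whether those equations yield generic affine-quadratic conditions solvable via a trace test, or degenerate to pure squarings with a forced unique root in $\gf_{2^m}$. Throughout, the defining identity \eqref{theta_equation} will serve as the main algebraic tool for reducing the resulting compatibility conditions to the stated forms.

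For Case (D.1) with $\varphi_1\varphi_2 \neq 0$, I would first solve $D_4$: it has a root in $\gf_{2^m}$ iff $\tr_m(\varphi_3\bar{\varphi}_3/\varphi_1^2)=0$, and in that case yields two candidate values, say $l_1$ and $l_1+\varphi_1$. For each candidate, $D_3$ determines $l_2$ uniquely (since $\varphi_1 \neq 0$) and, provided $\theta_1 \neq 0$, $D_1$ then determines $l_3$ uniquely. It remains to check compatibility with $D_2$ and $D_0$. A direct computation using \eqref{theta_equation} should show that $D_0$ is automatically satisfied when $D_4, D_3, D_1$ hold, leaving a single compatibility condition from $D_2$. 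Substituting the two candidate values of $l_1$ into this condition yields two polynomial constraints in the $\theta_i$'s; I expect these to be exactly $\theta_1^2\theta_4 + \theta_1\theta_2\bar{\theta}_2 + \theta_2^2\theta_3 + \bar{\theta}_2^2\bar{\theta}_3 = 0$ and $\theta_1^2\theta_3 + \theta_1\bar{\theta}_2^2 + \theta_2^2\theta_3 + \bar{\theta}_2^2\bar{\theta}_3 = 0$, giving the stated product formulation. A separate subcase analysis is needed for $\theta_1 = 0$: then $D_1$ becomes a constraint rather than a definition of $l_3$, and one should derive an incompatibility with $D_3$ using \eqref{theta_equation}, thereby justifying the hypothesis $\theta_1 \neq 0$ in the conclusion.

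For Case (D.3) with $\varphi_1 = 0, \varphi_2 \neq 0$, the equation $D_4$ reduces to $l_1^2 = \varphi_3\bar{\varphi}_3$, which uniquely determines $l_1 \in \gf_{2^m}$ since squaring is bijective in characteristic two. Then $D_3$ becomes $\theta_1 l_1 = \theta_2\varphi_3 + \bar{\theta}_2\bar{\varphi}_3$, an over-determined constraint with no free unknowns; combined with $\varphi_1 = 0$ (equivalently $\theta_1 = \theta_2 + \bar{\theta}_2$) and \eqref{theta_equation}, a short computation should yield a contradiction, establishing nonexistence of solutions. For Cases (D.2) and (D.4), symmetric reductions apply: $D_0$ (and in (D.4) also $D_4$) yields a unique square root for $l_3$ (and $l_1$), and the compatibility of the remaining equations must be checked by decomposing each $\theta_i$ in the basis $\{1,\omega\}$ of $\gf_{2^n}/\gf_{2^m}$ and separating the $\gf_{2^m}$-components. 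The parity condition that $m$ must be odd arises naturally: the system ultimately imposes a polynomial relation in $k$ whose compatibility with $\tr_m(k)=1$ is possible only when $m$ is odd.

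The main obstacle I foresee is the algebraic verification in Case (D.1) that the two choices of $l_1$ from $D_4$ lead to exactly the two factors of the stated product condition; this requires careful manipulation with \eqref{theta_equation} to eliminate the $\theta_4^2$ terms that arise when substituting into $D_2$. The coordinate-wise computations in Cases (D.2) and (D.4), while tedious, are routine once one systematically expands $\varphi_3, \varphi_4$ via \eqref{varphi} and tracks the $1$- and $\omega$-components separately; the emergence of the parity restriction on $m$ is then traced back to a specific solvability criterion for a quadratic in $k$.
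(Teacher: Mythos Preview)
Your overall architecture is reasonable, but there is a genuine gap in how you handle Case (D.3) and the $\theta_1=0$ subcase of (D.1). You expect that in (D.3) the equations $D_4$ and $D_3$ together with $\varphi_1=0$ and \eqref{theta_equation} will already produce a contradiction after a ``short computation.'' They do not. Eliminating $l_1$ between $D_4$ and $D_3$ yields the factored condition
\[
(\theta_2^2\varphi_3+\bar{\theta}_2^2\bar{\varphi}_3)(\theta_2+\bar{\theta}_2+\theta_3+\bar{\theta}_3)=0,
\]
which is perfectly satisfiable; one must then bring in $D_0,D_1,D_2$ to obtain a second compatibility relation, and even then no purely algebraic inconsistency appears. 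The paper's actual obstruction is of a different nature: under all of these compatibility constraints one proves that $\tr_m(\varphi_4\bar{\varphi}_4/\varphi_2^2)=1$, so that $D_0$ (which requires this trace to vanish) has no root in $\gf_{2^m}$. This trace computation is not short --- it occupies an appendix section and splits into several subcases depending on which factor of the compatibility relations vanishes. The same phenomenon occurs in the $\theta_1=0$ subcase of (D.1): the equations $D_3$ and $D_1$ are mutually consistent (they give $\theta_2^2\theta_3+\bar{\theta}_2^2\bar{\theta}_3=0$), and the nonexistence of solutions is again forced by showing $\tr_m(\varphi_3\bar{\varphi}_3/\varphi_1^2)=1$ via a separate lemma.

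In short, where you anticipate direct algebraic contradictions, the system is in fact algebraically consistent, and the nonexistence of solutions in $\gf_{2^m}$ (rather than in $\overline{\gf}_{2^m}$) is what fails --- detected only through delicate trace identities. Your plan for (D.2) and (D.4) is closer in spirit to the paper's, but you should be aware that the parity restriction on $m$ likewise emerges from an explicit evaluation $T=1+\tr_m(1)$ after lengthy case analysis, not from solvability of a quadratic in $k$; the computations (again relegated to an appendix) are the bulk of the work and are considerably more involved than ``routine.''
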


Next, we give the discussions of all four cases. 

{\bfseries Case D.1} If $\varphi_1\varphi_2\neq0$, then $l_1$ and $l_3$ are solutions of (\ref{D4}) and (\ref{D0}), respectively and thus $$ \tr_m\left( \frac{\varphi_3\bar{\varphi}_3}{\varphi_1^2}\right) = 0 ~\text{and}~ \tr_m\left( \frac{ \varphi_4\bar{\varphi}_4}{\varphi_2^2} \right) =0. $$  
We assume that $\frac{\varphi_3\bar{\varphi}_3}{\varphi_1^2}=\xi^2+\xi$ and $\frac{ \varphi_4\bar{\varphi}_4}{\varphi_2^2} = \eta^2+\eta, $ where $\xi,\eta\in\gf_{2^m}$. Then W.L.O.G., we have $l_1=\varphi_1 \xi$ and $l_3 = \varphi_2 \eta$. Moreover, from (\ref{D3}) and (\ref{D1}), we have 
$$l_2 = \frac{\theta_2\varphi_3+\bar{\theta}_2\bar{\varphi}_3}{\varphi_1} + \theta_1\xi $$
and $$l_2 = \frac{\theta_2\varphi_4+\bar{\theta}_2\bar{\varphi}_4}{\varphi_2} + \theta_1\eta,$$
respectively.

 If $\theta_1\neq0$,  together with the above two expressions of $l_2$ and vanishing $l_2$, we obtain 
\begin{equation}
\label{xieta}
\xi+\eta = \frac{\theta_2\varphi_2\varphi_3 + \bar{\theta}_2\varphi_2\bar{\varphi}_3+\theta_2\varphi_1\varphi_4 + \bar{\theta}_2\varphi_1\bar{\varphi}_4}{\theta_1\varphi_1\varphi_2}
\end{equation}
Moreover, it is clear that 
\begin{equation}
\label{xieta2} \xi^2+\xi+\eta^2+\eta = \frac{\varphi_3\bar{\varphi}_3}{\varphi_1^2} + \frac{ \varphi_4\bar{\varphi}_4}{\varphi_2^2} =\frac{\varphi_2^2\varphi_3\bar{\varphi}_3+\varphi_1^2\varphi_4\bar{\varphi}_4}{\varphi_1^2\varphi_2^2}.
\end{equation}
Together with (\ref{xieta}) and (\ref{xieta2}), we get
\begin{equation}
\label{con1}
\theta_1^2\theta_4 + \theta_1\theta_2\bar{\theta}_2 + \theta_2^2\theta_3 + \bar{\theta}_2^2\bar{\theta}_3 = 0
\end{equation}
or 
\begin{equation}
\label{con2}
\left( \theta_1^2\theta_3+\theta_1^2\bar{\theta}_3 + \theta_1\theta_2^2 + \theta_1 \bar{\theta}_2^2 \right)\omega^2 = \theta_1^2\theta_3+ \theta_1 \bar{\theta}_2^2 + \theta_2^2\theta_3+\bar{\theta}_2^2\bar{\theta}_3.
\end{equation}

Furthermore, plugging  $l_1=\varphi_1 \xi$, $l_2 = \frac{\theta_2\varphi_3+\bar{\theta}_2\bar{\varphi}_3}{\varphi_1} + \theta_1\xi$ and $l_3 = \varphi_2 \eta$ into (\ref{D2}) and simplifying, we obtain 
\begin{equation}
\theta_1\theta_3+\theta_1\bar{\theta}_3 + \theta_2^2 + \bar{\theta}_2^2=0
\end{equation}
or (\ref{con1}) holds. 

In addition, it is clear that when
$\varphi_1\varphi_2\neq0$, $\theta_1\neq0$  and $ (\theta_1^2\theta_3+ \theta_1 \bar{\theta}_2^2 + \theta_2^2\theta_3+\bar{\theta}_2^2\bar{\theta}_3)(\theta_1^2\theta_4 + \theta_1\theta_2\bar{\theta}_2 + \theta_2^2\theta_3 + \bar{\theta}_2^2\bar{\theta}_3)=0  $, $\frac{\varphi_3\bar{\varphi}_3}{\varphi_1^2} + \frac{ \varphi_4\bar{\varphi}_4}{\varphi_2^2} = \xi^2+\xi+\eta^2+\eta $ and thus
$$\tr_{m}\left(\frac{\varphi_3\bar{\varphi}_3}{\varphi_1^2}\right) = \tr_{m}\left( \frac{\varphi_4\bar{\varphi}_4}{\varphi_2^2} \right).$$

Therefore, if $\varphi_1\varphi_2\neq0$ and $\theta_1\neq0$, (\ref{D}) has solutions if and only if $$ (\theta_1^2\theta_3+ \theta_1 \bar{\theta}_2^2 + \theta_2^2\theta_3+\bar{\theta}_2^2\bar{\theta}_3)(\theta_1^2\theta_4 + \theta_1\theta_2\bar{\theta}_2 + \theta_2^2\theta_3 + \bar{\theta}_2^2\bar{\theta}_3)=0~~ \text{and}~~ \tr_{m}\left(\frac{\varphi_3\bar{\varphi}_3}{\varphi_1^2}\right) = 0. $$

If $\theta_1=0$, from (\ref{D3}) and (\ref{D1}), we have $l_2=\frac{\theta_2\varphi_3+\bar{\theta}_2\bar{\varphi}_3}{\varphi_1}$ and $l_2 = \frac{\theta_2\varphi_4+\bar{\theta}_2\bar{\varphi}_4}{\varphi_2}$, respectively. Thus we have 
$$\frac{\theta_2\varphi_3+\bar{\theta}_2\bar{\varphi}_3}{\varphi_1} = \frac{\theta_2\varphi_4+\bar{\theta}_2\bar{\varphi}_4}{\varphi_2},$$
i.e.,
\begin{equation}
\label{con3}
\theta_2^2\theta_3 + \bar{\theta}_2^2\bar{\theta}_3 = 0.
\end{equation}
The next lemma shows that if $\theta_1=0$ and (\ref{con3}) holds, $\tr_{m}\left(\frac{\varphi_3\bar{\varphi}_3}{\varphi_1^2}\right)=1$ and thus (\ref{D}) does not have solutions.
\begin{Lemma}
	If $\theta_1+\theta_2+\bar{\theta}_2\neq0$, $\theta_1=0$ and $\theta_2^2\theta_3 + \bar{\theta}_2^2\bar{\theta}_3 = 0.$ Then 
	$$T 	\triangleq \tr_{m}\left(\frac{\varphi_3\bar{\varphi}_3}{\varphi_1^2}\right) =1. $$
\end{Lemma}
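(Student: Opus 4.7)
My plan is to rewrite $T$ as a sum of two traces, one of which equals $1$ by the standard irreducibility criterion for quadratics over $\gf_{2^m}$ and the other equals $0$ because it has the shape $\tr_m(x+x^2)$ with $x\in\gf_{2^m}$. The preparatory work is to exploit the hypothesis $\theta_2^2\theta_3+\bar\theta_2^2\bar\theta_3=0$ together with the universal identity \eqref{theta_equation} to extract closed-form expressions for $\theta_3$ and $\bar\theta_3$.

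First I would rule out degenerate cases. If $\theta_2=0$ then $\bar\theta_2=0$, so $\varphi_1=\theta_1+\theta_2+\bar\theta_2=0$, contrary to hypothesis. If $\theta_3=0$, then $\bar a_2=a_1\bar a_3$ forces $\theta_2=a_1(1+a_3\bar a_3)\in\gf_{2^m}$, hence $s:=\theta_2+\bar\theta_2=0$, again a contradiction. Since $s=\varphi_1\neq 0$, the element $\theta_2$ lies in $\gf_{2^n}\setminus\gf_{2^m}$, so the quadratic $X^2+sX+p$ with $p:=\theta_2\bar\theta_2\in\gf_{2^m}$ is irreducible over $\gf_{2^m}$, equivalently $\tr_m(p/s^2)=1$.

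Next, set $r=\theta_2\theta_3$ and $u=r+\bar r$. The hypothesis rewrites as $\theta_2 r=\bar\theta_2\bar r$, which together with $r+\bar r=u$ yields $r=\bar\theta_2 u/s$. Substituting $\theta_1=0$ into \eqref{theta_equation} gives $\theta_3\bar\theta_3=p+\theta_4^2$, while $\theta_3\bar\theta_3=r\bar r/p=u^2/s^2$, so $u^2=s^2(p+\theta_4^2)$ and therefore $u=s(\sqrt p+\theta_4)$, where $\sqrt p\in\gf_{2^m}$ is the unique square root. Consequently $\theta_3=\bar\theta_2(\sqrt p+\theta_4)/\theta_2$ and, by conjugation, $\bar\theta_3=\theta_2(\sqrt p+\theta_4)/\bar\theta_2$.

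Finally I would substitute these formulas into $\varphi_3\bar\varphi_3=(\theta_3+\bar\theta_2+\theta_4)(\bar\theta_3+\theta_2+\theta_4)$ and divide by $\varphi_1^2=s^2$. A direct expansion, using $\theta_2+\bar\theta_2=s$ and $\theta_2\bar\theta_2=p$, collapses to
$$
\frac{\varphi_3\bar\varphi_3}{\varphi_1^2}=\frac{\sqrt p}{s}+\frac{\theta_4}{\sqrt p}+\left(\frac{\theta_4}{\sqrt p}\right)^{\!2}.
$$
The last two terms have the form $x+x^2$ with $x=\theta_4/\sqrt p\in\gf_{2^m}$, so their trace vanishes, while $\tr_m(\sqrt p/s)=\tr_m(p/s^2)=1$ by the first step. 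Hence $T=1$. The main obstacle is producing the clean identity $u/s=\sqrt p+\theta_4$: this is where the hypothesis $\theta_2^2\theta_3+\bar\theta_2^2\bar\theta_3=0$ and the universal relation \eqref{theta_equation} must be combined, and it is precisely what makes the cross-terms $\theta_4(\theta_3+\bar\theta_3)$ cancel against the contributions from $\theta_2\theta_3+\bar\theta_2\bar\theta_3$ in the expansion above.
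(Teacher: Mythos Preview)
Your proof is correct. The key identity
\[
\frac{\varphi_3\bar\varphi_3}{\varphi_1^2}=\frac{\sqrt p}{s}+\frac{\theta_4}{\sqrt p}+\Bigl(\frac{\theta_4}{\sqrt p}\Bigr)^{2}
\]
does hold: writing $\lambda=\sqrt p+\theta_4$ and using your closed forms $\theta_3=\bar\theta_2\lambda/\theta_2$, $\bar\theta_3=\theta_2\lambda/\bar\theta_2$, one finds $\theta_3\bar\theta_3=p+\theta_4^2$, $\theta_2\theta_3+\bar\theta_2\bar\theta_3=s\lambda$, $\theta_3+\bar\theta_3=\lambda s^2/p$, and the expansion of $(\theta_3+\bar\theta_2+\theta_4)(\bar\theta_3+\theta_2+\theta_4)/s^2$ indeed collapses to the displayed expression after the two $\theta_4/s$ terms cancel.

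Your route and the paper's share the same pivot, namely $\tr_m\!\bigl(\theta_2\bar\theta_2/(\theta_2+\bar\theta_2)^2\bigr)=1$ from the irreducibility of $X^2+sX+p$, together with a recognition of an $x+x^2$ piece. The difference is organizational: the paper expands $T$ directly, substitutes $\bar\theta_3=\theta_2^2\theta_3/\bar\theta_2^2$ only along the way, and ends up splitting $T$ into \emph{three} trace summands, the third being $\tr_m(\theta_3\theta_4/\bar\theta_2^2)$, which then requires a separate argument via $\theta_4^2=\theta_2\bar\theta_2+\theta_3\bar\theta_3$. By first combining the hypothesis with \eqref{theta_equation} to get the explicit $\theta_3=\bar\theta_2(\sqrt p+\theta_4)/\theta_2$, you collapse everything into a single $x+x^2$ term plus $\sqrt p/s$; this is shorter and avoids the third trace computation entirely. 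The price is the introduction of $\sqrt p$, which is harmless here but would not be available outside characteristic $2$.
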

\begin{proof}
	Clearly, $\theta_2\neq0$ and $\bar{\theta}_3=\frac{\theta_2^2\theta_3}{\bar{\theta}_2^2}$, which will be used sometimes in the following. From (\ref{theta_equation}), $\theta_2\bar{\theta}_2+\theta_3\bar{\theta}_3=\theta_4^2$. By computing directly, we have
	\begin{eqnarray*}
T 	&=&  \tr_{m}\left(\frac{(\bar{\theta}_2+\theta_3+\theta_4)({\theta}_2+\bar{\theta}_3+\theta_4)}{(\theta_1+\theta_2 + \bar{\theta}_2)^2}\right) \\
	&=& \tr_{m}\left(\frac{(\theta_2\bar{\theta}_2+\theta_3\bar{\theta}_3+\theta_4^2) + \theta_4(\theta_2+\bar{\theta}_2) + \theta_2\theta_3 + \bar{\theta}_2\bar{\theta}_3 +\theta_3\theta_4 + \bar{\theta}_3\theta_4}{(\theta_2 + \bar{\theta}_2)^2}\right)\\
	&=& \tr_{m}\left( \frac{\theta_4^2 + \theta_2\theta_3 + \bar{\theta}_2\bar{\theta}_3 +\theta_3\theta_4 + \bar{\theta}_3\theta_4}{(\theta_2 + \bar{\theta}_2)^2} \right),
	\end{eqnarray*}
where the last equality holds thanks to $\tr_{m}\left( \frac{\theta_4(\theta_2+\bar{\theta}_2)}{(\theta_2 + \bar{\theta}_2)^2} \right) = \tr_{m}\left( \frac{\theta_4^2}{(\theta_2 + \bar{\theta}_2)^2} \right)$ and $\theta_2\bar{\theta}_2+\theta_3\bar{\theta}_3=\theta_4^2$. Plugging $\bar{\theta}_3=\frac{\theta_2^2\theta_3}{\bar{\theta}_2^2}$ into the expression of $T$ and continuing simplifying, we get
\begin{eqnarray*}
T &=& \tr_{m}\left(  \frac{ \theta_2^2\bar{\theta}_2\theta_3 +\theta_2^2\theta_3\theta_4 +\bar{\theta}_2^2\theta_4^2 + \theta_2\bar{\theta}_2^2\theta_3 + \bar{\theta}_2^2\theta_3\theta_4 }{ \bar{\theta}_2^2 (\theta_2 + \bar{\theta}_2)^2}  \right) \\
&=&\tr_{m}\left( \frac{ \theta_2^2\bar{\theta}_2\theta_3 +\theta_2^2\theta_3\theta_4 +\theta_2\bar{\theta}_2^3 +\bar{\theta}_2^2\theta_3\bar{\theta}_3 + \theta_2\bar{\theta}_2^2\theta_3 + \bar{\theta}_2^2\theta_3\theta_4 }{ \bar{\theta}_2^2 (\theta_2 + \bar{\theta}_2)^2} \right)\\
&=&\tr_{m}\left(\frac{\theta_2\bar{\theta}_2}{ (\theta_2 + \bar{\theta}_2)^2}\right) + \tr_{m}\left( \frac{\theta_2\theta_3\bar{\theta}_2(\theta_2+\bar{\theta}_2) + \theta_2^2\theta_3^2}{\bar{\theta}_2^2 (\theta_2 + \bar{\theta}_2)^2}\right) + \tr_{m}\left(\frac{\theta_3\theta_4}{\bar{\theta}_2^2}\right).
\end{eqnarray*}
Moreover, since $\frac{\theta_2}{\theta_2 + \bar{\theta}_2}\in\gf_{2^n}\backslash\gf_{2^m}$ and $\frac{\theta_2\theta_3}{ \bar{\theta}_2(\theta_2 + \bar{\theta}_2)}\in\gf_{2^m}$, which can be proved by computing directly, we have $$\tr_{m}\left( \frac{\theta_2\bar{\theta}_2}{ (\theta_2 + \bar{\theta}_2)^2} \right) = \tr_{m}\left( \frac{\theta_2}{\theta_2 + \bar{\theta}_2}+\frac{\theta_2^2}{(\theta_2 + \bar{\theta}_2)^2}\right)=1$$
and 
$$\tr_{m}\left( \frac{\theta_2\theta_3\bar{\theta}_2(\theta_2+\bar{\theta}_2) + \theta_2^2\theta_3^2}{\bar{\theta}_2^2 (\theta_2 + \bar{\theta}_2)^2}\right) = \tr_{m}\left(\frac{\theta_2\theta_3}{ \bar{\theta}_2(\theta_2 + \bar{\theta}_2)} + \frac{\theta_2^2\theta_3^2}{ \bar{\theta}_2^2(\theta_2 + \bar{\theta}_2)^2}  \right) = 0.$$
Furthermore, 
$$\tr_{m}\left(\frac{\theta_3\theta_4}{\bar{\theta}_2^2}  \right) =\tr_{m}\left( \frac{\theta_3\bar{\theta}_3\theta_4^2}{\theta_2^2\bar{\theta}_2^2} \right) =\tr_{m} \left( \frac{\theta_3\bar{\theta}_3(\theta_2\bar{\theta}_2+\theta_3\bar{\theta}_3)}{\theta_2^2\bar{\theta}_2^2} \right) = 0.$$
	Thus $T = 1+0+0 = 1$. The proof is complete.
\end{proof}

All in all, in this case $\varphi_1\varphi_2\neq0$, (\ref{D}) has solutions if and only if 
	$$ \theta_1\neq0, ~(\theta_1^2\theta_3+ \theta_1 \bar{\theta}_2^2 + \theta_2^2\theta_3+\bar{\theta}_2^2\bar{\theta}_3)(\theta_1^2\theta_4 + \theta_1\theta_2\bar{\theta}_2 + \theta_2^2\theta_3 + \bar{\theta}_2^2\bar{\theta}_3)=0,~ \tr_{m}\left( \frac{\varphi_3\bar{\varphi}_3}{\varphi_1^2} \right) = 0. $$

{\bfseries Case D.2} If $\varphi_1\neq0$ and $\varphi_2=0$, then 
\begin{equation}
\label{p2=0}
(\theta_1+\theta_2 + \bar{\theta}_2) k + \omega\theta_2+ (\omega+1)\bar{\theta}_2=0.
\end{equation} We assume that $\theta_2 = \theta_{21}\omega + \theta_{22}$, $\theta_3 = \theta_{31}\omega + \theta_{32}$, where $\theta_{21},\theta_{22},\theta_{31},\theta_{32}\in\gf_{2^m}$. Then it is easy to obtain that $\bar{\theta}_2 = \theta_{21}\omega + \theta_{21} + \theta_{22} $, $ \bar{\theta}_3 = \theta_{31}\omega + \theta_{31} + \theta_{32}$,
$$ \theta_2 + \bar{\theta}_2 = \theta_{21}, ~~~~~~ \theta_3 + \bar{\theta}_3 = \theta_{31}, $$
and 
$$ \theta_2\bar{\theta}_2 = \theta_{21}^2k + \theta_{21}\theta_{22} + \theta_{22}^2, ~~~~~  \theta_3\bar{\theta}_3 = \theta_{31}^2k + \theta_{31}\theta_{32} + \theta_{32}^2.$$

Plugging $\theta_2 = \theta_{21}\omega + \theta_{22}$ into (\ref{p2=0}), we get 
\begin{equation}
\label{v2=0} (\theta_1+ \theta_{21}) k +  \theta_{21} + \theta_{22} = 0
\end{equation}
and thus  $k=\frac{\theta_{21} + \theta_{22} }{\theta_1+\theta_{21}}$ since  $\varphi_1= \theta_1+\theta_{21}\neq0$. Also plugging $\theta_2 = \theta_{21}\omega + \theta_{22}$ and  $\theta_3 = \theta_{31}\omega + \theta_{32}$ into (\ref{varphi}) and simplifying, we can obtain
\begin{equation}
\label{v134}\varphi_1 = \theta_1 + \theta_{21}, ~~ \varphi_3 = \varphi_{31} \omega + \varphi_{32}, ~~ \text{and} ~~  \varphi_4 = \varphi_{41} \omega + \varphi_{42},
\end{equation} 
where 
\begin{equation}
\label{p34}	
\left\{
\begin{array}{lr}
\varphi_{31} = \theta_{21} + \theta_{31} \\
\varphi_{32} = \theta_{21} + \theta_{22} + \theta_{32} + \theta_4 \\
\varphi_{41} = k\theta_{21} + k\theta_{31} + \theta_{32} + \theta_4 \\
\varphi_{42} = k \theta_{21} + k\theta_{22} + k\theta_{31} + (k+1) \theta_{32} + k\theta_4. 
\end{array}
\right.
\end{equation}

Furthermore, from (\ref{D0}) and (\ref{D1}), we have 
$$ l_3^2 = \varphi_4\bar{\varphi}_4= \varphi_{41}^2k+\varphi_{41}\varphi_{42} + \varphi_{42}^2$$ and $$\theta_1l_3= \theta_2\varphi_4 + \bar{\theta}_2\bar{\varphi}_4 = \theta_{21}(\varphi_{41}+\varphi_{42}) + \theta_{22}\varphi_{41},$$ respectively. By vanishing $l_3$, we obtain 
\begin{equation*}
\theta_1^2\left( \varphi_{41}^2k+ \varphi_{41}\varphi_{42} + \varphi_{42}^2 \right) = \theta_{21}^2 \left( \varphi_{41}^2 + \varphi_{42}^2 \right) + \theta_{22}^2\varphi_{41}^2.
\end{equation*}
After simplifying the above equation, we can get  $S_1=0$, where \begin{eqnarray}
	S_1&= &  \theta_1^2 \theta_{21}^2 k^3 + \theta_1^2 \theta_{21} \theta_{22} k^2 + \theta_1^2 \theta_{21} \theta_{32} k^2 + \theta_1^2 \theta_{21} \theta_4 k^2 + \theta_1^2 \theta_{21} \theta_4 k + \theta_1^2 \theta_{22}^2 k^2 + \theta_1^2 \theta_{22} \theta_{31} k^2 \notag \\
	&& + \theta_1^2 \theta_{22} \theta_{32} k + \theta_1^2 \theta_{22} \theta_4 k +
	\theta_1^2 \theta_{31}^2 k^3 + \theta_1^2 \theta_{31} \theta_{32} k^2 + \theta_1^2 \theta_{31} \theta_4 k^2 + \theta_1^2 \theta_{31} \theta_4 k + \theta_1^2 \theta_{32}^2 k^2 \notag\\
	&&+ \theta_1^2 \theta_{32} \theta_4 + \theta_1^2 \theta_4^2 k^2 + \theta_{21}^2 \theta_{32}^2 k^2 + \theta_{21}^2 \theta_4^2 k^2 +
	\theta_{21}^2 \theta_4^2 + \theta_{22}^2 \theta_{31}^2 k^2 + \theta_{22}^2 \theta_{32}^2 + \theta_{22}^2 \theta_4^2. \label{S1}
\end{eqnarray}
Moreover, since $\varphi_1\neq0$, from (\ref{D4}), we have $\tr_{m}\left(\frac{\varphi_3\bar{\varphi}_3}{\varphi_1^2}\right) = 0$. We assume that $\frac{\varphi_3\bar{\varphi}_3}{\varphi_1^2} =\xi^2+\xi $ and thus $l_1 = \varphi_1\xi$. In addition, from (\ref{D3}) and simplifying,
\begin{equation}
\label{case2.2_l2}l_2=\frac{\theta_2\varphi_3 + \bar{\theta}_2 \bar{\varphi}_3}{\varphi_1}+\theta_1\xi = \frac{\theta_{21}\varphi_{31} + \theta_{21} \varphi_{32} + \theta_{22}\varphi_{31} }{\varphi_1} + \theta_1\xi.
\end{equation} 
Furthermore, we have 
\begin{eqnarray*}
\varphi_3\bar{\varphi}_4&=& \left( \varphi_{31} \omega + \varphi_{32} \right) \left( \varphi_{41}\omega + \varphi_{41} + \varphi_{42} \right) \\
&=& \left( \varphi_{31}\varphi_{42} + \varphi_{32} \varphi_{41} \right) \omega + \varphi_{31}\varphi_{41}k+ \varphi_{32} (\varphi_{41} + \varphi_{42})
\end{eqnarray*}
and thus 
$$\varphi_3\bar{\varphi}_4+ \bar{\varphi}_3\varphi_4 = \varphi_{31}\varphi_{42} + \varphi_{32}\varphi_{41}.$$
Next, by (\ref{D2}) and simplifying, we  have $S_2=0$, where 
\begin{eqnarray}
	S_2&=&    \theta_1^3 \theta_{21} \theta_{31} k + \theta_1^3 \theta_{21} \theta_{32} k + \theta_1^3 \theta_{21} \theta_4 k + \theta_1^3 \theta_{22} \theta_{31} k + \theta_1^3 \theta_{31} \theta_4 + \theta_1^2 \theta_{21}^2 \theta_{31} + \notag \\ &&\theta_1^2 \theta_{21}^2 \theta_{32} k + \theta_1^2 \theta_{21}^2 \theta_{32} + \theta_1^2 \theta_{21}^2 \theta_4 k +
	\theta_1^2 \theta_{21} \theta_{22} \theta_{31} k + \theta_1^2 \theta_{21} \theta_{22} \theta_{31} + \theta_1^2 \theta_{21}  \theta_{22} \theta_{32}  \notag\\
	&& + \theta_1^2 \theta_{21} \theta_{22} \theta_4 + \theta_1 \theta_{21}^4 k + \theta_1 \theta_{21}^3 \theta_{22} + \theta_1 \theta_{21}^3 \theta_{31} k + \theta_1 \theta_{21}^3 \theta_{32} k + \theta_1 \theta_{21}^3 \theta_4 k + \notag\\
	&&\theta_1 \theta_{21}^2 \theta_{22}^2 + \theta_1 \theta_{21}^2 \theta_{22} \theta_{31} k + \theta_1 \theta_{21}^2 \theta_{31}^2 k + \theta_1 \theta_{21}^2 \theta_{31}^2 + \theta_1 \theta_{21}^2 \theta_{31} \theta_{32} + \notag\\
	&&\theta_1 \theta_{22}^2 \theta_{31}^2 + \theta_{21}^4 \theta_{32} k + \theta_{21}^4 \theta_4 k + \theta_{21}^4 \theta_4 +
	\theta_{21}^3 \theta_{22} \theta_{31} k + \theta_{21}^3 \theta_{22} \theta_{32} + \theta_{21}^3 \theta_{22} \theta_4. \label{S2}
\end{eqnarray}

In addition, from $\theta_2\bar{\theta}_2 + \theta_3\bar{\theta}_3 + \theta_1\theta_{4} + \theta_{4}^2=0,$
we have $S_3=0$, where
\begin{equation}
\label{case_2.2_L3}
S_3 = \theta_1 \theta_4 + \theta_{21}^2 k + \theta_{21} \theta_{22} + \theta_{22}^2 + \theta_{31}^2 k + \theta_{31} \theta_{32} + \theta_{32}^2 + \theta_4^2.
\end{equation}

The following lemma determines the condition on $(a_1,a_2,a_3)$ such that $\tr_{m}\left(\frac{\varphi_3\bar{\varphi}_3}{\varphi_1^2}\right)=0$, i.e., (\ref{D}) has solutions in this case.

\begin{Lemma}
	\label{caseD.2}
		If  $\varphi_1\neq0$,  $(\theta_1+ \theta_{21}) k +  \theta_{21} + \theta_{22} =0$, $S_1=S_2=S_3=0,$ where  $S_1,S_2,S_3 $ are defined as in (\ref{S1}), (\ref{S2}) and (\ref{case_2.2_L3}), respectively, then 
	$$T \triangleq \tr_{m}\left(\frac{\varphi_3\bar{\varphi}_3}{\varphi_1^2}\right) =0 $$if and only if 	$m$ is odd, $(\theta_{21}+\theta_{31})k^3+\theta_{22}k^2 + (\theta_{4}+\theta_{21})k + \theta_{21}+\theta_{22} =0$,  $(\theta_{21} + \theta_{31} ) k^2 + (\theta_{21}  + \theta_{22}  +  \theta_{31}) k + \theta_{32} = 0$ and $ (\theta_{21}^2  + \theta_{21} \theta_{31}) k + \theta_{21} \theta_{31} + \theta_{22} \theta_{31} =0.$
\end{Lemma}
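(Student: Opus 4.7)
The plan is to evaluate $T$ symbolically and to extract the conditions for its vanishing.

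Using $\omega+\bar\omega=1$ and $\omega\bar\omega=k$, for any $z=z_1\omega+z_2\in\gf_{2^n}$ we have $z\bar z=z_1^2 k+z_1 z_2+z_2^2$. Applied to $\varphi_3=\varphi_{31}\omega+\varphi_{32}$ from (\ref{v134}) and using $\varphi_1=\theta_1+\theta_{21}\in\gf_{2^m}$,
\begin{equation*}
\frac{\varphi_3\bar\varphi_3}{\varphi_1^2}=\frac{\varphi_{31}^2 k+\varphi_{31}\varphi_{32}+\varphi_{32}^2}{(\theta_1+\theta_{21})^2}\in\gf_{2^m}.
\end{equation*}
The substitution $\theta_{22}=\theta_{21}+(\theta_1+\theta_{21})k$ from (\ref{v2=0}) then makes this an explicit rational function in the $\gf_{2^m}$-scalars $k,\theta_1,\theta_{21},\theta_{31},\theta_{32},\theta_4$.

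Next, guided by the expression for $l_2$ in (\ref{case2.2_l2}), I would take the coboundary candidate $\eta=(\theta_{21}\varphi_{31}+\theta_{21}\varphi_{32}+\theta_{22}\varphi_{31})/\varphi_1^2\in\gf_{2^m}$ (perhaps with lower-order corrections) and write $T=\tr_m(R)$ for $R:=\varphi_3\bar\varphi_3/\varphi_1^2+\eta^2+\eta$. Reducing $R$ uses three levers in sequence: Frobenius invariance $\tr_m(y^2)=\tr_m(y)$, the identity $S_3=0$ rewritten as $\theta_4^2+\theta_1\theta_4=\theta_2\bar\theta_2+\theta_3\bar\theta_3$ to eliminate $\theta_4^2$ terms, and the relations $S_1=S_2=0$ (themselves obtained from $D_0,D_1,D_2$ under $\varphi_2=0$) to collapse the remaining cross-terms. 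The goal is to express $R$ as a $\gf_{2^m}$-linear combination of the three polynomials $P_1,P_2,P_3$ appearing in the lemma statement, plus a small explicit piece built from $k$ alone.

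From the reduction, $T$ should take the shape
\begin{equation*}
T=(m+1)\bmod 2+\tr_{m}\!\left(\frac{\gamma_1 P_1+\gamma_2 P_2+\gamma_3 P_3}{(\theta_1+\theta_{21})^2}\right)
\end{equation*}
for explicit $\gamma_1,\gamma_2,\gamma_3\in\gf_{2^m}$, where the $(m+1)\bmod 2$ summand arises because the residual constant is a polynomial in $k$ whose trace, via the standing identity $\tr_m(k)=1$ (built into the choice of $\omega$), records precisely the parity of $m$. Sufficiency is then immediate: if $m$ is odd and $P_1=P_2=P_3=0$, both summands vanish. For necessity, one checks that the three polynomials $P_1,P_2,P_3$ dominate different degrees in $k$ and are independent under the hypotheses, so the bracketed trace vanishes only when each $P_i$ is zero, after which the parity condition on $m$ is forced by the leftover constant.

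The main obstacle is the reduction step: after substituting (\ref{p34}), eliminating $\theta_{22}$, and expanding, the residue has dozens of monomials of degree up to $4$ in $k$, and the three relations $S_1,S_2,S_3=0$ must be applied in concert to make everything collapse to the target form. Picking the wrong $\eta$ leaves an opaque residue with no clean interpretation, so the candidate may need tuning after an initial pass through the calculation.
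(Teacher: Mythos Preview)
Your plan misses the structural core of the argument and the necessity direction, as written, does not work.

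The paper's proof is not a single reduction to a formula of the proposed shape; it is a case split driven by a factorization. After substituting $\theta_1=(\theta_{21}+\theta_{22})/k+\theta_{21}$, both $S_1$ and $S_2$ factor with a common factor $S_4$, so the hypotheses $S_1=S_2=0$ bifurcate into: (i) $S_4=0$, or (ii) the two linear-in-$\theta_{32}$ and linear-in-$\theta_{31}$ factors vanish (these are exactly two of the three conditions in the lemma). In branch~(i) one computes $T=\tr_m(k)=1$ unconditionally, independent of the parity of $m$. Only in branch~(ii), after a further factorization of the reduced $S_3$, does one reach $T=1+\tr_m(1)$, and the third condition in the lemma arises from choosing the correct factor there. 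A single identity of the form $T=(m+1)\bmod 2+\tr_m\bigl((\gamma_1P_1+\gamma_2P_2+\gamma_3P_3)/\varphi_1^2\bigr)$ cannot hold: in branch~(i) the $P_i$ need not vanish, yet $T=1$ regardless of parity, which contradicts your proposed formula when $m$ is odd.

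Separately, your necessity argument is invalid on its own terms. ``The bracketed trace vanishes only when each $P_i$ is zero'' does not follow from the $P_i$ having different leading degrees in $k$: the trace map $\tr_m:\gf_{2^m}\to\gf_2$ has a kernel of index $2$, so $\tr_m(R)=0$ does not force $R=0$, let alone force each summand of $R$ to vanish. The actual necessity comes from the exhaustive case analysis: every branch except the one satisfying all three polynomial conditions yields $T=1$. To repair your approach you would have to discover the factorizations of $S_1,S_2$ (and then of the reduced $S_3$) rather than hope for a linear decomposition, and then evaluate $T$ separately on each branch.
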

\begin{proof} The proof is lengthy and is placed in first subsection of the appendix section.
\end{proof}

{\bfseries Case D.3} If $\varphi_1=0$ and $\varphi_2\neq0$, then $\theta_1 = \theta_2 + \bar{\theta}_2$.
From (\ref{D4}) and (\ref{D3}), we have $l_1^2 = \varphi_3\bar{\varphi}_3$ and $\theta_1l_1 = \theta_2\varphi_3 + \bar{\theta}_2\bar{\varphi}_3$, respectively and then by vanishing $l_1$, we obtain
$$\left( \theta_2^2 + \bar{\theta}_2^2 \right)  \varphi_3\bar{\varphi}_3 + \theta_2^2\varphi_3^2 + \bar{\theta}_2^2\bar{\varphi}_3^2 = 0, $$
i.e.,
\begin{equation}
\label{subcase2.3_eq1}
(\theta_2^2\varphi_3+\bar{\theta}_2^2\bar{\varphi}_3)(\theta_2+\bar{\theta}_2+\theta_3+\bar{\theta}_3) = 0.
\end{equation}
Moreover, from (\ref{D0}), we have $\tr_{m}\left( \frac{\varphi_4\bar{\varphi}_4}{\varphi_2^2} \right)=0$ and we assume $\frac{\varphi_4\bar{\varphi}_4}{\varphi_2^2} = \eta^2+\eta$, where $\eta\in\gf_{2^m}$. Then $l_3=\varphi_2\eta$. Furthermore, from (\ref{D1}), we get 
$$l_2= \frac{\theta_2\varphi_4+\bar{\theta}_2\bar{\varphi}_4}{\varphi_2} + \theta_1\eta.$$
Plugging the expressions of $l_1,l_2,l_3$ into (\ref{D2}) and simplifying (note that $\omega^2+\omega+k=0$), we get 
\begin{equation}
\label{subcase2.3_eq2}
(\theta_2^2\varphi_3+\bar{\theta}_2^2\bar{\varphi}_3)\left((\theta_2+\bar{\theta}_2) (\theta_2+\bar{\theta}_2+\theta_3+\bar{\theta}_3) (k^2 + k + \omega) + \theta_2\theta_3 + \bar{\theta}_2+\bar{\theta}_2\theta_3\right) =0.
\end{equation}

The following lemma tells that (\ref{D}) does not have solutions in this case. 

\begin{Lemma}\label{Lemma-3.6}
	Let $\theta_1=\theta_2+\bar{\theta}_2$, $\varphi_2\neq0$ and (\ref{subcase2.3_eq1}), (\ref{subcase2.3_eq2}) hold. Then 
	$$\tr_{m}\left(\frac{\varphi_4\bar{\varphi}_4}{\varphi_2^2}\right)=1.$$
\end{Lemma}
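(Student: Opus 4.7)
The plan is to exploit the key structural consequence of $\varphi_1=0$, namely that $\varphi_2$ collapses into $\gf_{2^m}$, which makes the ratio $\varphi_4\bar\varphi_4/\varphi_2^2$ a genuine element of $\gf_{2^m}$ whose trace is meaningful. Writing $\theta_2=\theta_{21}\omega+\theta_{22}$ and $\theta_3=\theta_{31}\omega+\theta_{32}$ with coordinates in $\gf_{2^m}$ (as in the proof of Case D.2), the hypothesis $\theta_1=\theta_2+\bar\theta_2$ forces $\theta_1=\theta_{21}$; substituting into \eqref{varphi} one immediately finds $\varphi_2=\theta_{21}+\theta_{22}\in\gf_{2^m}$, so that $\varphi_2=\bar\varphi_2$. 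Expanding $\varphi_4=\varphi_{41}\omega+\varphi_{42}$ in the same coordinates as in Case D.2 gives the closed form $\varphi_4\bar\varphi_4=\varphi_{41}^2k+\varphi_{41}\varphi_{42}+\varphi_{42}^2$, so the target quantity becomes $u^2k+uv+v^2$ with $u:=\varphi_{41}/\varphi_2$ and $v:=\varphi_{42}/\varphi_2$ in $\gf_{2^m}$.

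Next, I would apply the hypotheses \eqref{subcase2.3_eq1} and \eqref{subcase2.3_eq2} in their factored forms. If the common factor $\theta_2^2\varphi_3+\bar\theta_2^2\bar\varphi_3$ vanishes, this yields a direct polynomial relation among $\theta_2,\bar\theta_2,\theta_3,\bar\theta_3,\theta_4$ (call it Case A). Otherwise the second factor of each of \eqref{subcase2.3_eq1} and \eqref{subcase2.3_eq2} must vanish, forcing $\theta_3+\bar\theta_3=\theta_1$ (i.e.\ $\theta_{31}=\theta_{21}$) together with $\theta_2\theta_3+\bar\theta_2\bar\theta_3=0$ (call it Case B). In each case, I would feed the resulting relation, together with the universal identity \eqref{theta_equation}, into the expression $u^2k+uv+v^2$ and simplify to reveal the structure $k+(w^2+w)$ for some $w\in\gf_{2^m}$. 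Since $\tr_m(k)=1$ and $\tr_m(w^2+w)=0$, this would deliver the conclusion $\tr_m(\varphi_4\bar\varphi_4/\varphi_2^2)=1$.

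The hard part will be the algebraic bookkeeping: even with the considerable variable reduction afforded by $\varphi_2\in\gf_{2^m}$, combining \eqref{theta_equation} with the case-specific identity produces a long polynomial relation in $\theta_{21},\theta_{22},\theta_{31},\theta_{32},\theta_4,k$ that must be rearranged to expose the coboundary $w^2+w$. This calculation closely parallels the proof of Lemma~\ref{caseD.2} and is best deferred to the appendix. Additional attention is warranted in degenerate sub-subcases (for instance $\theta_{21}=0$, where $\theta_1=0$ and $\theta_2\in\gf_{2^m}$, or $\theta_3=\bar\theta_2$), where the generic simplification is no longer available and the conclusion must be checked by direct computation of $\varphi_{41}$ and $\varphi_{42}$.
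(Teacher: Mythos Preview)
Your outline follows essentially the same route as the paper's proof: pass to $\gf_{2^m}$-coordinates via $\theta_i=\theta_{i1}\omega+\theta_{i2}$, observe that $\varphi_1=0$ collapses $\varphi_2$ to $\theta_{21}+\theta_{22}\in\gf_{2^m}$, split according to which factor of \eqref{subcase2.3_eq1} vanishes, feed in \eqref{theta_equation}, and in each branch reduce $\varphi_4\bar\varphi_4/\varphi_2^2$ to $k$ plus a coboundary $w^2+w$; the paper also treats the degenerate sub-subcase $\theta_{21}=0$ separately inside Case~A, exactly as you anticipate.

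One concrete correction for your Case~B: when $\theta_2^2\varphi_3+\bar\theta_2^2\bar\varphi_3\neq0$, the surviving factor of \eqref{subcase2.3_eq2} (after using $\theta_2+\bar\theta_2+\theta_3+\bar\theta_3=0$) is not the symmetric expression $\theta_2\theta_3+\bar\theta_2\bar\theta_3$ but rather $\theta_2\theta_3+\bar\theta_2+\bar\theta_2\theta_3$. This asymmetric identity is what forces $\theta_{21}=\theta_{31}\in\{0,1\}$ (via $\theta_{21}(1+\theta_{31})=0$) and then, with $\theta_{21}=1$, pins down $\theta_{22}+\theta_{32}=1$ and $\theta_4\in\{0,1\}$ through \eqref{theta_equation}. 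If you carry the wrong identity into the bookkeeping, that branch will not close.
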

\begin{proof} The proof is lengthy and is placed in the second subsection of the appendix section.
\end{proof}

{\bfseries Case D.4} If $\varphi_1=\varphi_2=0$, then $\theta_1=\theta_2+\bar{\theta}_2$ and $\omega\theta_2=(\omega+1)\bar{\theta}_2$. If $\theta_1=0$, then $\theta_2=0$ and $f(x)$ is not an APN clearly by Proposition \ref{nu1=0}. Thus $\theta_1\neq0$, and then  (\ref{D}) has solutions in this case if and only if
\begin{equation}
\label{subcase2.4_eq1}
\theta_1^2\varphi_3\bar{\varphi}_3 = \theta_2^2\varphi_3^2 + \bar{\theta}_2^2 \bar{\varphi}_3^2,
\end{equation}
\begin{equation}
\label{subcase2.4_eq2}
\theta_1^2\varphi_4\bar{\varphi}_4 = \theta_2^2\varphi_4^2 + \bar{\theta}_2^2 \bar{\varphi}_4^2,
\end{equation}
and 
\begin{equation*}
\label{T}
\tr_{m}\left(\frac{\theta_2\bar{\theta}_2 + \varphi_3\bar{\varphi}_4 + \bar{\varphi}_3\varphi_4}{\theta_1^2}\right) =0. 
\end{equation*}

The following lemma characterizes the condition on $(a_1,a_2,a_3)$ such that (\ref{D}) has solutions in this case.
\begin{Lemma}
\label{caseD.4}
If  $\theta_1=\theta_2+\bar{\theta}_2\neq0$, $\omega\theta_2=(\omega+1)\bar{\theta}_2$, (\ref{subcase2.4_eq1}) and (\ref{subcase2.4_eq2}) hold, then
$$T \triangleq \tr_{m}\left(\frac{\theta_2\bar{\theta}_2 + \varphi_3\bar{\varphi}_4 + \bar{\varphi}_3\varphi_4}{\theta_1^2}\right) = 0$$ if and only if $m$ is odd and $\theta_{4}=\theta_{32}+\theta_{21}$. 
\end{Lemma}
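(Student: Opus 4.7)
The plan is to exploit the twin hypotheses $\theta_1 = \theta_2+\bar\theta_2$ and $\omega\theta_2 = (\omega+1)\bar\theta_2$ to collapse every quantity into coordinates over $\gf_{2^m}$, then carry out a short case-split driven by the three local conditions (\ref{subcase2.4_eq1}), (\ref{subcase2.4_eq2}), and (\ref{theta_equation}). Writing $\theta_2 = \theta_{21}\omega+\theta_{22}$, the second hypothesis forces $\theta_{22} = \theta_{21}$, whence $\theta_2 = \theta_{21}\bar\omega$, $\bar\theta_2 = \theta_{21}\omega$, and the first hypothesis gives $\theta_1 = \theta_{21}$. Since $\omega\bar\omega = k$, this yields $\theta_2\bar\theta_2/\theta_1^2 = k$; expanding $\varphi_3 = \varphi_{31}\omega+\varphi_{32}$ and $\varphi_4 = \varphi_{41}\omega+\varphi_{42}$ in the same basis then collapses the cross term into $\varphi_3\bar\varphi_4+\bar\varphi_3\varphi_4 = \varphi_{31}\varphi_{42}+\varphi_{32}\varphi_{41}$. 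Consequently
\[
T \;=\; 1 + \tr_{m}\!\left(\frac{\varphi_{31}\varphi_{42}+\varphi_{32}\varphi_{41}}{\theta_{21}^2}\right),
\]
and proving $T = 0$ reduces to showing that the remaining trace equals $\tr_{m}(1)$.

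Next, I would use the basic identities $\theta_2\psi+\bar\theta_2\bar\psi = \theta_{21}\psi_2$ and $\psi\bar\psi = k\psi_1^2+\psi_1\psi_2+\psi_2^2$ (valid for any $\psi = \psi_1\omega+\psi_2$ with $\psi_1,\psi_2\in\gf_{2^m}$) to recast (\ref{subcase2.4_eq1}) and (\ref{subcase2.4_eq2}) as the compact factorizations $\varphi_{31}(k\varphi_{31}+\varphi_{32}) = 0$ and $\varphi_{41}(k\varphi_{41}+\varphi_{42}) = 0$. The first factorization gives two sub-branches: (i) $\varphi_{31} = 0$, i.e., $\theta_{31} = \theta_{21}$; or (ii) $k\varphi_{31} = \varphi_{32}$, i.e., $k(\theta_{21}+\theta_{31}) = \theta_{32}+\theta_4$. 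In sub-branch (i) the identity (\ref{theta_equation}) simplifies to $(\theta_{32}+\theta_4)(\theta_{21}+\theta_{32}+\theta_4) = 0$. The option $\theta_4 = \theta_{32}$ quickly forces $\varphi_{41} = 0$, so the trace term vanishes and $T = 1$. The option $\theta_4 = \theta_{32}+\theta_{21}$ gives $\varphi_{32} = \theta_{21}$ and $\varphi_{41} = \theta_{21}\neq 0$; then (\ref{subcase2.4_eq2}) pins $\varphi_{42} = k\theta_{21}$, and the remaining trace becomes $\tr_{m}(\theta_{21}^2/\theta_{21}^2) = \tr_m(1)$, so $T = 1+\tr_m(1)$ vanishes exactly when $m$ is odd.

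In sub-branch (ii), substituting $k(\theta_{21}+\theta_{31}) = \theta_{32}+\theta_4$ into $\varphi_{41} = k(\theta_{21}+\theta_{31})+(\theta_{32}+\theta_4)$ immediately gives $\varphi_{41} = 0$. Inserting the same relation into (\ref{theta_equation}) and gathering squares collapses it (when $\theta_{31}\neq\theta_{21}$) to a single polynomial relation that pins $\theta_4$ and $\theta_{32}$ as explicit polynomials in $k$, $\theta_{21}$, $\theta_{31}$; a direct expansion of $\varphi_{42}$ using these identities forces $\varphi_{42} = 0$ as well, so the trace term vanishes and $T = 1$. Thus $T = 0$ forces us into sub-branch (i) with the second option, which is precisely the stated condition. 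I expect the main obstacle to lie in sub-branch (ii): showing that $\varphi_{42}$ also vanishes automatically requires substituting the relation $\theta_{32}+\theta_4 = k(\theta_{21}+\theta_{31})$ into several nested quadratic expressions and carefully grouping the monomials in $k$, $\theta_{21}$, $\theta_{31}$ into squares, otherwise the trace term is a dense, apparently non-canonical combination that does not obviously reduce to $\tr_{m}(1)$ or to zero.
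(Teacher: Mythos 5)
Your proposal is correct and takes essentially the same route as the paper: decompose everything over $\gf_{2^m}$ via $\omega$ to get $\theta_1=\theta_{21}=\theta_{22}\neq 0$, rewrite (\ref{subcase2.4_eq1}) and (\ref{subcase2.4_eq2}) as the factorizations $\varphi_{31}(k\varphi_{31}+\varphi_{32})=0$ and $\varphi_{41}(k\varphi_{41}+\varphi_{42})=0$ (these are exactly the paper's (\ref{subcase2.4_eq3}) and (\ref{subcase2.4_eq4})), and case-split using the reduced form (\ref{subcase2.4_eq5}) of (\ref{theta_equation}), obtaining $T=1$ in all branches except $\theta_{31}=\theta_{21}$, $\theta_4=\theta_{32}+\theta_{21}$, where $T=1+\tr_m(1)$. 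The only difference is cosmetic: in your branch (ii) you pin $\theta_{32}$ and show $\varphi_{41}=\varphi_{42}=0$, whereas the paper substitutes the common-factor relation directly into the trace numerator and kills it with the same identity — the same computation in a different order.
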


\begin{proof}
We also assume that $\theta_2=\theta_{21}\omega+\theta_{22}$, $\theta_3 = \theta_{31}\omega+ \theta_{32}$, where $\theta_{21},\theta_{22},\theta_{31},\theta_{32}\in\gf_{2^m}$. Then from $\theta_1=\theta_2+\bar{\theta}_2$ and $\omega\theta_2=(\omega+1)\bar{\theta}_2$, we get 
$$\theta_1 = \theta_{21} = \theta_{22} \neq0. $$
Moreover, the expression of $T,$ (\ref{subcase2.4_eq1}), (\ref{subcase2.4_eq2}) and (\ref{theta_equation}) become
$$T =\tr_{m}\left(\frac{\theta_{31}(\theta_{21}+\theta_{31})k + \theta_{21}\theta_{32}+ \theta_{31}\theta_{32} + \theta_{32}^2 + \theta_{4}^2}{\theta_{21}^2}+k\right),$$
\begin{equation}
\label{subcase2.4_eq3}
(\theta_{21}+\theta_{31})((\theta_{21}+\theta_{31})k+\theta_{32}+\theta_{4}) =0,
\end{equation}
\begin{equation}
\label{subcase2.4_eq4}
((\theta_{21}+\theta_{31})k+\theta_{32}+\theta_{4})((\theta_{21}+\theta_{31})k^2+\theta_{31}k+\theta_{32})=0,
\end{equation}
and 
\begin{equation}
\label{subcase2.4_eq5}
(\theta_{21}+\theta_{31})^2k + \theta_{21}\theta_{4} + \theta_{31}\theta_{32} + \theta_{32}^2+\theta_{4}^2 =0.
\end{equation}

From (\ref{subcase2.4_eq3}) and (\ref{subcase2.4_eq4}), we divide the proof into two cases: (i) $\theta_{21}+\theta_{31}=0$ and $(\theta_{21}+\theta_{31})k^2+\theta_{31}k+\theta_{32}=0$; (ii) $(\theta_{21}+\theta_{31})k+\theta_{32}+\theta_{4}=0$. 

(i) In this case, $\theta_{21}=\theta_{31}$ and by (\ref{subcase2.4_eq5}), we have $\theta_{21}\theta_{4} + \theta_{31}\theta_{32} + \theta_{32}^2+\theta_{4}^2=0$, i.e., $\theta_{4} = \theta_{32}$ or $\theta_{4}=\theta_{32}+\theta_{21}$. If $\theta_{4} = \theta_{32}$, then
$$T=\tr_{m}(k)+\tr_{m}\left( \frac{\theta_{32}^2+\theta_{4}^2}{\theta_{21}^2} \right) = 1.$$
If $\theta_{4}=\theta_{32}+\theta_{21}$, then $$T=\tr_{m}(k)+\tr_{m}\left( \frac{\theta_{32}^2+\theta_{4}^2}{\theta_{21}^2} \right) = 1+\tr_{m}(1),$$
which equals zero if and only if $m$ is odd.

(ii) In this case, $(\theta_{21}+\theta_{31})k+\theta_{32}+\theta_{4}=0$. By (\ref{subcase2.4_eq5}), we get $$\theta_{21}\theta_{32}+\theta_{31}\theta_{4}+\theta_{32}^2+\theta_{4}^2=0.$$
Moreover,
\begin{eqnarray*}
	T&=& \tr_{m}(k) + \tr_{m}\left( \frac{\theta_{31}(\theta_{32}+\theta_{4})+\theta_{21}\theta_{32}+ \theta_{31}\theta_{32} + \theta_{32}^2 + \theta_{4}^2}{\theta_{21}^2} \right) \\
	&=& \tr_{m}(k) + \tr_{m}\left(\frac{\theta_{21}\theta_{32}+\theta_{31}\theta_{4}+\theta_{32}^2+\theta_{4}^2}{\theta_{21}^2}\right)\\
	&=& \tr_{m}(k) = 1.
\end{eqnarray*}
\end{proof}

\subsection{The proof of Step 5).}
 In this subsection, together with all conditions on $(a_1,a_2,a_3)$ in Propositions \ref{Prop_a=1}, \ref{four cases} and \eqref{con_case1}, we can finally finish the proof of Theorem \ref{theorem}. We also divide this part of the proof into three cases.  Note that we do not need to consider the case $\varphi_1=0$ and $\varphi_2\neq0$ since (\ref{D}) does not have solutions by Proposition \ref{four cases}  and thus $f(x)$ is not APN in this case. 
 
 {\bfseries Case E.1 $\varphi_1\varphi_2\neq0$.} In this case, from Proposition \ref{four cases}, (\ref{D}) has solutions if and only if 
 $ \theta_1\neq0, (\theta_1^2\theta_3+ \theta_1 \bar{\theta}_2^2 + \theta_2^2\theta_3+\bar{\theta}_2^2\bar{\theta}_3)(\theta_1^2\theta_4 + \theta_1\theta_2\bar{\theta}_2 + \theta_2^2\theta_3 + \bar{\theta}_2^2\bar{\theta}_3)=0, \tr_{m}\left( \frac{\varphi_3\bar{\varphi}_3}{\varphi_1^2} \right) = 0,$ under which the following two lemmas show the relation between $\tr_{m}\left(\frac{\theta_2\bar{\theta}_2}{\theta_1^2}\right)$ and $\tr_{m}\left( \frac{\varphi_3\bar{\varphi}_3}{\varphi_1^2} \right)$, as well as the reason why the conditions on $(a_1,a_2,a_3)$ are not the same when the odevity of $m$ is different  in Theorem \ref{theorem}. 
 
 \begin{Lemma}
 	\label{5.1_lemma1}
 	If $\varphi_1 \neq0$,  $\theta_1\neq0$ and $\theta_1^2\theta_4 + \theta_1\theta_2\bar{\theta}_2 + \theta_2^2\theta_3 + \bar{\theta}_2^2\bar{\theta}_3 = 0$. Then
 	$$\tr_{m}\left(\frac{\theta_2\bar{\theta}_2}{\theta_1^2}\right) = \tr_{m}\left( \frac{\varphi_3\bar{\varphi}_3}{\varphi_1^2} \right).$$
 \end{Lemma}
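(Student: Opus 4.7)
The plan is to show that the combined expression $\frac{\theta_2\bar\theta_2}{\theta_1^2} + \frac{\varphi_3\bar\varphi_3}{\varphi_1^2}$, which lies in $\gf_{2^m}$, has the form $U^2 + U$ for some explicit $U \in \gf_{2^m}$, from which the equality of traces follows at once. First I would expand $\varphi_3\bar\varphi_3$ using $\varphi_3 = \theta_3 + \bar\theta_2 + \theta_4$ and the fact that $\theta_4 = a_1^2 + a_2\bar a_2 \in \gf_{2^m}$, so $\bar\varphi_3 = \bar\theta_3 + \theta_2 + \theta_4$. Multiplying out and invoking identity \eqref{theta_equation} in the form $\theta_2\bar\theta_2 + \theta_3\bar\theta_3 = \theta_1\theta_4 + \theta_4^2$, the $\theta_4^2$-terms cancel and one arrives at the compact formula $\varphi_3\bar\varphi_3 = \alpha\,\theta_4 + S$, where $\alpha := \theta_1 + (\theta_2+\bar\theta_2) + (\theta_3+\bar\theta_3)$ and $S := \theta_2\theta_3 + \bar\theta_2\bar\theta_3$ both lie in $\gf_{2^m}$.

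Next, because $\theta_1 \in \gf_{2^m}$ the hypothesis $\theta_1^2\theta_4 + \theta_1\theta_2\bar\theta_2 + \theta_2^2\theta_3 + \bar\theta_2^2\bar\theta_3 = 0$ lets me solve $\theta_4 = \frac{\theta_2\bar\theta_2}{\theta_1} + \frac{W}{\theta_1^2}$ with $W := \theta_2^2\theta_3 + \bar\theta_2^2\bar\theta_3 \in \gf_{2^m}$. Substituting this into the formula $\varphi_3\bar\varphi_3 = \alpha\theta_4 + S$ and clearing denominators, I would rewrite $\frac{\theta_2\bar\theta_2}{\theta_1^2} + \frac{\varphi_3\bar\varphi_3}{\varphi_1^2}$ as a single fraction over $(\theta_1\varphi_1)^2$, where $\varphi_1^2 = \theta_1^2 + (\theta_2+\bar\theta_2)^2 \in \gf_{2^m}$. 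The crucial observation is that the denominator is a square in $\gf_{2^m}$, so if I can exhibit $U \in \gf_{2^m}$ with numerator equal to $\theta_1\varphi_1\cdot U + U^2$, the whole quotient becomes $U + U^2$ and its trace vanishes.

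The main obstacle is pinning down this $U$. Guided by the quantities appearing in the numerator (namely $\theta_2\bar\theta_2$, $W$, $S$, and $c := \theta_2+\bar\theta_2$, $d := \theta_3+\bar\theta_3$), a reasonable ansatz is $U = (\lambda_1\theta_2\bar\theta_2 + \lambda_2 W + \lambda_3 S)/(\theta_1\varphi_1)$ where the $\lambda_i \in \gf_{2^m}$ are low-degree polynomials in $\theta_1, c, d$; matching the coefficients of the various monomials in the resulting polynomial identity produces an overdetermined linear system whose consistency is forced precisely by the hypothesis together with \eqref{theta_equation}.

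A useful intermediate observation that may streamline this search is that the hypothesis alone already yields the side identity $\tr_m(\theta_3\bar\theta_3/\theta_1^2) = \tr_m(\theta_2\bar\theta_2/\theta_1^2)$: substituting the expression for $\theta_4$ into \eqref{theta_equation} and dividing by $\theta_1^4$ expresses $\theta_3\bar\theta_3/\theta_1^2$ as $\theta_2\bar\theta_2/\theta_1^2$ plus a term of the form $Y^2+Y$ with $Y = (\theta_2^2\theta_3 + \bar\theta_2^2\bar\theta_3)/\theta_1^3 \in \gf_{2^m}$. Hence the target equality reduces to $\tr_m(\varphi_3\bar\varphi_3/\varphi_1^2) = \tr_m(\theta_3\bar\theta_3/\theta_1^2)$, a reformulation that avoids one of the unknowns and is likely to admit a noticeably simpler closed form for $U$.
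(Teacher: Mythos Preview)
Your overall strategy---show that $\tr_m\bigl(\tfrac{\theta_2\bar\theta_2}{\theta_1^2}+\tfrac{\varphi_3\bar\varphi_3}{\varphi_1^2}\bigr)=0$---is exactly what the paper does, and your compact formula $\varphi_3\bar\varphi_3=\alpha\theta_4+S$ (with $\alpha=\theta_1+\theta_2+\bar\theta_2+\theta_3+\bar\theta_3$, $S=\theta_2\theta_3+\bar\theta_2\bar\theta_3$) is a clean identity the paper only reaches implicitly. Where you diverge is in execution: you solve the hypothesis for $\theta_4$, substitute, and then propose to hunt for an explicit $U$ by an undetermined-coefficients ansatz. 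That step is left unfinished, and while it would eventually work, it is heavier than necessary.

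The paper's route avoids the ansatz entirely. Rather than eliminate $\theta_4$, it keeps the term $\theta_4^2$ inside the trace and uses $\tr_m(y^2)=\tr_m(y)$ with $y=\theta_4/\varphi_1$, so that
\[
\tr_m\!\left(\frac{\theta_4^2}{\varphi_1^2}\right)=\tr_m\!\left(\frac{\theta_4(\theta_1+\theta_2+\bar\theta_2)}{\varphi_1^2}\right).
\]
Now the hypothesis in the form $\theta_1^2\theta_4=\theta_1\theta_2\bar\theta_2+\theta_2^2\theta_3+\bar\theta_2^2\bar\theta_3$ replaces $\theta_1\theta_4$ directly; after regrouping, the combined numerator over $\theta_1^2\varphi_1^2$ factors as $(\theta_2\bar\theta_2+\theta_3\bar\theta_3)\,\varphi_1^2$, and everything collapses to $\tr_m\bigl((\theta_2\bar\theta_2+\theta_3\bar\theta_3)/\theta_1^2\bigr)=\tr_m\bigl((\theta_1\theta_4+\theta_4^2)/\theta_1^2\bigr)=0$ by \eqref{theta_equation}. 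No coefficient matching is needed.

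Incidentally, your side observation $\tr_m(\theta_3\bar\theta_3/\theta_1^2)=\tr_m(\theta_2\bar\theta_2/\theta_1^2)$ is correct but already follows unconditionally from \eqref{theta_equation} via the simpler choice $Y=\theta_4/\theta_1$; the hypothesis is not needed for that part.
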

 
 \begin{proof}
 	Note that $\theta_2\bar{\theta}_2 + \theta_3\bar{\theta}_3 = \theta_1\theta_4 + \theta_4^2$. By computing directly, we have
 	\begin{eqnarray*}
 		&& \tr_{m}\left( \frac{\varphi_3\bar{\varphi}_3}{\varphi_1^2} + \frac{\theta_2\bar{\theta}_2}{\theta_1^2}  \right) \\
 		&=&\tr_{m}\left( \frac{ \theta_1^2\bar{\theta}_2\bar{\theta}_3 +  \theta_1^2\bar{\theta}_2\theta_4 + \theta_1^2\theta_2\theta_3 + \theta_1^2\theta_3\bar{\theta}_3 + \theta_1^2\theta_3\theta_4 + \theta_1^2\theta_2\theta_4 + \theta_1^2\bar{\theta}_3\theta_4+ \theta_2^3\bar{\theta}_2 + \theta_2\bar{\theta}_2^3 + \theta_1^2\theta_4^2 }{\theta_1^2\left( \theta_1^2 + \theta_2^2 + \bar{\theta}_2^2 \right)}  \right).
 	\end{eqnarray*}
 	Moreover, it is clear that 
 	\begin{eqnarray*}
 		&& \tr_{m}\left( \frac{\theta_1^2\theta_4^2 }{\theta_1^2\left( \theta_1^2 + \theta_2^2 + \bar{\theta}_2^2 \right)} \right)
 		= \tr_{m}\left( \frac{\theta_1^2\theta_4(\theta_1+\theta_2+\bar{\theta}_2)}{\theta_1^2\left( \theta_1^2 + \theta_2^2 + \bar{\theta}_2^2 \right)} \right) \\
 		&=& \tr_{m}\left(  \frac{\theta_1^2\theta_2\bar{\theta}_2 + \theta_1\theta_2^2\theta_3 + \theta_1\bar{\theta}^2\bar{\theta}_3 + \theta_1^2\theta_2\theta_4 + \theta_1^2\bar{\theta}_2\theta_4}{\theta_1^2\left( \theta_1^2 + \theta_2^2 + \bar{\theta}_2^2 \right)} \right),
 	\end{eqnarray*}
 	where the last equality is by $\theta_1^2\theta_4 = \theta_1\theta_2\bar{\theta}_2 + \theta_2^2\theta_3 + \bar{\theta}_2^2\bar{\theta}_3$. Thus 
 	\begin{eqnarray*}
 		&& \tr_{m}\left( \frac{(\bar{\theta}_2+\theta_3+\theta_4)({\theta}_2+\bar{\theta}_3+\theta_4)}{(\theta_1+\theta_2 + \bar{\theta}_2)^2} + \frac{\theta_2\bar{\theta}_2}{\theta_1^2}  \right) \\
 		&=& \tr_{m}\left( \frac{(\bar{\theta}_2\bar{\theta}_3 + \theta_2\theta_3) \theta_1(\theta_1+\theta_2+\bar{\theta}_2) + ( \theta_1^2\theta_4+\theta_1\theta_2\bar{\theta}_2)(\theta_3+\bar{\theta}_3)  + \theta_2^3\bar{\theta}_2 + \theta_2\bar{\theta}_2^3 + \theta_1^2(\theta_2\bar{\theta}_2+\theta_3\bar{\theta}_3) }{\theta_1^2\left( \theta_1^2 + \theta_2^2 + \bar{\theta}_2^2 \right)} \right)\\
 		&=&\tr_{m}\left( \frac{ \bar{\theta}_2^2\bar{\theta}_3^2 + \theta_2^2\theta_3^2 + (\theta_3+\bar{\theta}_3) (\theta_2^2\theta_3 + \bar{\theta}_2^2\bar{\theta}_3) + \theta_2^3\bar{\theta}_2 + \theta_2\bar{\theta}_2^3 + \theta_1^2(\theta_2\bar{\theta}_2+\theta_3\bar{\theta}_3)  }{\theta_1^2\left( \theta_1^2 + \theta_2^2 + \bar{\theta}_2^2 \right)} \right) \\
 		&=&\tr_{m}\left( \frac{ (\theta_2\bar{\theta}_2+\theta_3\bar{\theta}_3)\left( \theta_1^2 + \theta_2^2 + \bar{\theta}_2^2 \right)}{\theta_1^2\left( \theta_1^2 + \theta_2^2 + \bar{\theta}_2^2 \right)}\right)  \\
 		&=& \tr_{m}\left(\frac{\theta_2\bar{\theta}_2+\theta_3\bar{\theta}_3}{\theta_1^2}\right) =\tr_{m}\left(\frac{\theta_1\theta_4+\theta_4^2}{\theta_1^2}\right) =0. 
 	\end{eqnarray*}
 \end{proof}
 
 \begin{Lemma}
 	\label{5.1_lemma2}
 	If $\varphi_1 \neq0$,  $\theta_1\neq0$, $ \theta_1^2\theta_3+ \theta_1 \bar{\theta}_2^2 + \theta_2^2\theta_3+\bar{\theta}_2^2\bar{\theta}_3 = 0$ and $\theta_1^2\theta_4 + \theta_1\theta_2\bar{\theta}_2 + \theta_2^2\theta_3 + \bar{\theta}_2^2\bar{\theta}_3 \neq 0$. Then
 	
 	(1) $$\tr_{m}\left(\frac{\theta_2\bar{\theta}_2}{\theta_1^2}+\frac{\varphi_3\bar{\varphi}_3
 	}{\varphi_1^2}\right) = \tr_{m}(1).$$
 	
 	(2) Moreover, when $m$ is odd and $\tr_{m}\left(\frac{\varphi_3\bar{\varphi}_3
 	}{\varphi_1^2}\right)=0$, (\ref{con_case1}) does not hold. 
 \end{Lemma}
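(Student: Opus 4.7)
The claim (1) is an algebraic identity in the same spirit as Lemma~\ref{5.1_lemma1}, so I would prove it by a direct expansion. Writing $\frac{\theta_2\bar{\theta}_2}{\theta_1^2} + \frac{\varphi_3\bar{\varphi}_3}{\varphi_1^2}$ over the common denominator $\theta_1^2\varphi_1^2 = \theta_1^2(\theta_1^2+\theta_2^2+\bar{\theta}_2^2)$ and expanding $\varphi_3\bar{\varphi}_3 = (\theta_3+\bar{\theta}_2+\theta_4)(\bar{\theta}_3+\theta_2+\theta_4)$, I would reduce the numerator using (\ref{theta_equation}) in the form $\theta_2\bar{\theta}_2 + \theta_3\bar{\theta}_3 = \theta_1\theta_4 + \theta_4^2$ and the standing hypothesis $\theta_1^2\theta_3 + \theta_1\bar{\theta}_2^2 + \theta_2^2\theta_3 + \bar{\theta}_2^2\bar{\theta}_3 = 0$. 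In Lemma~\ref{5.1_lemma1} the corresponding manipulation collapsed the trace to $0$; the different substitution available here will instead leave behind a term whose absolute trace is precisely $\tr_m(1)$. The auxiliary inequality $\theta_1^2\theta_4 + \theta_1\theta_2\bar{\theta}_2 + \theta_2^2\theta_3 + \bar{\theta}_2^2\bar{\theta}_3 \neq 0$ plays no role here beyond ruling out the case of Lemma~\ref{5.1_lemma1}.

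For (2), I would first apply (1) together with $m$ odd (so $\tr_m(1)=1$) and the assumption $\tr_m(\varphi_3\bar{\varphi}_3/\varphi_1^2)=0$ to deduce $\tr_m(\theta_2\bar{\theta}_2/\theta_1^2)=1$. By Proposition~\ref{nu1=0}(5) this forces $Z(\nu_1)=2$, so there exist exactly two $A \in \gf_{2^m}$, equivalently two $a \in \mu_{2^m+1}\setminus\{1\}$, with $\nu_1(a)=0$. According to (\ref{con_case1}), it suffices to exhibit one such $a$ for which $\nu_2(a)=0$ or $f(a)=0$.

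On the roots of $\nu_1$ one has $\varphi_1 A^2 = \theta_1 A + \varphi_2$, so the numerator of $\nu_2$ collapses to the linear form $(\varphi_3\theta_1 + \varphi_1\bar{\theta}_2)A + (\varphi_2\varphi_3 + \varphi_1\varphi_4)$, and using $\bar a = a^{-1}$ on $\mu_{2^m+1}$ gives $a^3 f(a) = 1 + a_1 a^2 + a_2 a^4 + a_3 a^6$, which after the substitution $a^2 = (A+\omega)/(A+\bar{\omega})$ and the same reduction modulo $\nu_1$ yields an analogous polynomial expression in $A$. The main obstacle is to verify, under the hypothesis $\theta_1^2\theta_3 + \theta_1\bar{\theta}_2^2 + \theta_2^2\theta_3 + \bar{\theta}_2^2\bar{\theta}_3 = 0$ (equivalently $\theta_3(\theta_1+\theta_2)^2 = \bar{\theta}_2^2(\theta_1+\bar{\theta}_3)$), that at least one of these two reduced expressions shares a root with $\varphi_1 A^2 + \theta_1 A + \varphi_2$. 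I expect this to split into subcases depending on whether $\varphi_3\theta_1 + \varphi_1\bar{\theta}_2$ vanishes (which forces $\nu_2=0$ at both roots of $\nu_1$) or not (in which case the unique root of the affine form must be matched against one root of $\nu_1$, with the remaining root handled by $f(a)=0$). Carrying out the bookkeeping in the $\gf_{2^m}$-coordinates $\theta_{2i}, \theta_{3i}$, and repeatedly invoking (\ref{theta_equation}) to cancel cross terms, will be the tedious step.
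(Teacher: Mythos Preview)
Your outline for (1) is plausible in spirit but differs from what the paper actually does, and your remark about the inequality is misleading. The paper does \emph{not} mimic the uniform expansion of Lemma~\ref{5.1_lemma1}. Instead it adds the hypothesis $\theta_1^2\theta_3+\theta_1\bar\theta_2^2+\theta_2^2\theta_3+\bar\theta_2^2\bar\theta_3=0$ to its conjugate to obtain $\theta_1(\theta_3+\bar\theta_3)=(\theta_2+\bar\theta_2)^2$, and then splits into two cases: (i) $\theta_2=\bar\theta_2$, $\theta_3=\bar\theta_3$, and (ii) $\theta_1=(\theta_2^2+\bar\theta_2^2)/(\theta_3+\bar\theta_3)$. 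In each case one derives sharper identities ($\theta_1\theta_3=\theta_2^2$ in (i); $\theta_1\theta_3=\bar\theta_2^2$, $\theta_1\bar\theta_3=\theta_2^2$ in (ii)) and then computes the trace. Crucially, the inequality $\theta_1^2\theta_4+\theta_1\theta_2\bar\theta_2+\theta_2^2\theta_3+\bar\theta_2^2\bar\theta_3\neq0$ is used \emph{inside} both cases: in (i) it forces $\theta_3\neq\theta_4$, hence $\theta_1=\theta_3+\theta_4$ from the factorisation $(\theta_3+\theta_4)(\theta_1+\theta_3+\theta_4)=0$ of~\eqref{theta_equation}; in (ii) it forces $\theta_3\bar\theta_3\neq\theta_4^2$, hence $\theta_1^2=\theta_4^2+\theta_3\bar\theta_3$. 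Without these selections the trace would be $0$, not $\tr_m(1)$. So the inequality is not a side condition you can ignore during the expansion; it is what picks the correct branch and produces the extra $\tr_m(1)$. A purely ``direct'' expansion as in Lemma~\ref{5.1_lemma1} would get stuck exactly at these branch points.

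For (2) your high-level plan (use Proposition~\ref{nu1=0}(5), then exhibit a root of $\nu_1$ where \eqref{con_case1} fails) matches the paper, but the execution is quite different. The paper does not work with $\nu_2$ at all, nor in the $\theta_{ij}$-coordinates you propose. Since $m$ is odd it sets $k=1$, $\omega^2+\omega+1=0$, writes $a_2=a_{21}\omega+a_{22}$, $a_3=a_{31}\omega+a_{32}$ in $\gf_{2^m}$-coordinates, and shows directly that $f(\bar a)=0$ at a root $\bar A$ of $\varphi_1A^2+\theta_1A+\varphi_2$, again splitting into the same two cases (i)/(ii) from part~(1). The vanishing of the numerator $N(f(\bar a))=N_1\bar A+N_2$ is then checked by elementary elimination in the $a_{ij}$. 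Your proposed dichotomy on whether $\varphi_3\theta_1+\varphi_1\bar\theta_2$ vanishes, and your plan to stay in $\theta_{ij}$-coordinates, is a different organisation; it may be workable, but the paper's route via the original coefficients $a_{ij}$ is what actually closes the argument.
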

 
 \begin{proof}
 The proof is lengthy and placed in the third subsection in the appendix section.
 \end{proof}

 Lemma \ref{5.1_lemma2} tells that  if $\varphi_1 \neq0$,  $\theta_1\neq0$, $ \theta_1^2\theta_3+ \theta_1 \bar{\theta}_2^2 + \theta_2^2\theta_3+\bar{\theta}_2^2\bar{\theta}_3 = 0$, $\theta_1^2\theta_4 + \theta_1\theta_2\bar{\theta}_2 + \theta_2^2\theta_3 + \bar{\theta}_2^2\bar{\theta}_3 \neq 0$, $m$ is odd and $\tr_{m}\left(\frac{\varphi_3\bar{\varphi}_3
 }{\varphi_1^2}\right)=0$, (\ref{con_case1}) does not hold and thus $f(x)$ is not APN. Clearly, in another cases, the conditions $(a_1,a_2,a_3)$ in Proposition \ref{Prop_a=1} also hold and there is no $a\in\mu_{2^m+1}\backslash\{1\}$ such that $\nu_1=0$ by Proposition \ref{nu1=0}. Thus $f(x)$ is APN. 

{\bfseries Case E.2 $\varphi_1\neq0$ and $\varphi_2 = 0$.} In this case, by Proposition \ref{four cases}, (\ref{D}) has solutions if and only if $m$ is odd, $(\theta_{21}+\theta_{31})k^3+\theta_{22}k^2 + (\theta_{4}+\theta_{21})k + \theta_{21}+\theta_{22} =0$,  $(\theta_{21} + \theta_{31} ) k^2 + (\theta_{21}  + \theta_{22}  +  \theta_{31}) k + \theta_{32} = 0$ and $ (\theta_{21}^2  + \theta_{21} \theta_{31}) k + \theta_{21} \theta_{31} + \theta_{22} \theta_{31} =0.$  The following lemma tells that in this case, (\ref{con_case1}) does not hold and thus $f(x)$ is not APN. 
\begin{Lemma}
	If the conditions of Lemma \ref{caseD.2} hold, $m$ is odd, $(\theta_{21}+\theta_{31})k^3+\theta_{22}k^2 + (\theta_{4}+\theta_{21})k + \theta_{21}+\theta_{22} =0$,  $(\theta_{21} + \theta_{31} ) k^2 + (\theta_{21}  + \theta_{22}  +  \theta_{31}) k + \theta_{32} = 0$ and $ (\theta_{21}^2  + \theta_{21} \theta_{31}) k + \theta_{21} \theta_{31} + \theta_{22} \theta_{31} =0,$ then (\ref{con_case1}) does not hold.
\end{Lemma}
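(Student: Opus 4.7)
The strategy is to exhibit some $a \in \mu_{2^m+1}\setminus\{1\}$ with $\nu_1(a)=0$ at which either $\nu_2 = 0$ or $f(a)=0$, so that condition (\ref{con_case1}) fails. Since $\varphi_1 \neq 0$ and $\varphi_2 = 0$, the quadratic $\varphi_1 A^2 + \theta_1 A + \varphi_2$ factors as $A(\varphi_1 A + \theta_1)$; in particular $A=0$ is always a root, and the corresponding element $a$, characterized by $a^2 = \omega/\bar{\omega}$, lies in $\mu_{2^m+1}\setminus\{1\}$ and satisfies $\nu_1(a)=0$.

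First I would specialize to $A=0$. From the rational expression for $\nu_2$ given in the proof of Step 2), one has $\nu_2|_{A=0} = \varphi_4/(k+\omega)$, and $\omega^2+\omega+k=0$ gives $k+\omega = \omega^2$, so $\nu_2|_{A=0} = \varphi_4/\omega^2$. Substituting hypothesis (ii), namely $(\theta_{21}+\theta_{31})k^2 + (\theta_{21}+\theta_{22}+\theta_{31})k + \theta_{32}=0$, directly into the definitions of $\varphi_{41}$ and $\varphi_{42}$ in (\ref{p34}) yields $\varphi_{42} = k\varphi_{41}$. Hence $\varphi_4 = \varphi_{41}\omega+\varphi_{42} = \varphi_{41}(\omega+k) = \varphi_{41}\omega^2$, and therefore $\nu_2|_{A=0} = \varphi_{41}$.

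I would then split into two subcases. If $\varphi_{41}=0$, then $\nu_2=0$ at this $a$, contradicting the first condition in (\ref{con_case1}), and the claim follows. Otherwise $\varphi_{41}\neq 0$, and the plan is to combine the remaining hypotheses (i), (iii) with $S_1=S_2=S_3=0$ from Lemma \ref{caseD.2} and the parity assumption that $m$ is odd to force $f(a)=0$. Since $a\bar{a}=1$, the condition $f(a)=0$ is equivalent to $a_3u^3+a_2u^2+a_1u+1=0$ with $u=a^2 = \omega/(\omega+1)$; expanding this in the $\{1,\omega\}$-basis produces two $\gf_{2^m}$-identities that should be extracted from the polynomial system in $(a_1,\theta_{21},\theta_{22},\theta_{31},\theta_{32},\theta_4,k)$ using the definitions of $\theta_1,\ldots,\theta_4$. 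If this analysis falls short at $A=0$, the same procedure is applied at the second root $A=\theta_1/\varphi_1$ when $\theta_1\neq 0$.

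The main obstacle is the polynomial bookkeeping in this second subcase: showing that the overdetermined system genuinely forces $f(a)=0$. The odd parity of $m$ is essential, since the analogous assertion fails when $m$ is even and gives rise to the additional family $\Gamma_2$ in Theorem \ref{theorem} via the trace identity $\tr_m(1)=1$. Once either $\nu_2=0$ or $f(a)=0$ is verified, the analysis described in Step 2) implies that (\ref{main_equation}) at this $a$ has more than two solutions in $\gf_{2^n}$, whence $f(x)$ is not APN.
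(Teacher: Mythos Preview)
Your strategy is right---exhibit an $a\in\mu_{2^m+1}\setminus\{1\}$ with $\nu_1(a)=0$ and show $\nu_2(a)=0$ or $f(a)=0$---and the computation $\nu_2|_{A=0}=\varphi_{41}$ via $\varphi_{42}=k\varphi_{41}$ is clean and correct. The problem is that your case split does no work: under the hypotheses the case $\varphi_{41}=0$ is vacuous. Since $m$ is odd, you may take $k=1$, whereupon (\ref{v2=0}) gives $\theta_1=\theta_{22}$, hypothesis (i) gives $\theta_4=\theta_{21}+\theta_{31}$, hypothesis (ii) gives $\theta_{32}=\theta_{22}$, and hypothesis (iii) gives $\theta_{21}^2=\theta_{22}\theta_{31}$; then $\varphi_{41}=\theta_{21}+\theta_{31}+\theta_{32}+\theta_4=\theta_{22}=\theta_1$, and $\theta_1=0$ would force $\theta_{21}=0$ and $\varphi_1=0$, contrary to assumption. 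So the entire content of the lemma is precisely the ``polynomial bookkeeping'' you flag as the main obstacle and do not carry out.

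The paper closes this gap as follows. First it specializes to $k=1$, obtaining the four simple relations above. Second, rather than expanding $f(a)$ in the $\{1,\omega\}$-basis to get two identities, it multiplies by the conjugate: writing $f(a)=1+\omega^2 a_1+\omega a_2+a_3$ (up to a unit) at $a=\omega$, one has
\[
(1+\omega^2 a_1+\omega a_2+a_3)(1+\omega a_1+\omega^2\bar a_2+\bar a_3)
= a_{21}+a_{22}+a_{31}+a_1(a_{22}+a_{31}+a_{32})\in\gf_{2^m},
\]
and since the two factors are $2^m$-conjugate, the product vanishes iff $f(a)=0$. Third, it shows this $\gf_{2^m}$-expression is zero by a short case analysis in the $a_{ij}$'s, repeatedly using the four relations above together with $\varphi_1\ne 0$. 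This norm trick is the missing idea in your outline: it halves the number of identities to verify and makes the case analysis tractable. Your fallback to the second root $A=\theta_1/\varphi_1$ is unnecessary---the argument at $A=0$ goes through.
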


\begin{proof}
	It is clear that when $m$ is odd, we can let $k=1$ and $\omega$ satisfy $\omega^2+\omega+1=0$. Then in this case, we have $ \theta_{4} = \theta_{21} + \theta_{31}, \theta_{32}=\theta_{22}, \theta_1 = \theta_{22}, \theta_{21}^2 = \theta_{22} \theta_{31}$, under which we will show that $f(\bar{a}) = 1 + \omega^2 a_1 + \omega a_2 +a_3 =0$, which is a contradiction with (\ref{con_case1}).  We assume that $a_2=a_{21}\omega + a_{22}$ and $a_3 = a_{31}\omega + a_{32}$, where $a_{21}, a_{22}, a_{31}, a_{32} \in\gf_{2^m}.$ 
	Moreover, by computing, we have 
	$$ (1 + \omega^2 a_1 + \omega a_2 +a_3) (1+\omega a_1 + \omega^2 \bar{a}_2 + \bar{a}_3) = a_{21}+a_{22}+a_{31} + a_1(a_{22}+a_{31}+a_{32}).  $$
	Next, we show that  $a_{21}+a_{22}+a_{31} + a_1(a_{22}+a_{31}+a_{32}) $ always equals to zero. 
	From $\theta_{32}=\theta_{22}$, we have $$ (a_1+a_{21}) ( 1+ a_{31}+ a_{32})  = a_{22}(a_{32}+1). $$
	If $a_{32}=1$, then $a_{31}(a_1+a_{21})=0$. If $a_{31}=0$, then from $\theta_{4}+\theta_1+\theta_{22}+\theta_{21}+\theta_{31}=0$, we get $a_1+a_{21}+a_{22}=0$. Moreover, from $ \theta_{21}^2 = \theta_{22} \theta_{31},$ we have $a_{21}(a_1+a_{22}) =0$. Thus $a_{21}=0$ and $a_1=a_{22}$. Furthermore, we have $\theta_1+\theta_2+\bar{\theta}_2=0$, which is a contradiction. If $a_1=a_{21}$, then $a_{21}+a_{22}+a_{31} + a_1(a_{22}+a_{31}+a_{32}) = (a_{22}+a_{31})(a_1+1)$,  we assume which does not equal zero. Then from  $\theta_{4}+\theta_{21}+\theta_{31}=0$ and $\theta_1=\theta_{22}$, we get $a_{31}=1$ and $a_1=a_{22}$ and then $\theta_1+\theta_2+\bar{\theta}_2=0$, which is also a  contradiction.
	
	If $a_{32}\neq1$, then $a_{22} = \frac{(a_1+a_{21})(1+a_{31}+a_{32})}{a_{32}+1}$ and plugging it into $ \theta_{4} + \theta_{21} + \theta_{31} + \theta_1 + \theta_{22} = 0$, we have $(a_1+a_{21}+a_{32}+1)(a_{31}^2+a_{31}a_{32}+a_{32}^2+1) = 0$, which means $a_{31}^2+a_{31}a_{32}+a_{32}^2+1=0$, since if $a_1+a_{21}+a_{32}+1=0,$ then $a_{21}+a_{22}+a_{31} + a_1(a_{22}+a_{31}+a_{32}) = (a_1+a_{21}+a_{32}+1) (a_1a_{31}+a_1a_{32}+a_1+a_{31})=0$. Together with $a_{31}^2+a_{31}a_{32}+a_{32}^2+1=0$ and $\theta_{4}+\theta_{21}+\theta_{31}=0$, we obtain $a_{31}^2(a_1a_{31}+a_{21})=0$, meaning $\theta_1+\theta_2+\bar{\theta}_2=a_{31}^2(a_1a_{31}+a_{21})^2(a_1+a_{21}+a_{31}+1)^2=0$, which is a contradiction. Thus $1 + \omega^2 a_1 + \omega a_2 +a_3 = 0$ holds. 
\end{proof}

{\bfseries Case E.3 $\varphi_1=0$ and $\varphi_2=0$} In this case, by Proposition \ref{four cases}, (\ref{D}) has solutions if and only if $m$ is odd and $\theta_{4}=\theta_{32}+\theta_{21}$. The following lemma tells that in this case, $1+a_1+a_2+a_3=0$, which is a contradiction with Proposition \ref{Prop_a=1} and thus $f(x)$ is not APN. 

\begin{Lemma}
	If the conditions of Lemma \ref{caseD.4} hold and $m$ is odd, as well as $\theta_{4}=\theta_{32}+\theta_{21}$, then  $1+a_1+a_2+a_3=0$.
\end{Lemma}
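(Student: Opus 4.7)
The plan is to unpack the compound hypothesis---``the conditions of Lemma~\ref{caseD.4} hold, $m$ is odd, and $\theta_4=\theta_{32}+\theta_{21}$''---into a small package of coordinate equalities and then directly verify $1+a_1+a_2+a_3=0$ in the basis $\{1,\omega\}$. Since Case~(ii) of the proof of Lemma~\ref{caseD.4} always yields $T=1$, the hypothesis places us in Case~(i). For odd $m$ we may take $k=1$ (since $\tr_m(1)=1$ makes $\omega^{2}+\omega+1=0$ admissible), and the Case~(i) subconditions $\theta_{21}+\theta_{31}=0$ and $\theta_{31}k+\theta_{32}=0$, combined with the identity $\theta_{21}=\theta_{22}=\theta_1$ established at the start of Lemma~\ref{caseD.4}'s proof, collapse to
\[
\theta_{21}=\theta_{22}=\theta_{31}=\theta_{32}=\theta_1\ne 0\qquad\text{and}\qquad \theta_4=0.
\]

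Next I would write $a_2=a_{21}\omega+a_{22}$ and $a_3=a_{31}\omega+a_{32}$ with $a_{ij}\in\gf_{2^m}$, and translate each $\theta_{ij}=\theta_1$ into a polynomial relation in $a_1,a_{21},a_{22},a_{31},a_{32}$ via the standard coordinate expressions $\theta_{21}=a_{21}a_{32}+a_{22}a_{31}$, $\theta_{22}=a_1+a_{21}a_{31}+a_{21}a_{32}+a_{22}a_{32}$, $\theta_{31}=a_{21}+a_1a_{31}$, $\theta_{32}=a_{21}+a_{22}+a_1a_{31}+a_1a_{32}$, and $\theta_4=a_1^2+a_{21}^2+a_{21}a_{22}+a_{22}^2$. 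The identity $\theta_{31}=\theta_{32}$ simplifies at once to $a_1a_{32}=a_{22}$; the branch $a_1=0$ forces $a_{22}=0$, then by $\theta_4=0$ also $a_{21}=0$, so $a_2=0$ and $\theta_{31}=0=\theta_1$, a contradiction. Hence $a_1\ne 0$ and $a_{32}=a_{22}/a_1$.

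The crux is a clean factorization: substituting $a_{32}=a_{22}/a_1$ into $\theta_{21}=\theta_{31}$ and clearing $a_1$ produces
\[
(a_{22}+a_1)(a_{21}+a_1a_{31})=0.
\]
Since the second factor is $\theta_{31}=\theta_1\ne 0$, we obtain $a_{22}=a_1$ and consequently $a_{32}=1$. A parallel reduction of $\theta_{22}=\theta_{31}$ under these values gives $a_{31}(a_{21}+a_1)=0$, and of $\theta_{31}=\theta_1$ (after rewriting $\theta_1=1+a_3\bar a_3$ via $\theta_4=0$) gives $(a_1+a_{31})(1+a_{31})=0$. The degenerate branches $a_{31}=0$ and $a_{31}=1$ each force $\theta_1=0$, so $a_{21}=a_{31}=a_1$. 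Putting everything together, $a_{21}=a_{31}$ and $1+a_1+a_{22}+a_{32}=1+a_1+a_1+1=0$, so
\[
1+a_1+a_2+a_3=(a_{21}+a_{31})\omega+(1+a_1+a_{22}+a_{32})=0.
\]

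The main obstacle is the case-analysis bookkeeping: without the right order of substitutions the system is a sprawling quartic in five variables. The argument stays tractable because (a) $\theta_{31}=\theta_{32}$ immediately eliminates $a_{32}$, making the next equation factor; (b) every degenerate branch is killed by the single non-degeneracy condition $\theta_1\ne 0$, invoked via $\theta_4=0$ to rewrite $\theta_1$ as $1+a_3\bar a_3$ whenever needed.
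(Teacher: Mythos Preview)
Your proof is correct and takes a somewhat different route from the paper's. The paper computes the norm-like product $(1+a_1+a_2+a_3)(1+a_1+\bar a_2+\bar a_3)=(a_1+1)(a_{21}+a_{31})$ under the accumulated constraints and then argues by contradiction that this product vanishes, branching on the relations $\theta_{32}=\theta_{31}$, $\theta_{21}=\theta_{31}$, $\theta_1=\theta_{21}$, $\theta_{21}=\theta_{22}$, $\theta_4=0$ to reach contradictions with $\varphi_1=\theta_1+\theta_{21}=0$ in each branch. You instead solve the constraint system head-on to pin down $a_{21}=a_{22}=a_{31}=a_1$ and $a_{32}=1$, then read off $1+a_1+a_2+a_3=0$ in coordinates. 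Your approach is more transparent about \emph{why} the identity holds (it exhibits the unique one-parameter family), while the paper's norm reduction avoids carrying five unknowns simultaneously.

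One expositional point: the factorization $(a_1+a_{31})(1+a_{31})=0$ you attribute to $\theta_{31}=\theta_1$ is only valid \emph{after} substituting $a_{21}=a_1$; under $a_{22}=a_1$, $a_{32}=1$ alone, $\theta_{31}=\theta_1$ reads $a_{21}+a_1a_{31}=a_{31}(a_{31}+1)$, which does not factor that way. Since you immediately note that $\theta_1=a_{31}(a_{31}+1)\ne 0$ kills $a_{31}\in\{0,1\}$, whence $a_{31}(a_{21}+a_1)=0$ forces $a_{21}=a_1$, the logic is sound---just reorder the two sentences so the substitution precedes the factorization.
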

\begin{proof}
	  It is clear that when $m$ is odd, we can let $k=1$ and $\omega$ satisfy $\omega^2+\omega+1=0$. Then we have $\theta_{32}=\theta_{31}k=\theta_{31}$.
	We assume that $a_2=a_{21}\omega + a_{22}$ and $a_3 = a_{31}\omega + a_{32}$, where $a_{21}, a_{22}, a_{31}, a_{32} \in\gf_{2^m}.$
	Moreover, by computing, we have 
	$$(1+a_1+a_2+a_3)(1+a_1+\bar{a}_2+\bar{a}_3) = (a_1+1)(a_{21}+a_{31}). $$
	Next, we assume $a_{21}\neq a_{31}$ and $a_1\neq1$. From  $\theta_{32}=\theta_{31}$, we get $a_{22}=a_1a_{32}$ and plugging it into  $\theta_{21}=\theta_{31}$, we have $(a_{32}+1)(a_1a_{31}+a_{21})=0$. If $a_{32}=1$, then from $\theta_1=\theta_{21}$, $\theta_{21}=\theta_{22}$, $\theta_{4}=0$, we obtain $(a_{21}+a_{31})(a_1+a_{21}+a_{31}+1)=0$, $a_{31}(a_1+a_{21})=0$ and $a_{21}(a_1+a_{21})=0$, respectively. Thus $a_1+a_{21}+a_{31}+1=0$ and $a_{21}=a_1$ (otherwise $a_{21} = a_{31} =0$), then $a_{31}=1$, leading to $\theta_{21}=0$, which is a contradiction. If $a_1a_{31}+a_{21}=0$, together with  $a_{22} = a_1a_{32}$, we have $a_2=a_1a_3$ and $\theta_1=(a_1^2+1)(1+a_{31}^2+a_{31}a_{32}+a_{32}^2+1)$. However, by $\theta_1=\theta_{21}$, we get   $1+a_{31}^2+a_{31}a_{32}+a_{32}^2+1=0$ and thus $\theta_1=0$, which is a contradiction. All in all, we have  $(a_1+1)(a_{21}+a_{31})=0$ and $1+a_1+a_2+a_3=0$.
\end{proof}

In conclusion, $f(x)$ is an APN function over $\gf_{2^{n}}$ if and only if 
\begin{enumerate}
	\item $m$ is even, $\theta_1\neq0, \varphi_1\varphi_2\neq0,$ $\tr_{m}\left(\frac{\theta_2\bar{\theta}_2}{\theta_1^2}\right)=0$ and $(\theta_1^2\theta_3+ \theta_1 \bar{\theta}_2^2 + \theta_2^2\theta_3+\bar{\theta}_2^2\bar{\theta}_3)(\theta_1^2\theta_4 + \theta_1\theta_2\bar{\theta}_2 + \theta_2^2\theta_3 + \bar{\theta}_2^2\bar{\theta}_3)=0$;
	\item $m$ is odd, $\theta_1\neq0, \varphi_1\varphi_2\neq0,$ $\tr_{m}\left(\frac{\theta_2\bar{\theta}_2}{\theta_1^2}\right)=0$ and $\theta_1^2\theta_4 + \theta_1\theta_2\bar{\theta}_2 + \theta_2^2\theta_3 + \bar{\theta}_2^2\bar{\theta}_3=0.$
\end{enumerate}
In addition, the condition $\varphi_1\varphi_2\neq0$ can be deleted. The reason is as follows.  
{If $\varphi_1=\theta_1+\theta_2+\bar{\theta}_2=0$, then $$\tr_{m}\left(\frac{\theta_2\bar{\theta}_2}{\theta_1^2}\right)=\tr_{m}\left( \frac{\theta_2\bar{\theta}_2}{\theta_2^2+\bar{\theta}_2^2} \right) = 1.$$ If $\varphi_2 = 0$, by \eqref{v2=0}, we have $\theta_1=\frac{\theta_{21}k+\theta_{21}+\theta_{22}}{k}.$ Let $u=\theta_{21}k+\theta_{21}+\theta_{22}$. Then $u\neq0$ since $\theta_1\neq0$ and  $$\tr_{m}\left(\frac{\theta_2\bar{\theta}_2}{\theta_1^2}\right) = \tr_{m}\left( \frac{\theta_{21}^2k^3 + \theta_{21}\theta_{22}k^2 + \theta_{22}^2k^2}{\left(\theta_{21}k+\theta_{21}+\theta_{22}\right)^2} \right)  = \tr_m\left( k^2 + \frac{\theta_{21}k^2}{u} + \frac{\theta_{21}^2k^4}{u^2}  \right) = 1.$$}

 Therefore, the proof of Theorem \ref{theorem} has finally been completed.

\section{Conclusion}

This paper presents a necessary and sufficient condition for a family of quadrinomial to be APN, which provides a complete answer to the first part of an open problem by Carlet \cite{carlet2014open}.
The main theorem is proved based on the Hasse-Weil bound, which is valid for all positive integers $m>5$. 
Experiment result indicates that the main theorem also holds for $m=4, 5$. 
Exhaustive search of APN quadrinomials of the discussed form gives two CCZ equivalent classes for $m=3$,
which are the Gold function $x^3$ and the Kim function, and gives two CCZ equivalent classes for $m=4$, which are the Gold functions $x^3$ and $x^9$. {For $m=5$ and $6$,  the numbers of APN functions $f(x)$ are $16336$ and $257858$, respectively.}
It indicates that the classification of APN quadrinomials $f(x)$ is beyond the computational capacity of a personal computer  when $m\ge 5$.  It is interesting to classify the APN quadrinomials for $m\geq 5$ in this paper and readers are cordially
invited to work on the (theoretical) classification of it under the CCZ equivalence.

\bibliographystyle{plain}

\bibliography{ref}

\section{Appendix}

\subsection{Proof of Lemma \ref{caseD.2}}
		After simplifying directly, we can have 
\begin{eqnarray*}
	T 
	&=& \tr_{m}\left( \frac{ (\theta_{21}^2+\theta_{31}^2)k + (\theta_{22}+\theta_{31}+\theta_{32}+\theta_{4})(\theta_{21}+\theta_{22}+\theta_{32}+\theta_{4})   }{(\theta_1+\theta_{21})^2}  \right).
\end{eqnarray*}
In addition,	plugging $\theta_1 = \frac{\theta_{21}+\theta_{22}}{k} + \theta_{21} $ into the expression of $T$, $S_1, S_2$ and $S_3$ and simplifying, we can obtain 
\begin{eqnarray*}
	T&=& \tr_{m}\left( \frac{(\theta_{21}^2+\theta_{31}^2)k^3 + (\theta_{22}+\theta_{31}+\theta_{32}+\theta_{4})(\theta_{21}+\theta_{22}+\theta_{32}+\theta_{4})k^2 }{(\theta_{21}+\theta_{22})^2}  \right),
\end{eqnarray*}
\begin{equation}
\label{L_1_th1}
\left((\theta_{21} + \theta_{31} ) k^2 + (\theta_{21}  + \theta_{22}  +  \theta_{31}) k + \theta_{32} \right) S_4 =0,
\end{equation}
\begin{equation}
\label{L_2_th1}
\left( (\theta_{21}^2  + \theta_{21} \theta_{31}) k + \theta_{21} \theta_{31} + \theta_{22} \theta_{31} \right) S_4 =0
\end{equation}
and 
\begin{equation}
\label{L_3_th1}
\left( 	\theta_{21}^2 + \theta_{31}^2 \right) k^2 + \left(  \theta_{21} \theta_{22}  +  \theta_{21} \theta_4  +\theta_{22}^2  + \theta_{31} \theta_{32} + \theta_{32}^2  + \theta_4^2 \right) k 
+ \theta_{21} \theta_4 + \theta_{22} \theta_4  =0,
\end{equation}
respectively, where
\begin{eqnarray}
S_4 &=& (\theta_{21}^3 + \theta_{21}^2 \theta_{31}) k^3 + (\theta_{21}^3+ \theta_{21}^2 \theta_{31}   + \theta_{21}^2 \theta_{32}  + \theta_{21} \theta_{22}^2 + \theta_{21}^2 \theta_4 + \theta_{22}^2 \theta_{31}  )  k^2 + \notag\\
&&	 (\theta_{21}^2 \theta_{22} +\theta_{22}^3)  k   + \theta_{21}^2 \theta_4     + \theta_{22}^2 \theta_4. 	\label{L_4}
\end{eqnarray} 
Next, from (\ref{L_1_th1}) and (\ref{L_2_th1}), there are two subcases here:  (i) $S_4=0$; (ii) $(\theta_{21} + \theta_{31} ) k^2 + (\theta_{21}  + \theta_{22}  +  \theta_{31}) k + \theta_{32} = 0$ and $ (\theta_{21}^2  + \theta_{21} \theta_{31}) k + \theta_{21} \theta_{31} + \theta_{22} \theta_{31} =0.$

(i) In this subcase, $S_4 = 0$.

If $\theta_{21}=0$, then $S_4 = \theta_{22}^2(\theta_{31}k^2 + \theta_{22} k + \theta_{4}) = 0$ and thus $   \theta_{4} = \theta_{31}k^2 + \theta_{22} k$ since $\theta_{22} = \theta_{22} +\theta_{21} =k(\theta_1+\theta_{21}) \neq0 $.  Moreover,  plugging $\theta_{21}=0$ and  $   \theta_{4} = \theta_{31}k^2 + \theta_{22} k$ into the expression of $T$ and (\ref{L_3_th1}) and simplifying, we get 
$$T = \tr_{m}\left( \frac{ \theta_{31}^2 k^6 + (\theta_{22}^2+\theta_{31}^2) k^4 + (\theta_{22}\theta_{31}+\theta_{31}^2)k^3 + (\theta_{22}^2+\theta_{22}\theta_{31} + \theta_{31}\theta_{32} + \theta_{32}^2)k^2 }{\theta_{22}^2} \right)$$
and
$$ k(\theta_{31}k^2 + (\theta_{22}+\theta_{31})k+\theta_{32})(\theta_{31}k^2 + (\theta_{22}+\theta_{31})k+\theta_{31}+\theta_{32}) =0 $$
respectively. Then $\theta_{31}k^2 + (\theta_{22}+\theta_{31})k + \theta_{32}=0$ or $\theta_{31}k^2 + (\theta_{22}+\theta_{31})k+\theta_{31}+\theta_{32} =0. $ Plugging them into the expression of $T$, we can both obtain  
$$T=\tr_{m}\left(\frac{\theta_{31}^2k^4 + \theta_{22}^2k^2 + \theta_{22}\theta_{31}k^2}{\theta_{22}^2}\right) = \tr_{m}\left(k^2 + \frac{\theta_{31}^2k^4}{\theta_{22}^2} + \frac{\theta_{31}k^2}{\theta_{22}} \right)=\tr_{m}(k^2)=1.$$ 

If $\theta_{21}\neq0$, then from $S_4 = 0$, we obtain $\theta_{32}=\frac{\Delta}{k^2\theta_{21}^2}$, where $\Delta
= (\theta_{21}^3 + \theta_{21}^2 \theta_{31}) k^3 + (\theta_{21}^3+ \theta_{21}^2 \theta_{31} + \theta_{21} \theta_{22}^2 + \theta_{21}^2 \theta_4 + \theta_{22}^2 \theta_{31}  )  k^2 +(\theta_{21}^2 \theta_{22} +\theta_{22}^3)  k   + \theta_{21}^2 \theta_4     + \theta_{22}^2 \theta_4$. Plugging it into the expression of $T$ and $S_3=0$ and simplifying, we get 
$$ T = \tr_{m}\left(\frac{S_5}{\theta_{21}^4(\theta_{21}+\theta_{22})^2k^4}\right),$$
and 
$$( (\theta_{21}+\theta_{31})k^2+\theta_{22}k+\theta_{4})S_6 =0,$$
where 
\begin{eqnarray*}
	S_5 &=& k^8 \theta_{21}^6 + k^8 \theta_{21}^4 \theta_{31}^2 + k^6 \theta_{21}^5 \theta_{22} + k^6 \theta_{21}^5 \theta_{31} + k^6 \theta_{21}^4 \theta_{22} \theta_{31} +k^6 \theta_{21}^2 \theta_{22}^4 + k^6 \theta_{21}^2 \theta_{22}^2 \theta_{31}^2 + \\
	&& k^6 \theta_{22}^4 \theta_{31}^2 + k^5 \theta_{21}^5 \theta_{22} + k^5 \theta_{21}^4 \theta_{22} \theta_{31} + k^5 \theta_{21}^3 \theta_{22}^3 + k^5 \theta_{21}^2 \theta_{22}^3 \theta_{31} + k^4 \theta_{21}^5 \theta_{4} + \\
	&& k^4 \theta_{21}^4 \theta_{22}^2 + k^4 \theta_{21}^4 \theta_{31} \theta_{4} + k^4 \theta_{21}^3 \theta_{22}^2 \theta_{4} + k^4 \theta_{21}^2 \theta_{22}^2 \theta_{31} \theta_{4} +
	k^4 \theta_{22}^6 + k^2 \theta_{21}^4 \theta_{4}^2 + k^2 \theta_{22}^4 \theta_{4}^2
\end{eqnarray*}
and
\begin{eqnarray*}
	S_6 &=& k^4 \theta_{21}^5 + k^4 \theta_{21}^4 \theta_{31} + k^3 \theta_{21}^5 + k^3 \theta_{21}^4 \theta_{22} + k^2 \theta_{21}^5 + k^2 \theta_{21}^2 \theta_{22}^2 \theta_{31} + k^2 \theta_{21} \theta_{22}^4 + \\
	&&k^2 \theta_{22}^4 \theta_{31} + k \theta_{21}^4 \theta_{22} + k \theta_{22}^5 +
	\theta_{21}^4 \theta_{4} + \theta_{22}^4 \theta_{4}.
\end{eqnarray*}

Therefore,  $\theta_{4} =(\theta_{21}+\theta_{31})k^2+\theta_{22}k$ or $S_6=0$. If the former holds, plugging it into the expression of $T$, we have 
\begin{eqnarray*}
	T &=& \tr_{m}\left( \frac{(\theta_{21}^2 + \theta_{31}^2)k^4 + (\theta_{21}+\theta_{22})(\theta_{22}+\theta_{31}) k^2}{(\theta_{21}+\theta_{22})^2} \right) \\
	&=& \tr_{m}\left( k^4 + \frac{(\theta_{22}^2 + \theta_{31}^2)k^4}{(\theta_{21}+\theta_{22})^2} + \frac{(\theta_{22} + \theta_{31})k^2}{\theta_{21}+\theta_{22}} \right) \\
	&=&\tr_{m}(k^4)=1.
\end{eqnarray*}
If the latter holds, i.e., $S_6=0$, plugging $S_6=0$ into the expression of $T$ to vanish $\theta_{4}$, we obtain 
\begin{eqnarray*}
	T &=& \tr_{m}\left( \frac{(\theta_{21}+\theta_{22})^2(\theta_{21}^2+\theta_{21}\theta_{22}+\theta_{22}\theta_{31})^2k^4+\theta_{22}(\theta_{21}+\theta_{22})^4(\theta_{21}+\theta_{22}+\theta_{31})k^2}{(\theta_{21}+\theta_{22})^6} \right. \\ &&\left.+\frac{\theta_{21}^4(\theta_{21}^2+\theta_{31}^2)k^6+\theta_{21}^2(\theta_{21}+\theta_{31})(\theta_{21}+\theta_{22})^3k^3}{(\theta_{21}+\theta_{22})^6} \right) \\
	&=& \tr_{m}\left(k^4\right) + \tr_{m}\left( \frac{\theta_{22}^2(\theta_{21}+\theta_{22})^2(\theta_{21}+\theta_{22}+\theta_{31})^2k^4 +\theta_{22}(\theta_{21}+\theta_{22})^4(\theta_{21}+\theta_{22}+\theta_{31})k^2}{(\theta_{21}+\theta_{22})^6}  \right) \\
	&&+ \tr_{m}\left(\frac{\theta_{21}^4(\theta_{21}+\theta_{31})^2k^6+\theta_{21}^2(\theta_{21}+\theta_{31})(\theta_{21}+\theta_{22})^3k^3}{(\theta_{21}+\theta_{22})^6}  \right)\\
	&=& \tr_{m}(k^4) =1.
\end{eqnarray*}

(ii) In this subcase,  $(\theta_{21} + \theta_{31} ) k^2 + (\theta_{21}  + \theta_{22}  +  \theta_{31}) k + \theta_{32} = 0$ and $ (\theta_{21}^2  + \theta_{21} \theta_{31}) k + \theta_{21} \theta_{31} + \theta_{22} \theta_{31} =0.$ Plugging $ \theta_{32} = (\theta_{21} + \theta_{31} ) k^2 + (\theta_{21}  + \theta_{22}  +  \theta_{31}) k $ into the expression of $T$ and (\ref{L_3_th1}) and simplifying, we obtain 
$$T = \tr_{m}\left( \frac{ (\theta_{21}^2+\theta_{31}^2) k^6 + \theta_{22}^2 k^4 + (\theta_{21}\theta_{22}+\theta_{22}\theta_{31})k^3 + (\theta_{22}+\theta_{31}+\theta_{4})(\theta_{21}+\theta_{22}+\theta_{4})k^2  }{(\theta_{21}+\theta_{22})^2} \right)$$
and
\begin{equation}
((\theta_{21}+\theta_{31})k^2 + \theta_{22} k + \theta_{4}) ((\theta_{21}+\theta_{31})k^3+\theta_{22}k^2 + (\theta_{4}+\theta_{21})k + \theta_{21}+\theta_{22}) =0,
\end{equation}
respectively. Then $(\theta_{21}+\theta_{31})k^2 + \theta_{22} k + \theta_{4}=0$ or  $(\theta_{21}+\theta_{31})k^3+\theta_{22}k^2 + (\theta_{4}+\theta_{21})k + \theta_{21}+\theta_{22} =0$. If the former holds, plugging it into the expression of $T$, we have 
$$T=\tr_{m}\left(\frac{(\theta_{21}^2+\theta_{31}^2)k^4+(\theta_{21}+\theta_{22})(\theta_{22}+\theta_{31})k^2}{(\theta_{21}+\theta_{22})^2}\right) = \tr_{m}(k^4)=1.$$
If the latter holds, then $\theta_{4} = ((\theta_{21}+\theta_{31})k^3+\theta_{22}k^2 + \theta_{21}k + \theta_{21}+\theta_{22})/k$ and plugging it into the expression of $T$, we get 
\begin{eqnarray*}
	T &=& \tr_{m}\left( \frac{(\theta_{21}^2+\theta_{31}^2)k^4+(\theta_{21}\theta_{22}+\theta_{22}^2+\theta_{22}\theta_{31})k^2 + (\theta_{21}+\theta_{22})(\theta_{21}+\theta_{31})k + \theta_{21}^2+\theta_{22}^2}{(\theta_{21}+\theta_{22})^2} \right) \\
	&=& \tr_{m}\left( \frac{(\theta_{21}^2+\theta_{31}^2)k^4+(\theta_{21}+\theta_{22})(\theta_{22}+\theta_{31})k^2}{(\theta_{21}+\theta_{22})^2} +1 \right) +\tr_{m}\left( \frac{\theta_{21}\theta_{31}k^2 + (\theta_{21}+\theta_{22})(\theta_{21}+\theta_{31})k}{(\theta_{21}+\theta_{22})^2} \right) \\
	&=& 1+\tr_{m}(1) +  \tr_{m}\left( \frac{(\theta_{21}\theta_{31}+\theta_{21}^2+\theta_{31}^2)k^2}{(\theta_{21}+\theta_{22})^2} \right).
\end{eqnarray*}
In addition, since $ (\theta_{21}^2  + \theta_{21} \theta_{31}) k + \theta_{21} \theta_{31} + \theta_{22} \theta_{31} =0$,  we have 
\begin{eqnarray*}
	& & \tr_{m}\left( \frac{(\theta_{21}\theta_{31}+\theta_{21}^2+\theta_{31}^2)k^2}{(\theta_{21}+\theta_{22})^2} \right) \\
	&=& \tr_{m}\left( \frac{(\theta_{21}\theta_{31}+\theta_{21}^2+\theta_{31}^2)\theta_{31}^2(\theta_{21}+\theta_{22})^2}{(\theta_{21}+\theta_{22})^2\theta_{21}^2(\theta_{21}+\theta_{31})^2} \right) \\
	&=& \tr_{m}\left( \frac{\theta_{21}\theta_{31}^3 + \theta_{31}^2(\theta_{21}+\theta_{31})^2}{\theta_{21}^2(\theta_{21}+\theta_{31})^2} \right)\\
	&=& \tr_{m}\left( \frac{\theta_{21}\theta_{31}^3+\theta_{21}\theta_{31}(\theta_{21}+\theta_{31})^2}{\theta_{21}^2(\theta_{21}+\theta_{31})^2} \right) \\
	&=& \tr_{m}\left( \frac{\theta_{21}\theta_{31}}{(\theta_{21}+\theta_{31})^2}\right) =0
\end{eqnarray*}
Thus in the case,
$$ T = 1+\tr_{m}(1) +0 = 1+ \tr_{m}(1)$$
and then $T=0$ if and only if $m$ is odd.

All in all, $T=0$ if and only if $m$ is odd, $(\theta_{21}+\theta_{31})k^3+\theta_{22}k^2 + (\theta_{4}+\theta_{21})k + \theta_{21}+\theta_{22} =0$,  $(\theta_{21} + \theta_{31} ) k^2 + (\theta_{21}  + \theta_{22}  +  \theta_{31}) k + \theta_{32} = 0$ and $ (\theta_{21}^2  + \theta_{21} \theta_{31}) k + \theta_{21} \theta_{31} + \theta_{22} \theta_{31} =0.$ 

\subsection{Proof of Lemma \ref{Lemma-3.6}}

	According to (\ref{subcase2.3_eq1}) and (\ref{subcase2.3_eq2}), we divide the proof into two cases: (i) $\theta_2^2\varphi_3+\bar{\theta}_2^2\bar{\varphi}_3=0$; (ii) $\theta_2+\bar{\theta}_2+\theta_3+\bar{\theta}_3 =0 $ and $\theta_2\theta_3 + \bar{\theta}_2+\bar{\theta}_2\theta_3 = 0$. 

Now we also let $\theta_2=\theta_{21}\omega+\theta_{22}$ and $\theta_3 = \theta_{31}\omega + \theta_{32}$. Then $$\varphi_2=\omega\theta_2+(\omega+1)\bar{\theta}_2=\theta_{21}+\theta_{22},$$
$\varphi_3=\varphi_{31}\omega+\varphi_{32}$ and $\varphi_4=\varphi_{41}\omega+\varphi_{42}$, where $\varphi_{31},\varphi_{32},\varphi_{41},\varphi_{42}$ are defined as in (\ref{p34}). Moreover, after simplifying, (\ref{theta_equation}), i.e., $\theta_2\bar{\theta}_2 + \theta_3\bar{\theta}_3 + \theta_1\theta_{4} + \theta_{4}^2 = 0$ is equivalent to
\begin{equation}
\label{subcases2.3_eq3}
(\theta_{21}+\theta_{31})^2k+\theta_{21}\theta_{22} + \theta_{22}^2 + \theta_{31} \theta_{32} + \theta_{32}^2 + \theta_{4} (\theta_{21}+\theta_{4}) =0. 
\end{equation}
In addition, 
\begin{eqnarray*}
	T \triangleq \tr_{m}\left(\frac{\varphi_4\bar{\varphi}_4}{\varphi_2^2}\right) 
	=\tr_{m}\left( \frac{\varphi_{41}^2k+\varphi_{41}\varphi_{42}+\varphi_{42}^2  }{\left(\theta_{21}+\theta_{22}\right)^2}  \right).
\end{eqnarray*}	

(i) In the case, $\theta_2^2\varphi_3+\bar{\theta}_2^2\bar{\varphi}_3=0$. Plugging $\theta_2=\theta_{21}\omega+\theta_{22}$ into the above equation, we get 
$$\theta_{21}^2\varphi_{31} k + \theta_{21}^2 \varphi_{31} + \theta_{21}^2\varphi_{32} + \theta_{22}^2\varphi_{31} =0, $$
i.e.,
\begin{equation}
\label{subcases2.3_eq4}
\theta_{21}^2(\theta_{21}+\theta_{31})k + \theta_{21}^2\theta_{22} + \theta_{21}^2 \theta_{31} + \theta_{21}^2\theta_{32} + \theta_{21}^2 \theta_{4} + \theta_{21}\theta_{22}^2 + \theta_{22}^2\theta_{31} =0. 
\end{equation}
From (\ref{subcases2.3_eq4}), if $\theta_{21}=0$, then $\theta_{22}^2\theta_{31}=0$ and $\theta_{31}=0$ since $\theta_{22} = \varphi_2+\theta_{21}\neq0$. Plugging $\theta_{21}=\theta_{31}=0$ into (\ref{subcases2.3_eq3}), we obtain $\theta_{4}=\theta_{22}+\theta_{32}$. Furthermore, $T$ can be reduced to 
\begin{eqnarray*}
	T=\tr_{m}\left( \frac{k\theta_{22}^2 + \theta_{22}\theta_{32} + \theta_{32}^2 }{\theta_{22}^2}\right) =\tr_{m}\left(k\right) =1.
\end{eqnarray*}
If $\theta_{21}\neq0$, from (\ref{subcases2.3_eq4}), we get $\theta_{4}=\frac{\Delta}{\theta_{21}^2}$, where $$\Delta=\theta_{21}^2(\theta_{21}+\theta_{31})k + \theta_{21}^2\theta_{22} + \theta_{21}^2 \theta_{31} + \theta_{21}^2\theta_{32} + \theta_{21}\theta_{22}^2 + \theta_{22}^2\theta_{31}.$$ 
Plugging $\theta_{4}=\frac{\Delta}{\theta_{21}^2}$ into (\ref{subcases2.3_eq3}) and the expression of $T$, we obtain 
\begin{equation}
\label{subcases2.3_eq5} 
(\theta_{21}+\theta_{31})\left(\theta_{21}^4 \left(\theta_{21}+\theta_{31}\right)k^2 + \theta_{21}^4\theta_{31}k + \theta_{21}^4\theta_{31} + \theta_{21}^4\theta_{32} + \theta_{21}^3\theta_{22}^2 + \theta_{21}\theta_{22}^4 + \theta_{22}^4\theta_{31} \right) =0
\end{equation}
and 
$$T = \tr_{m} \left( \frac{S_7}{\theta_{21}^4(\theta_{21}+\theta_{22})} \right)^2,$$
where  
\begin{eqnarray*}
	S_7 &=& k^4 \theta_{21}^6 + k^4 \theta_{21}^4 \theta_{31}^2 + k^2 \theta_{21}^6 + k^2 \theta_{21}^5 \theta_{22} + k^2 \theta_{21}^5 \theta_{31} + k^2 \theta_{21}^4 \theta_{22}^2 + k^2 \theta_{21}^4 \theta_{22} \theta_{31} + \\
	&& k^2 \theta_{21}^4 \theta_{31}^2 + k^2 \theta_{21}^2 \theta_{22}^4 +
	k^2 \theta_{21}^2 \theta_{22}^2 \theta_{31}^2 + k^2 \theta_{22}^4 \theta_{31}^2 + k \theta_{21}^5 \theta_{22} + k \theta_{21}^5 \theta_{31} + \\
	&& k \theta_{21}^4 \theta_{31}^2 + k \theta_{21}^3 \theta_{22}^3 + k \theta_{21}^2 \theta_{22}^3 \theta_{31} + k \theta_{21}^2 \theta_{22}^2 \theta_{31}^2 +\theta_{21}^4 \theta_{22} \theta_{32} + \\
	&&\theta_{21}^4 \theta_{31} \theta_{32} + \theta_{21}^4 \theta_{32}^2 + \theta_{21}^3 \theta_{22}^2 \theta_{32} +
	\theta_{21}^2 \theta_{22}^2 \theta_{31} \theta_{32}.
\end{eqnarray*}

From (\ref{subcases2.3_eq5}), we have $\theta_{21}+\theta_{31}=0$ or $\theta_{21}^4 \left(\theta_{21}+\theta_{31}\right)k^2 + \theta_{21}^4\theta_{31}k + \theta_{21}^4\theta_{31} + \theta_{21}^4\theta_{32} + \theta_{21}^3\theta_{22}^2 + \theta_{21}\theta_{22}^4 + \theta_{22}^4\theta_{31}=0$. If $\theta_{21}=\theta_{31}$, then 
\begin{eqnarray*}
	T &=& \tr_{m}\left( \frac{\theta_{21}^4\left( \theta_{21}^2k^2 + (\theta_{21}\theta_{22}+\theta_{22}^2)k+\theta_{21}\theta_{32}+\theta_{22}\theta_{32} + \theta_{32}^2   \right)}{\theta_{21}^4(\theta_{21}+\theta_{22})^2} \right) \\
	&=& \tr_{m}\left( k^2 + \frac{\theta_{22}^2k^2+\theta_{32}^2}{(\theta_{21}+\theta_{22})^2} + \frac{\theta_{22}k+\theta_{32}}{\theta_{21}+\theta_{22}} \right)\\
	&=&\tr_{m}(k) =1.
\end{eqnarray*}

If $\theta_{21}^4 \left(\theta_{21}+\theta_{31}\right)k^2 + \theta_{21}^4\theta_{31}k + \theta_{21}^4\theta_{31} + \theta_{21}^4\theta_{32} + \theta_{21}^3\theta_{22}^2 + \theta_{21}\theta_{22}^4 + \theta_{22}^4\theta_{31}=0$, then we can obtain an expression of $\theta_{32}$ and plugging it into $S_7$, we get 
\begin{eqnarray*}
	T&=& \tr_{m}\left( \frac{\left( \theta_{21}^8 + \theta_{21}^6\theta_{22}^2 + \theta_{21}^6\theta_{31}^2 + \theta_{21}^4\theta_{22}^2\theta_{31}^2 \right)k^2}{\theta_{21}^8} + \frac{\left(\theta_{21}^7\theta_{22} + \theta_{21}^7\theta_{31} + \theta_{21}^6\theta_{22}\theta_{31} \right) k}{\theta_{21}^8} \right. \\
	&&\left. + \frac{   \theta_{21}^6\theta_{22}\theta_{31} + \theta_{21}^5\theta_{22}^3 + \theta_{21}^4\theta_{22}^3+\theta_{31} + \theta_{21}^4 \theta_{22}^2 \theta_{31}^2 + \theta_{21}^2 \theta_{22}^6 + \theta_{22}^6\theta_{31}^2}{\theta_{21}^8}  \right) \\
	&=& \tr_{m}\left( k^2\right) + \tr_{m}\left( \frac{(\theta_{21}\theta_{22}+\theta_{21}\theta_{31}+\theta_{22}\theta_{31})^2k^2}{\theta_{21}^4} + \frac{(\theta_{21}\theta_{22}+\theta_{21}\theta_{31}+\theta_{22}\theta_{31})k}{\theta_{21}^2} \right) \\
	&& + \tr_{m}\left( \frac{\theta_{21}\theta_{22}^3+\theta_{22}^3\theta_{31}+ \theta_{21}^2 \theta_{22}\theta_{31} }{\theta_{21}^4} + \frac{(\theta_{21}\theta_{22}^3+\theta_{22}^3\theta_{31}+ \theta_{21}^2 \theta_{22}\theta_{31})^2 }{\theta_{21}^8} \right)\\
	&=& \tr_{m}(k^2) =1.
\end{eqnarray*} 
Thus when $\theta_2^2\varphi_3+\bar{\theta}_2^2\bar{\varphi}_3=0$, $\tr_{m}\left(\frac{\varphi_4\bar{\varphi}_4}{\varphi_2^2}\right)=1$. 

(ii) In this case, $\theta_2+\bar{\theta}_2+\theta_3+\bar{\theta}_3 =0,$ i.e., $\theta_{21}=\theta_{31}$, and $\theta_2\theta_3 + \bar{\theta}_2+\bar{\theta}_2\theta_3 = 0$. In addition, by computing, we know
$$ \theta_2\theta_3 + \bar{\theta}_2+\bar{\theta}_2\theta_3 = (\theta_{21}+\theta_{21}\theta_{31})\omega + \theta_{21}+\theta_{22}+\theta_{21}\theta_{32} =0 $$
and thus $\theta_{21}+\theta_{21}\theta_{31} = \theta_{21}^2+\theta_{21}=0$ and $\theta_{21}+\theta_{22}+\theta_{21}\theta_{32} =0$. If $\theta_{21}=0, $ then $\theta_{22} =0$, which means $\varphi_2=0$, which is a contradiction. Thus $\theta_{21}=1$ and $\theta_{22}+\theta_{32}=1$. Moreover, by (\ref{subcases2.3_eq3}), we get $\theta_{4}(\theta_{4}+1)=0$. If $\theta_{4}=0,$ then 
$$T = \tr_{m}\left(\frac{k^2+(\theta_{22}^2+\theta_{22})k}{(\theta_{22}+1)^2}\right) = \tr_{m}\left(\frac{k^2}{(\theta_{22}+1)^2}+\frac{k}{\theta_{22}+1}+k\right)=\tr_{m}(k)=1.$$
If $\theta_{4}=1$, then 
$$T = \tr_{m}\left( \frac{k\theta_{22}^2 + \theta_{22} + 1}{(\theta_{22}+1)^2} \right) = \tr_{m}\left( k + \frac{1}{(\theta_{22}+1)^2} + \frac{1}{\theta_{22}+1} \right) = \tr_{m}(k) = 1.$$

All in all, the proof has been finished.

\subsection{Proof of Lemma \ref{5.1_lemma2}}

	Firstly, from $ \theta_1^2\theta_3+ \theta_1 \bar{\theta}_2^2 + \theta_2^2\theta_3+\bar{\theta}_2^2\bar{\theta}_3 = 0$ and $\theta_1\neq0$, we have $$\theta_1\left( \theta_3+ \bar{\theta}_3 \right) = \theta_2^2 + \bar{\theta}_2^2, $$
and thus (i) $\theta_3=\bar{\theta}_3$ and $\theta_2=\bar{\theta}_2$; or (ii) $\theta_3\neq\bar{\theta}_3$ and $\theta_1=\frac{\theta_2^2 + \bar{\theta}_2^2}{ \theta_3+ \bar{\theta}_3}$. Next we divide the proof into two cases.

{\bfseries The proof of (1).}	(i) If $\theta_3=\bar{\theta}_3$ and $\theta_2=\bar{\theta}_2$, then from $ \theta_1^2\theta_3+ \theta_1 \bar{\theta}_2^2 + \theta_2^2\theta_3+\bar{\theta}_2^2\bar{\theta}_3 = 0$ and $\theta_1^2\theta_4 + \theta_1\theta_2\bar{\theta}_2 + \theta_2^2\theta_3 + \bar{\theta}_2^2\bar{\theta}_3 \neq 0$, we have $\theta_1\theta_3=\theta_2^2$ and $ \theta_3\neq \theta_4$, respectively. Moreover, together with $\theta_2\bar{\theta}_2+\theta_3\bar{\theta}_3=\theta_1\theta_4+\theta_4^2$ and $\theta_1\theta_3=\theta_2^2$, we have $ (\theta_1+\theta_3+\theta_4)(\theta_3 + \theta_4) =0  $ and thus $\theta_1=\theta_3+\theta_4$. Hence,
\begin{eqnarray*}
	& & \tr_{m}\left(\frac{\theta_2\bar{\theta}_2}{\theta_1^2}+\frac{\varphi_3\bar{\varphi}_3
	}{\varphi_1^2}\right) \\
	&=& \tr_{m}\left( \frac{\theta_2^2 + (\theta_2^2+\theta_3^2 + \theta_4^2)}{\theta_1^2}  \right) \\
	&=&\tr_{m}\left( \frac{\theta_1^2}{\theta_1^2} \right) = \tr_{m}(1).
\end{eqnarray*}

(ii) If $\theta_3\neq\bar{\theta}_3$ and $\theta_1=\frac{\theta_2^2 + \bar{\theta}_2^2}{ \theta_3+ \bar{\theta}_3}$, plugging the above equation into  $ \theta_1^2\theta_3+ \theta_1 \bar{\theta}_2^2 + \theta_2^2\theta_3+\bar{\theta}_2^2\bar{\theta}_3 = 0$ and simplifying, we obtain $$\left(\theta_2+\theta_3+\bar{\theta}_2+\bar{\theta}_3\right)^2\left( \theta_2^2\theta_3+\bar{\theta}_2^2\bar{\theta}_3 \right)=0.$$
If $\theta_2+\theta_3+\bar{\theta}_2+\bar{\theta}_3=0$, then $\theta_1 = \frac{\theta_2^2 + \bar{\theta}_2^2}{ \theta_3+ \bar{\theta}_3} = \theta_2+\bar{\theta}_2$, which is a contradiction with the condition $\theta_1+\theta_2+\bar{\theta}_2 \neq0$. Thus we have $\theta_2^2\theta_3=\bar{\theta}_2^2\bar{\theta}_3$. Moreover,  from $ \theta_1^2\theta_3+ \theta_1 \bar{\theta}_2^2 + \theta_2^2\theta_3+\bar{\theta}_2^2\bar{\theta}_3 = 0$ and $\theta_1^2\theta_4 + \theta_1\theta_2\bar{\theta}_2 + \theta_2^2\theta_3 + \bar{\theta}_2^2\bar{\theta}_3 \neq 0$, we have $\theta_1\theta_3=\bar{\theta}_2^2$ and $ \theta_3\bar{\theta}_3\neq\theta_4^2$, respectively. Next we use  $\bar{\theta}_2^2 = \theta_1\theta_3$ and  ${\theta}_2^2 = \theta_1\bar{\theta}_3$ to replace $\theta_2$ and $\bar{\theta}_2$. In addition, let $\theta_3+\bar{\theta}_3=\alpha$ and $\theta_3\bar{\theta}_3=\beta$. Then from $\theta_2^2\bar{\theta}_2^2+\theta_3^2\bar{\theta}_3^2=\theta_1^2\theta_4^2+\theta_4^4$, we have $ \theta_1^2\beta+\beta^2 = \theta_1^2\theta_4^2 + \theta_4^4 $, i.e., $\theta_1^2 = \theta_4^2+\beta$ since $\theta_4^2+\beta = \theta_4^2 +\theta_3\bar{\theta}_3 \neq0$. Moreover, we have
\begin{eqnarray*}
	& & \left( \bar{\theta}_2^2 + \theta_3^2 + \theta_4^2 \right) \left( \theta_2^2 + \bar{\theta}_3^2 + \theta_4^2 \right) \\
	&=& \left( \theta_1\theta_3+ \theta_3^2 + \theta_4^2   \right)\left(  \theta_1\bar{\theta}_3 + \bar{\theta}_3^2 + \theta_4^2 \right) \\
	&=& \theta_1^2\beta + \theta_1\theta_4^2\alpha + \beta^2 + \theta_1 \alpha \beta + \alpha^2\theta_4^2+\theta_4^4,
\end{eqnarray*}
and 
$$\theta_1^2+\theta_2^2+\bar{\theta}_2^2=\theta_1^2 + \theta_1(\theta_3+\bar{\theta}_3) = \theta_1^2 + \theta_1 \alpha.  $$
Besides, we have 
$$\tr_{m}\left(\frac{\theta_2\bar{\theta}_2}{\theta_1^2}\right) = \tr_{m}\left( \frac{\theta_2^2\bar{\theta}_2^2}{\theta_1^4} \right) = \tr_{m}\left(\frac{\theta_3\bar{\theta}_3}{\theta_1^2}\right)$$
and thus 
\begin{eqnarray*}
	T &\triangleq &\tr_{m}\left(\frac{\theta_2\bar{\theta}_2}{\theta_1^2}+\frac{\varphi_3\bar{\varphi}_3
	}{\varphi_1^2}\right) \\
	&=& \tr_{m}\left( \frac{\beta}{\theta_1^2} + \frac{ \theta_1^2 \beta + \theta_1\theta_4^2\alpha + \beta^2 + \theta_1\alpha\beta + \alpha^2\theta_4^2+\theta_4^4  }{\left( \theta_1^2 + \theta_1 \alpha \right)^2}               \right) \\
	&=& \tr_{m}\left( \frac{  \theta_4^4 + \theta_1\theta_4^2\alpha + \alpha^2\theta_4^2 + \alpha^2\beta + \beta^2 + \theta_1\alpha\beta   }{\left( \theta_1^2 + \theta_1 \alpha \right)^2}    \right). 
\end{eqnarray*}
Plugging $\theta_4^2 = \theta_1^2 + \beta$ into the above equation, we obtain 
\begin{eqnarray*}
	T &=& \tr_{m}\left(  \frac{\theta_1^4+\theta_1^3\alpha + \theta_1^2\alpha^2}{\left( \theta_1^2 + \theta_1 \alpha \right)^2} \right) \\
	&=& \tr_{m}\left( 1 + \frac{\theta_1\alpha}{\theta_1^2+\alpha^2} \right) =\tr_{m}(1).
\end{eqnarray*}

{\bfseries The proof of (2).} Clearly, when $m$ is odd and $\tr_{m}\left(\frac{\varphi_3\bar{\varphi}_3
}{\varphi_1^2}\right)=0$, $\tr_{m}\left(\frac{\theta_2\bar{\theta}_2}{\theta_1^2}\right)=1$ and by Proposition \ref{nu1=0}, there are two elements, e.g., $\bar{A}\in\gf_{2^{m}}$ such that $\nu_1=0$. Next, we will show that for such $\bar{A}$ and $\bar{a}^2=\frac{\bar{A}+\omega}{\bar{A}+\omega^2}$, $f(\bar{a})=0$. Clearly, now we can let $k=1$ and $\omega$ satisfy $\omega^2+\omega+1=0$. In the following, we also divide the proof into two cases.

(i) $\theta_3=\bar{\theta}_3$ and $\theta_2=\bar{\theta}_2$. In the case, we have $\theta_1=\theta_3+\theta_{4}$. Moreover, $\nu_1=0$ becomes $\theta_1A^2+\theta_1A+\theta_1+\theta_2=0$. Let $\frac{\theta_2+\theta_1}{\theta_1}=\xi^2+\xi$, where $\xi\in\gf_{2^{m}}$. Then $\bar{a}^2=\frac{\xi+\omega^2}{\xi+\omega}$ and simplifying $f(\bar{a})$, whose numerator equals to
\begin{eqnarray*}
	N(f(\bar{a})) &=& (1+a_1+a_2+a_3)\xi^3+ (\omega a_1+\omega^2 a_2 + \omega a_3 + \omega^2) \xi^2 \\
	&+& (\omega a_1 + \omega^2 a_2 + \omega^2 a_3 + \omega) \xi + \omega^2 a_1 + \omega a_2 + a_3 + 1. 
\end{eqnarray*} 
Plugging $\xi^2=\xi+\frac{\theta_2+\theta_1}{\theta_1}$ into $N(f(\bar{a}))$ and simplifying, we get 
$$N(f(\bar{a})) = N_1\xi+N_2,$$
where 
$$N_1 = (1+a_1+a_2+a_3)\theta_{22} + (a_3+1)\theta_1$$
and 
$$N_2 = (\omega^2a_1+\omega a_2 + \omega^2 a_3 + \omega)\theta_{22}+(\omega a_3+\omega^2)\theta_1.$$
We assume that $a_2=a_{21}\omega + a_{22}$ and $a_3 = a_{31}\omega + a_{32}$, where $a_{21}, a_{22}, a_{31}, a_{32} \in\gf_{2^m}.$ Then
\begin{equation}
\label{thetanew2}	
\left\{
\begin{array}{lr}
\theta_1 = 1+a_1^2+a_{22}^2+a_{21}a_{22}+a_{32}^2+a_{31}a_{32}+a_{21}^2+a_{31}^2  \\ 
\theta_2 =    (a_{21}a_{32}+a_{22}a_{31})\omega+  a_1+ a_{21}a_{32} + a_{22}a_{32} + a_{21}a_{31}\\
\theta_3 =  (a_{21}+a_1 a_{31})\omega+ a_{21}+a_{22}+a_1a_{31}+a_1a_{32} \\
\theta_4 = a_1^2+a_{22}^2+a_{21}a_{22}+a_{21}^2.
\end{array}
\right.
\end{equation}
From $\theta_3=\bar{\theta}_3$, we have $a_{21}=a_1a_{31}$ and then plugging it into $\theta_2=\bar{\theta}_2$, we get $a_{31}(a_1a_{32}+a_{22})=0$. If $a_{31}=0$, then by $ \theta_1=\theta_3+\theta_{4}$, we have $a_1a_{32}+a_{22}+a_{32}^2+1=0$. Moreover, $$N_1=(a_1+a_{22}+a_{32}+1)(a_1a_{32}+a_{22}+a_{32}^2+1)=0 $$ and 
$$N_2 = (a_1+\omega a_{22}+a_{32}+\omega)(a_1a_{32}+a_{22}+a_{32}^2+1)=0.$$
Thus $f(\bar{a})=0$. The subcase $ a_1a_{32}+a_{22} = 0$ is similar and we omit here.

(ii) In this case, we have $\theta_1\theta_{31}+\theta_{21}=0$, $\theta_{21}^2\theta_{32}=\theta_{22}^2\theta_{31}$, $\theta_1+\theta_{32}+\theta_{4}=0, \theta_1\theta_{31}+\theta_{21}^2=0$, $\theta_1\theta_{32}+\theta_{22}^2=0$, $\theta_1^2+\theta_{4}^2+\theta_{31}^2+\theta_{32}^2+\theta_{31}\theta_{32}=0$ and $\theta_{4}^2+\theta_{31}^2+\theta_{32}^2+\theta_{31}\theta_{32}\neq0$. Moreover, assume that $\bar{A}\in\gf_{2^{m}}$ satisfies $\varphi_1\bar{A}^2+\theta_1\bar{A} + \varphi_2=0$ and by computing and simplifying,  the numerator of $f(\bar{a})$ is 
$$N(f(\bar{a})) = N_1 \bar{A} + N_2,$$
where 
$$N_1 = (1+a_1+a_2+a_3)(\theta_1^2+\varphi_1\varphi_2)+(\omega a_1+\omega^2 a_2 + \omega a_3 + \omega^2)\theta_1\varphi_1+(\omega a_1+\omega^2 a_2 + \omega^2 a_3 + \omega)\varphi_1^2$$
and 
$$N_2 =(1+a_1+a_2+a_3)\theta_1\varphi_2+(\omega a_1+\omega^2 a_2 + \omega a_3 + \omega^2)\varphi_2\varphi_1+(\omega a_1+\omega^2 a_2 + \omega^2 a_3 + \omega)\varphi_1^2. $$
By $\theta_1= \theta_{32}+\theta_{4}$, we have $a_{21}=a_1a_{31}+a_1a_{32}+a_{22}+a_{31}^2+a_{31}a_{32}+a_{32}^2+1$ and plugging it into $\theta_1^2+\theta_{4}^2+\theta_{31}^2+\theta_{32}^2+\theta_{31}\theta_{32}=0$, we get $$(a_{22}+a_{1}a_{32})(a_{22}+a_{1}a_{32}+a_{31}^2+a_{31}a_{32}+a_{32}^2+1)=0.$$
If $a_{22}=a_{1}a_{32}$, then by $\theta_{21}^2\theta_{32}+\theta_{22}^2\theta_{31}=0,$ we obtain $$(a_{31}^2+a_{31}a_{32}+a_{32}^2+1)^3(a_1+a_{31})^2=0.$$
However, $\theta_{4}^2+\theta_{31}^2+\theta_{32}^2+\theta_{31}\theta_{32}\neq0$ tells that $a_{31}^2+a_{31}a_{32}+a_{32}^2+1\neq0$ and thus $a_1=a_{31}$. Then it is easy to check that  $N_1=N_2=0$ and thus $f(\bar{a})=0$. The subcase $a_{22}=a_{1}a_{32}+a_{31}^2+a_{31}a_{32}+a_{32}^2+1$ is similar and we omit it here.

\end{document}